\theoremstyle{plain}
\newtheorem{theorem}{Theorem}
\newtheorem{congestion}{Rule}
\newtheorem{definition}{Definition}
\newtheorem{assumption}{Assumption}
\begin{document}
\let\WriteBookmarks\relax
\def\floatpagepagefraction{1}
\def\textpagefraction{.001}
\shorttitle{Congestion-Aware Path Re-routing Strategy for Dense Urban Airspace}
\shortauthors{Sajid Ahamed et~al.}

\title[mode = title]{Congestion-Aware Path Re-routing Strategy for Dense Urban Airspace}

\author[1]{Sajid Ahamed Mohammed Abdul}[style = chinese, orcid=0000-0002-6261-9541]
\cormark[1]
\ead{sajidahamed@iisc.ac.in}

\credit{Conceptualization of this study, Methodology, Software, Formal Analysis, Writing-Original Draft}

\affiliation[1]{organization={Department of Aerospace Engineering, Indian Institute of Science}, 
                city={Bengaluru},
                postcode={560012}, 
                state={Karnataka},
                country={India}}

\author[2]{Prathyush P. Menon}[style = chinese, orcid= 0000-0003-3804-9291]
\ead{P.M.Prathyush@exeter.ac.uk}
\credit{Conceptualization, Supervision, Writing - Review \& Editing}

\affiliation[2]{organization={Cooperative Robotics \& Autonomous NEtworks Laboratory (CRANE)},
                addressline={University of Exeter}, 
                postcode={EX4 4QF}, 
                postcodesep={}, 
                city={Streatham},
                country={United Kingdom}}

\author[3]{Debasish Ghose}[style = chinese, orcid=0000-0001-5022-4123]
\ead{dghose@iisc.ac.in}
\credit{Conceptualization, Supervision, Writing - Review \& Editing}

\affiliation[3]{organization={Robert Bosch Center for Cyber-Physical Systems (RBCCPS) \& Department of Aerospace Engineering},
                addressline={Indian Institute of Science}, 
                city={Bengaluru},
                postcode={560012}, 
                state={Karnataka}, 
                country={India}}

\cortext[cor1]{Corresponding author}

\begin{abstract}
Existing UAS Traffic Management (UTM) frameworks designate preplanned flight paths to uncrewed aircraft systems (UAS), enabling the UAS to deliver payloads. However, with increasing delivery demand between the source-destination pairs in the urban airspace, UAS will likely experience considerable congestion on the nominal paths. We propose a rule-based congestion mitigation strategy that improves UAS safety and airspace utilization in congested traffic streams. The strategy relies on nominal path information from the UTM and positional information of other UAS in the vicinity. Following the strategy, UAS opts for alternative local paths in the unoccupied airspace surrounding the nominal path and avoids congested regions. The strategy results in UAS traffic exploring and spreading to alternative adjacent routes on encountering congestion. The paper presents queuing models to estimate the expected traffic spread for varying stochastic delivery demand at the source, thus helping to reserve the airspace around the nominal path beforehand to accommodate any foreseen congestion. Simulations are presented to validate the queuing results in the presence of static obstacles and intersecting UAS streams.
\end{abstract}

\begin{keywords}
Uncrewed aircraft system (UAS) \sep UAS Traffic Management (UTM)\sep Congestion-aware path planning\sep Discrete-time queuing theory 
\end{keywords}

\maketitle

\section{Introduction}
With the rapid growth in the number of uncrewed aircraft systems (UAS) being deployed for on-demand delivery applications, severe congestion is foreseen in the urban airspace. There is a need for structuring the airspace to mitigate this congestion. Delivery applications in urban airspaces would normally require UAS to travel on preplanned paths between source-destination pairs.  Geofencing these paths can physically separate a UAS trajectory from other UAS, as well as static obstacles. However, intersecting UAS trajectories are inevitable, especially in dense traffic scenarios. Increased traffic on these paths due to increasing delivery demand results in severe conflicts among the UAS. Further, the increasing number of conflicts in confined geofenced air volumes could be detrimental to UAS safety. Such limitations are majorly due to treating UAS traffic as a road-like network or air traffic, where UAS are restricted to the preplanned paths and air volumes. To overcome this problem, we propose in this paper an airspace design and a congestion mitigation strategy that utilizes the airspace surrounding the preplanned path and dynamically generates local paths for UAS. These paths are opted for when the preplanned paths are congested. We also suggest an approach for estimating the expected air volume that would be utilized as a function of the probabilistic occurrence of congestion. Such airspace can be adaptively reserved beforehand to accommodate any foreseen congestion.  

\textit{Background and Literature:} Several government and private agencies (\citet{kopardekar2016unmanned,catapult2020enabling,balakrishnan2018blueprint}) have developed different airspace designs to enable beyond-visual range UAS operations. These designs impose structural constraints on UAS's degree of freedom (DoF) to maximize airspace safety and capacity at the expense of introducing time delays in UAS flights. Broadly, the structures proposed suggest segmenting the airspace and restricting the UAS in an air matrix (discrete UAS DoF) (\citet{pang2020concept}); in a network of virtual air geofenced corridors (unidirectional UAS traffic flow in an enclosed air corridor volume) (\citet{tony2021lane}); or in geo-vectored altitude layers, where UAS heading vector range is limited in each altitude layer and the altitude layers are interconnected by vertical ascend or descend corridors) (\citet{hoekstra2018geovectoring}).  

UAS safety risk is a probabilistic belief on collision occurrence among UAS. In \citet{doole2021constrained}, the event where the separation between UAS is below a threshold is termed an intrusion. An anticipated intrusion in a prescribed look-ahead time window is termed a conflict. Whereas congestion is an accumulation of UAS in a predefined region that eventually leads to increased conflict risk. The airspace designs have an intrinsic ability to reduce the occurrence of collision due to the constraints that it imposes on the UAS traffic.

In air matrix design, the airspace is tessellated into a cellular grid, and online path-planning strategies such as Rapidly exploring Random Trees (RRT), skeleton maps (\citet{li2023bi, jana2023numerical}) were proposed that generate optimal and risk-aware trajectories for UAS. However, with the increasing number of UAS, these strategies lead to an unstructured and unpredictable traffic pattern. 
In altitude-layered airspace, the intrinsic safety can be improved if the UAS is capable of sensing and avoiding (SAA) the intrusions (\citet{agnel2021unmanned}) or can mitigate conflict risk by preflight or in-flight conflict detection and resolution (CDR) (\citet{sacharny2022lane, nagrare2024intersection}). 
Though the SAA and CDR are adequate at individual UAS levels, these approaches perform poorly in congested traffic scenarios. In the study conducted by \citet{watkins2021investigative}, beyond a critical demand, the SAA can no longer execute avoidance maneuvers without colliding with neighboring UAS. Whereas the in-flight CDR suffers from domino effects and spillback conflicts, which can severely destabilize the UAS traffic (\citet{sedov2017decentralized, aarts2023capacity}). 
A solution to this problem as proposed in this paper, is to mitigate congestion and prevent high UAS density regions from forming than to allow UAS to proceed and encounter conflicts in dense regions.  

The analytical study in \citet{sunil2018three} suggests segregating UAS into altitude layers as prima facie for mitigating congestion. 
In \citet{sedov2017decentralized}, when two or more UAS conflict, these UAS are collectively enclosed in a virtual disk and labeled as congested. Any UAS in its vicinity locally re-route, avoiding this disk until the conflicts in the disk are resolved. In \citet{lee2022congestion}, the airspace is tesselated into cells, and the cell is said to be congested if the number of pre-planned paths occupying the same cell is greater than the cell capacity. Similarly, in \citet{egorov2019encounter}, a region is congested if the expectation of UAS encountering conflict with two or more UAS is above some threshold. 
The papers  \citet{zhou2020resilient,gharibi2021density,wang2021air} present strategies for mitigating congestion in the air corridor networks. In \citet{zhou2020resilient}, the UAS traffic is modeled as fluid queues, and the congestion is mitigated by controlling the UAS inflow and outflow rate for each corridor in the network. In \citet{gharibi2021density}, 
the velocity of the trailing UAS is reduced proportional to the weighted sum of UAS inter-separation distances in the congested volume. In \citet{wang2021air}, 
the congestion is mitigated by solving a traffic assignment problem such that the number of UAS routed on each corridor results in an equitable UAS density distribution in all corridors. In \citet{egorov2019encounter, lee2022congestion,wang2021air}, the mitigation strategy is spatial and does not capture congestion in the temporal sense, forcing new incoming UAS to take long detours. Whereas in \citet{zhou2020resilient,gharibi2021density}, intermittent local congestion events affect the throughput of the entire corridor network. This limitation is because the UAS is always restricted within air corridors despite airspace being available in the surroundings. 

\textit{Our Contributions:} In this paper,  we present a distributed congestion mitigation strategy where UAS decisively opts for local alternative paths upon encountering congestion in its preplanned flight path and reverts when congestion reduces. Such path re-routing response of UAS ensures no further UAS inflow into congested regions and upper bounds the number of conflicts in the region. The proposed strategy can handle congestion contingencies due to system failure, rogue agents, intersecting UAS streams, and time-varying stochastic delivery demand. The main contributions of this paper are as follows. 

\begin{enumerate}
    \item A novel distributed congestion mitigation strategy is proposed, imposing congestion-aware and preference-based heading structural constraints on UAS degree of motion. 
    \item Demonstrating how the proposed strategy adapts to the increasing UAS demand and dynamically creates parallel air corridors, thus offering intrinsic safety for high UAS traffic demand. 
    \item Development of queuing-based models for estimating the expected spread of the UAS traffic stream when every UAS implements the proposed congestion mitigation strategy.  
\end{enumerate}

The paper is organized as follows. The workspace for the congestion mitigation problem is defined in Section \ref{sec: problem definition}. Section \ref{sec: congestion mitigation methodology} introduces congestion, the heading structural constraints being imposed, and the proposed congestion mitigation strategy employed by UAS. Section \ref{sec: congestion queuing model} presents queuing models to study the emerging traffic behavior as a function of the UAS traffic demand and structural constraints. Section \ref{sec: simulation results} presents simulations that demonstrate the increased intrinsic safety and resemblance of UAS traffic to the parallel air corridor network. Section \ref{sec: conclusion} concludes the paper, emphasizing the advantages of the proposed mitigation strategy. 

\section{Problem Definition} \label{sec: problem definition}

This paper presents traffic analysis for a UAS traffic management (UTM) framework driven by stochastic delivery demand. The UTM environment has static obstacles and several source-destination pairs. The delivery service provider deploys a UAS for every customer requesting delivery on a specific source-destination route, and such requests recur stochastically. Accordingly, at the source location, a time-varying geometrically distributed request for UAS is assumed with bounded request rate $\Lambda(t) \in [0,\Lambda_{\max}]$, where $\Lambda_{\max}$ is the maximum request rate. The time axis is divided into slots of $\Delta T = 1/\Lambda_{\max}$ time units and indexed with $k \in \mathbb{N}_0$. We assume that though multiple requests may be received in a given timeslot, only one request among them is randomly chosen, for which a UAS is deployed at the start of the next timeslot. The probability that a UAS would be deployed in a timeslot is given by $\lambda = \Lambda \Delta T$.   

The workspace is assumed to be a bounded horizontal stretch of air volume containing the given source-destination pair. It is tessellated into a connected grid of a finite number of identical cells. As the static obstacles are of varying heights, some may intersect the cells of this workspace. We define a geometric center called the cell center for each static obstacle-free cell (refer Fig. \ref{fig: 1a}). The motion of a UAS leaving a cell center and traveling in a straight line to reach a neighboring cell center is referred to as a transition. The traversal of UAS in the workspace is modeled as successive transitions from one cell center to any of its neighboring cell centers. We assume that the UAS takes $\Delta T$ time units to transition between neighboring cell centers. Two UAS that are $\tau$ cells spatial separated are equivalently $\tau$ slots ($\tau \Delta T$) temporal separated. The UTM prescribes a nominal path connecting cell centers between any source-destination pair. We assume this path would be approximated by connected straight-line segments passing through cell centers. Since the UAS deployment is geometrically distributed with parameter $\lambda$, the inter-separation distance between UAS transitioning on the nominal path (in timeslots) is also geometrically distributed and is given by the distribution $A(t)$. 
\begin{align}
    A(t) = \mathrm{P}[\tau = t] = \lambda (1-\lambda)^t
    \label{eqn: inter-separation}
\end{align}
where $\mathrm{P}[\tau = t]$ is the probability that the inter-separation distance $\tau$ between UAS is $t$ timeslots. If $\tau \leq 1$ between any two UAS, these two UAS are present in a cell and counted as an intrusion. Analogously, $\mathrm{P}[\tau \leq 1]$, the probability of an intrusion in the $\Delta T$ look-ahead time window is the conflict risk ($R_{conflict}$).  

Let $S \geq 1 $ timeslots be the threshold on UAS inter-separation distance for defining congestion. Consider an arbitrary region in which more than $M (\geq 2)$ UAS are present, with the inter-separation distances between them being less than $S$. 
We assume positive correlation of $\tau$ with $M$ and negative correlation with $S$, that is, $\mathrm{P}[\tau \leq 1 \vert \tau \leq S, M \geq 2] > \mathrm{P}[\tau \leq 1 \vert \tau \leq S',  M = 2],\ S < S'$. The relation means there is an increased risk of conflict $(R_{conflict} = \mathrm{P}[\tau \leq 1])$ when there are $M$ UAS that are $\leq S$ timeslots closer to each other. Avoiding such higher-risk regions is preferable. These regions are referred to as congested regions. Here, $S$ is a UTM design parameter.

The requests are recurring, hence there is a stream of UAS traffic flowing in the nominal path from source to destination. The traffic in the path, right ahead in front of a UAS, is its upstream traffic. The traffic trailing behind the UAS is its downstream traffic.
When $\lambda$ is large, $\mathrm{P}[\tau \leq S]$ is high; thus, the chances of a UAS encountering congestion on its upstream is high. When the probability of congestion increases, there is a requirement to generate alternative parallel paths (dependent on the request rate) that avoid congested regions. The nominal path is a reference path given to the UAS. However, the UAS is free to opt for any of those cell centers in its neighborhood so long as the choice does not violate the constraints set by the UTM. The main objective of this paper is to propose a distributed congestion mitigation strategy for the UAS. Using this strategy, individual UAS can dynamically generate local parallel paths for itself in the unoccupied airspace in a structured manner, following which the UAS bypasses congested regions in its local neighborhood. When the request rate $\lambda$ is low, the UAS traverses along the nominal flight path. However, in the event of congestion on the nominal path, a certain number of UAS present in the downstream are diverted onto the parallel paths. As the request rate increases, the frequency of the congestion event increases. The number of UAS that would be diverted on parallel paths also increases, causing congestion on the parallel paths as well. Then, UAS downstream would have to opt for parallel paths further outward from the nominal path. When congestion reduces, the UAS on parallel paths tends to close in on the nominal path. As the request rate is time-varying, a change in the area occupied by the UAS stream around the nominal path will be observed in the process of congestion mitigation.
 
 The above idea is illustrated in Fig. \ref{fig: 1b}. The UAS only relies on the nominal path information from UTM and the positional information of UAS present in its vicinity to determine congested regions and generate local parallel paths. For the UAS traffic routed between source-destination pairs, we demonstrate and analyze the extent of the UAS traffic spread around the nominal path as a function of the request rate. The paper extends congestion mitigation to scenarios where the inter-separation distances between $M$ UAS of intersecting traffic streams fall below the threshold $S$.
 
\begin{figure}[h!]
\centering
     \begin{subfigure}[b]{0.55\textwidth}
    \centering
    \includegraphics[width = \linewidth]{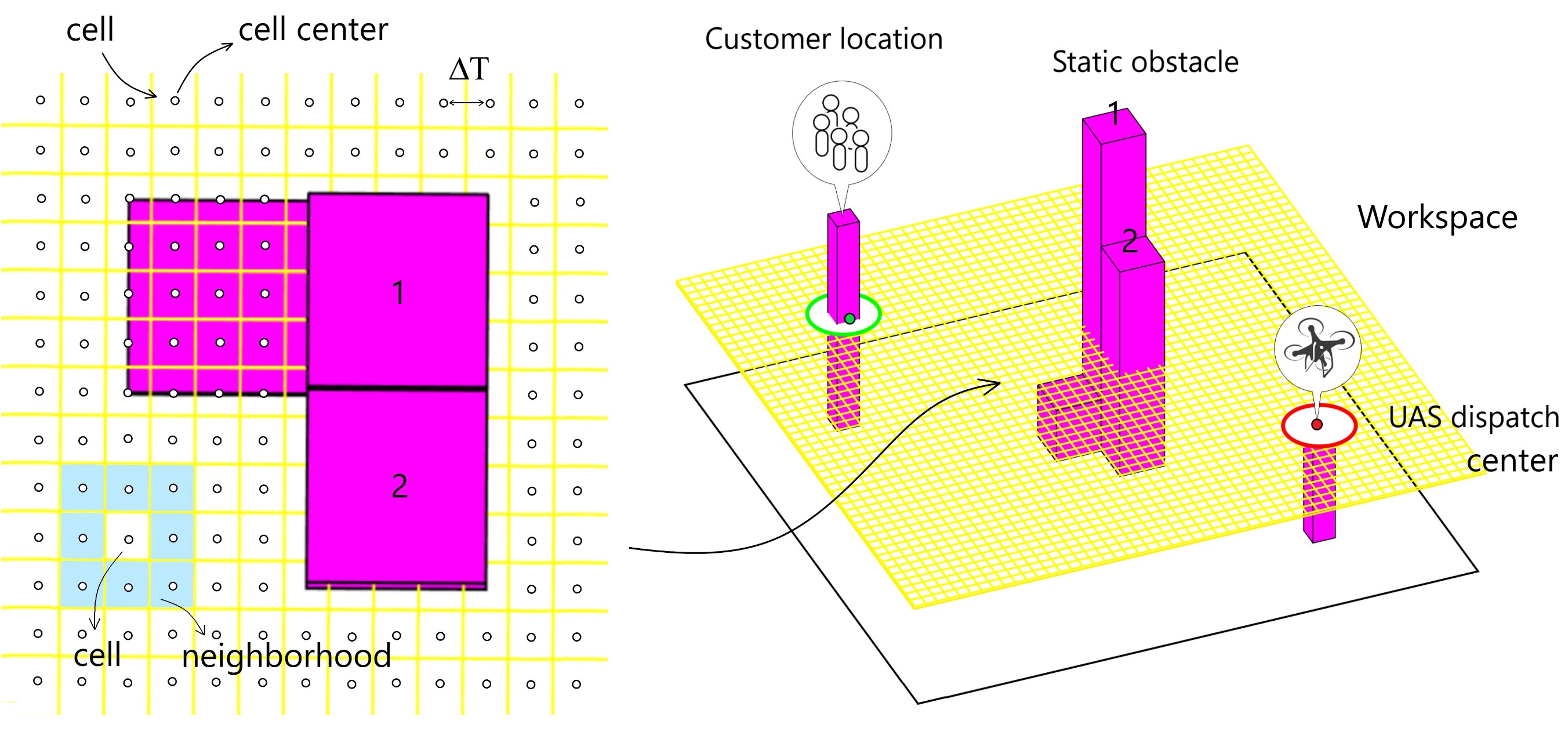}
    \caption{}
    \label{fig: 1a}
    \end{subfigure}
     \hfill
     \begin{subfigure}[b]{0.44\textwidth}
\centering
    \includegraphics[width = 0.65\linewidth]{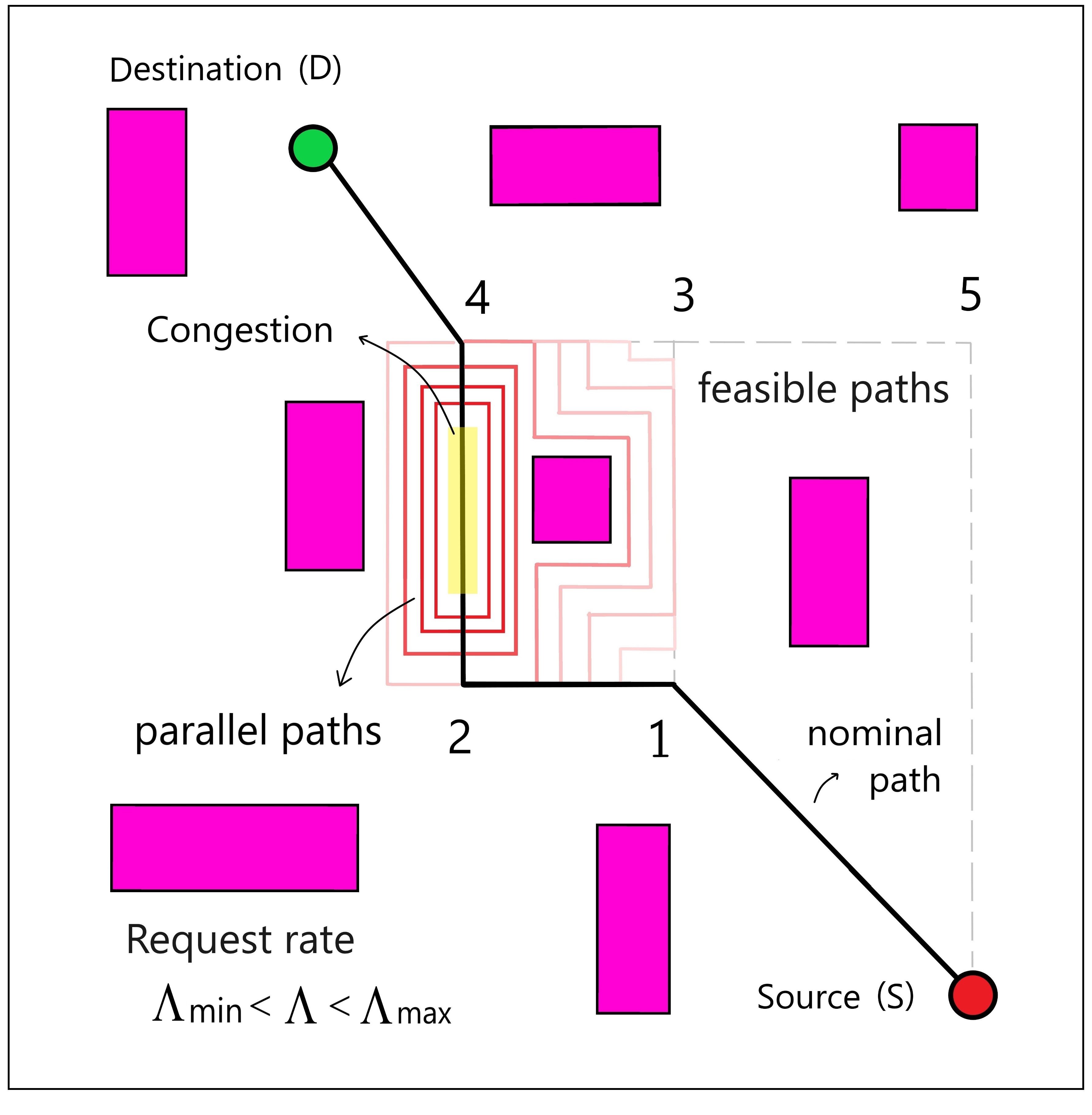}
    \caption{} 
    \label{fig: 1b}
     \end{subfigure}
     \caption{a) Cell tessellation of workspace containing the UAS dispatch center (source) and the customer location (destination). b) Several feasible paths may exist between source and destination ($\mathsf{S}$-$\mathsf{1}$-$\mathsf{2}$-$\mathsf{4}$-$\mathsf{D}$, $\mathsf{S}$-$\mathsf{1}$-$\mathsf{3}$-$\mathsf{4}$-$\mathsf{D}$, $\mathsf{S}$-$\mathsf{5}$-$\mathsf{4}$-$\mathsf{D}$). Here, the path shown by the bold line is the nominal path given by the UTM. In the event of congestion on this path, the figure illustrates the local parallel paths that are generated. The expected traffic spread or the expected path spread is the average number of parallel paths on which the UAS traffic would be diverted depending on the request rate $\Lambda$.}
\end{figure}
\FloatBarrier
\vspace{-0.5cm}
\section{Congestion Mitigation Methodology} \label{sec: congestion mitigation methodology}

The nominal path between the source-destination pair is segmented into a connected chain of straight-line segments. The set of cells through which the straight-line segment passes is called a nominal segment, with a start and an end cell defined for each segment. The proposed strategy deals with congestion in each nominal segment separately. A unit vector directed from start to end is called an \textit{upstream} vector, and that directed from end to start is called a \textit{downstream} vector. To identify a region as congested, we first need to geometrically define the region, for which we partition the workspace as follows. 

Assume the UAS are identical in all aspects and have a $L$ slot look-ahead time window to predict where other UAS would be in the next $L\Delta T$ time ahead. The cells in the nominal segment (start and end included) that are $S = 2L+1$ timeslots apart are called nodes (shown in Fig. \ref{fig: 2}a). With each node as a geometric center, the set of cells enclosed in a square region with edge length $S$ timeslots (an edge parallel to the segment) is called a zone. The eight square regions (of $S$ timeslots edge length) surrounding the zone are called neighborhood zones. We may successively define the neighborhood for all neighborhood zones. The zone and its corresponding node are denoted with $Z$ and $\widetilde{Z}$, respectively. We assume a finite grid of identical zones for each nominal segment to facilitate alternative path generation in the event of congestion. The grid is so positioned that the nominal segment bisects the grid (shown in Fig. \ref{fig: 2}). The length of the grid (that is, the length of the nominal segment) is significantly larger than the width. The start cell, end cell, and width of the grid for two connected nominal segments are chosen such that the respective grids do not intersect. The zones through which the nominal segment passes are called nominal zones. A stream-level tuple notation $Z = (X, Y)$ is used for referring to zones in the finite grid, where stream $X$ is abscissa, $X \in \{-X_e,...,0,..., X_e\}$ and the level $Y$ is ordinate, $Y \in \{1,..., Y_e\}$, and $X_e$ and $Y_e$ are positive integers (refer Fig. \ref{fig: 2}b top view). The nominal zones are called Stream$(0)$ zones denoted with tuple $(0, Y)$, where $Y \in \{1,..., Y_e\}$ is the zone level in the \textit{upstream} direction. Here, $(0,1)$, $(0, Y_e)$ are the nominal zones containing start and end cells, respectively. Similarly, Stream$(\pm 1)$, Stream$(\pm 2)$, and so on Stream$(\pm X_e)$ zones can be identified. By interconnecting nodes of the Stream$(X)$ zones with respective \textit{upstream} nodes, $X\in\{-X_e,...,X_e\}/ \{0\}$, we get $2X_e$ parallel segments to the nominal segment, referred to as Stream$[X]$ segments. The congested regions are defined as follows.
\begin{figure}[h!]
    \centering
    \includegraphics[width = \linewidth]{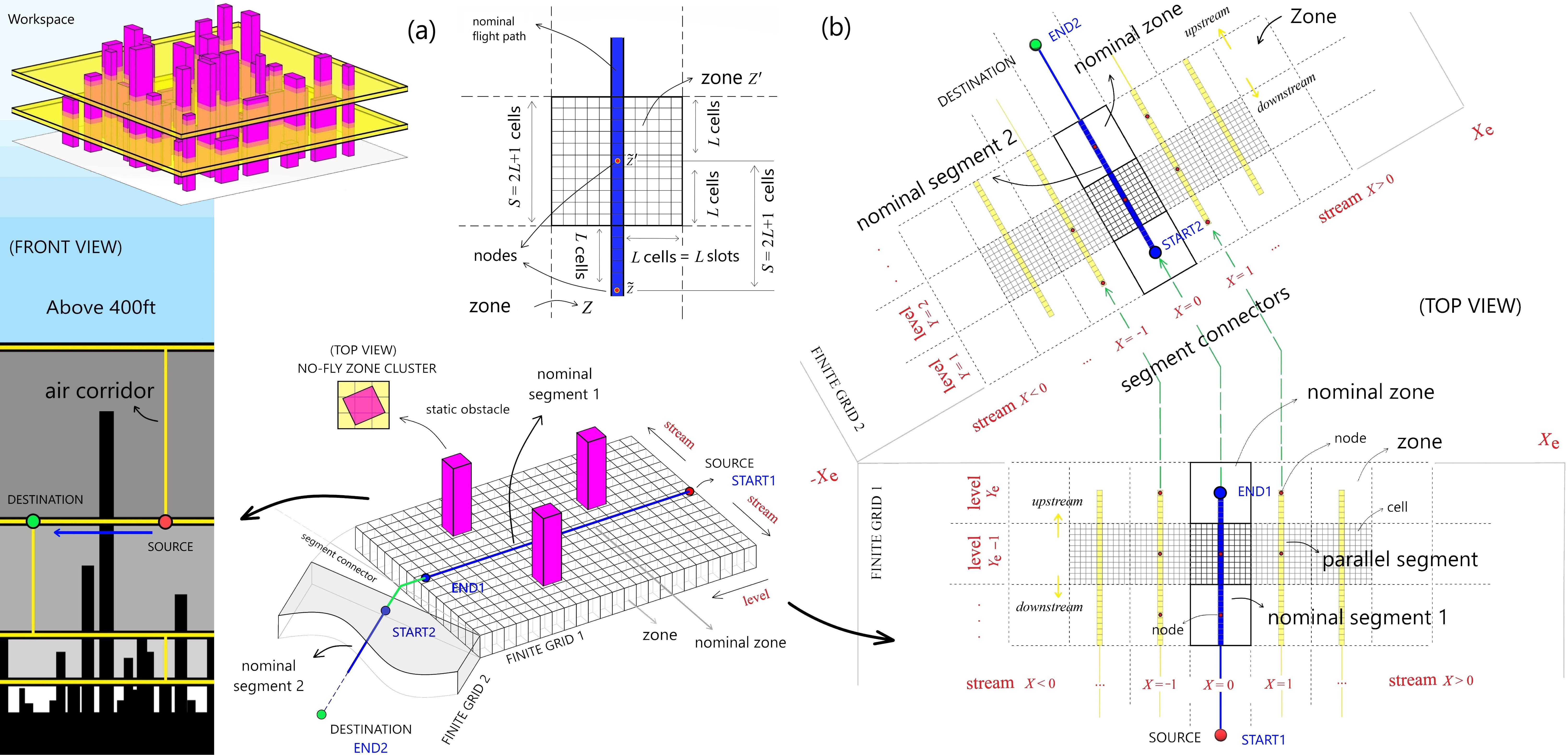}
    \caption{a) Node and zone definition. b) Connected grid and stream-level indexing of zones about each nominal segment.}
    \label{fig: 2}
\end{figure}

\begin{definition}
\textit{A zone is said to be congested if $M$ or more than $M$ UAS are present in the zone.} \label{def: 1}
\end{definition}

Here, $L$ and $M$ are UTM design parameters. Note that the congestion in the zone is due to UAS transitioning on the given source-destination route as well as due to any exogenous UAS coexisting in the workspace. The grid construction is UAS-agnostic. The UAS can extract the grid geometry from the nominal segment information and design parameters shared by the UTM. Within the zone, the heading of the UAS would be restricted so that UAS would only transition on a subset of cells. The heading constraint is position-dependent. For example, when UAS is in the nominal segment, it is restricted to transition in the \textit{upstream} direction. A UAS in a zone can detect or communicate with other UAS coexisting in the zone and respective zone neighborhood. By knowing the UAS present positions, the heading constraint imposed, and that any UAS would complete one transition in one timeslot, we assume the UAS using a linear kinematic model can predict other UAS state information (position and heading) $L\Delta T$ time ahead with sufficient accuracy.

The UAS on reaching a node, needs to take at least $L$ successive transitions to enter any of the neighboring zones. The UAS is equipped with $L$ slot look-ahead prediction window. The UAS can predict other UAS positions and determine which neighboring zones would probably become congested in the next $L$ slots. Thus, when the UAS is at the node, it could selectively choose not to transition towards neighboring zones that will become congested.   
\begin{definition} \textit{From the perspective of a UAS in zone $Z$ (depicted in Fig. \ref{fig: 3}), the neighboring zone $Z'$ is congested if the number of $L$ slot look-ahead UAS predicted positions in zone $Z'$ is greater than or equal to $M$.} \label{def: 2}
\end{definition}

 \begin{figure}[h!]
    \centering
    \includegraphics[width = 0.5\linewidth]{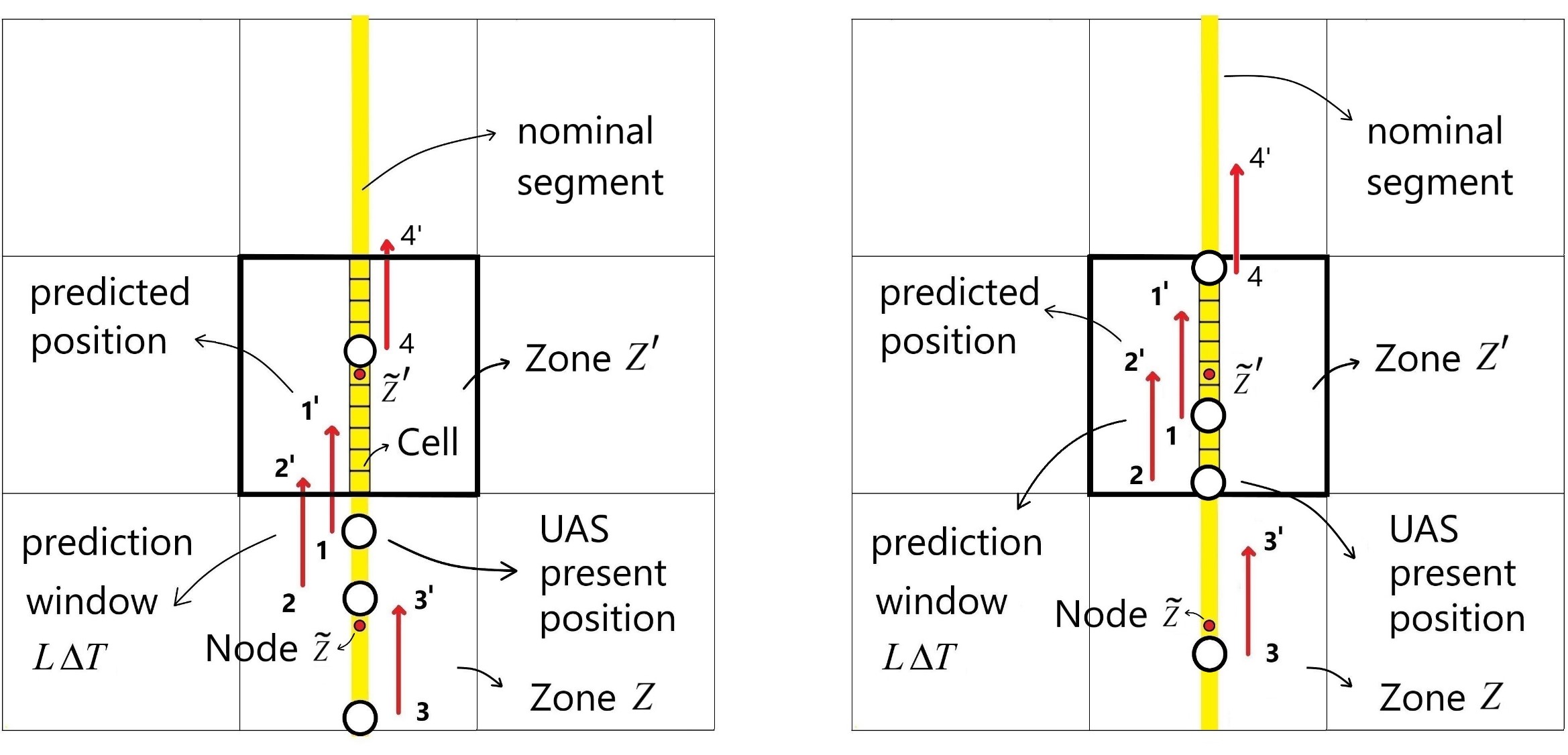}
    \caption{In the figure, with respect to the UAS $i \in \{1,2,3,4\}$, the set $\{\ j'\ \vert\ j \in \{1,2,3,4\} \backslash \{i\}\ \}$ are the $L$ slot look-ahead predicted positions of other UAS. When $M = 2$, from the perspective of UAS 2 present in zone $Z$, zone $Z'$ will not be congested in the next $L$ timeslots as only one predicted position (that of UAS 1) exists in $Z'$. From the perspective of UAS 3 present in zone $Z$, the zone $Z'$ will be congested as two predicted positions (UAS 1 and 2) lie in zone $Z'$. } 
    \label{fig: 3}
\end{figure}

 The congestion definition applies to all the zones in the grid. However, the decisions of UAS present in any zone $Z$ are only influenced by congestion occurrences in two of its neighboring \textit{upstream} zones. These are shown as blue-colored zones in Fig. \ref{fig: 4}. In this figure, the neighborhood zones of $Z= (X, Y)$ that are of significance for the congestion mitigation strategy are denoted as $Z_{\mathrm{UP}}, Z_{\mathrm{IN}}, Z_{\mathrm{ID}}$ and $Z_{\mathrm{ON}}$ (notation described in Table. \ref{tab:table1}). Note that $Z_{\mathrm{IN}}$ and $Z_{\mathrm{ID}}$ are defined only when $X\neq 0$. The zone notation can be extended to respective nodes as well. At node $\widetilde{Z}$, if the UAS decides to avoid zone $Z_\mathrm{UP}$, it generates a piecewise re-route path (a sequence of the number of transitions in a specific heading direction).  
\begin{figure}[h!]
    \centering
   \includegraphics[keepaspectratio, width = 0.6\linewidth]{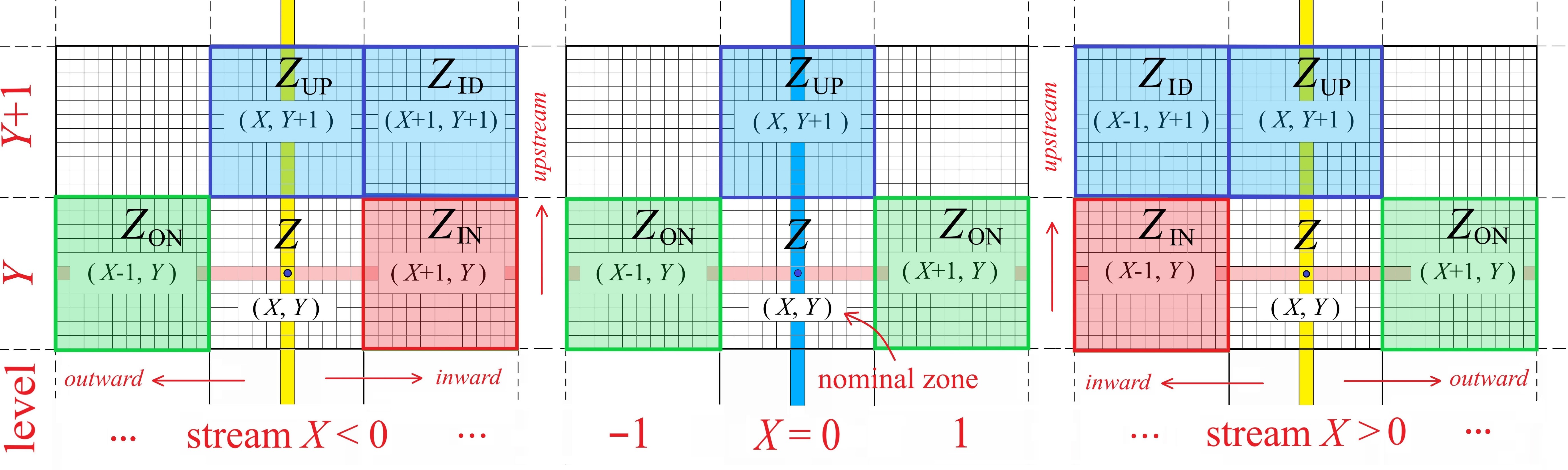}
    \caption{Zone neighborhood of three arbitrary zones $Z = (X,Y)$ in the grid, whose $X<0$, $X=0$,  and $X>0$, respectively. The congestion occurrences in respective \textit{upstream} zones $Z_{\mathrm{UP}}$ and $Z_{\mathrm{ID}}$ influence the decisions of UAS present in $Z$.}
    \label{fig: 4}
\end{figure}
 \begin{table}[h!]
 \caption{\label{tab:table1} Zone neighborhood and path notation for zone $Z$}
 \begin{tabular}{ll}
 \toprule[1pt]
$Z_{\mathrm{UP}}$ & the neighbor in the \textit{upstream} direction. \\\toprule[0.1pt]
$Z_{\mathrm{IN}}$ & the neighbor closer to nominal zone than zone $Z$ in the \textit{inward} direction. The \textit{inward} is a unit vector directed \\ & from $Z = (X,Y)$ towards the nominal zone $(0, Y)$.  \\\toprule[0.1pt]
$Z_{\mathrm{ID}}$ & the diagonal neighbor in the resultant \textit{inward}-\textit{upstream} direction.  \\\toprule[0.1pt]
$Z_{\mathrm{ON}}$ & the neighbor farther from nominal zone than zone $Z$ in the \textit{outward} direction. The \textit{outward} is a unit vector \\ 
& opposite to \textit{inward}, directed laterally away from the nominal zone $(0,Y)$. \\\toprule[0.1pt] 
$\alpha_Z$ & the path consisting $S$ cells that connects node $\widetilde{Z}$ with node $\widetilde{Z}_\mathrm{UP}$ in the \textit{upstream} direction.\\\toprule[0.1pt]
$\beta_Z$ & the path consisting $L$ cells that connects the boundary cell of ${Z}_\mathrm{IN}$ with node $\widetilde{Z}$ in the \textit{outward} direction.\\\toprule[0.1pt]
$\gamma_Z$ & the path consisting $L$ cells that connects node $\widetilde{Z}$ with the boundary cell of ${Z}_\mathrm{ON}$ in the \textit{outward} direction. \\\toprule[0.1pt]
$\widehat{\beta}_Z$& the set 
 of $L$ diagonal parallel paths that one-one connect cells in $\beta_Z$ path with cells of $\alpha_Z$ path in the \\
 &  \textit{outward-upstream} direction.\\\toprule[0.1pt]
$\widehat{\gamma}_Z$ & the set 
 of $L$ diagonal parallel paths that one-one connect cells in $\gamma_Z$ path with cells of $\alpha_Z$ path in the  \\
 & \textit{inward-upstream} direction. \\\toprule[0.1pt]
$\widehat{\delta}_Z$ & the diagonal path consisting $S$ cells that connects the node $\widetilde{Z}$ with node $\widetilde{Z}_\mathrm{ID}$ in the \textit{inward-upstream} direction.\\
\bottomrule[1pt]
 \end{tabular}
 \end{table}
 
The UTM provides a heading-reference directed graph to the UAS present in an arbitrary grid level, as shown in Fig. \ref{fig: 5a}. For the zone $Z = (X,Y),\ X \neq 0$, the graph comprises $\alpha_Z,\beta_Z,\gamma_Z,\widehat{\beta}_Z,\widehat{\gamma}_Z$ and $\widehat{\delta}_Z$ paths (notation described in Table. \ref{tab:table1}). The UAS motion is constrained by this heading-reference graph. For minimal time delays in the flight path, it is preferable that the UAS transition \textit{upstream} as close as possible to the nominal segment. When UAS is at node $\widetilde{Z}$ and $Z_{\mathrm{ID}}$ is uncongested, the $\widehat{\delta}_Z$ path is available for the UAS to shift segments and move closer to the nominal segment. When $Z_{\mathrm{UP}}$ is uncongested, the $\alpha_Z$ path is available to transition \textit{upstream}. In case both $Z_{\mathrm{UP}}$ and $Z_{\mathrm{ID}}$ are congested, the $\beta_Z\rightarrow\gamma_Z$ path   ($\beta_Z$ path $\rightarrow$ node $\widetilde{Z} \rightarrow\gamma_Z$ path, collectively) is the alternative path available for UAS to transition \textit{outward} and avoid the congested zones. If the $Z_{\mathrm{UP}}$ became uncongested while the UAS was in $\beta_Z$ or $\gamma_Z$ path, then $\widehat{\beta}_Z$ or $\widehat{\gamma}_Z$ paths are respectively available for UAS to revert towards $Z_{\mathrm{UP}}$.   

\begin{figure}[h!]
\centering
\begin{subfigure}[b]{0.55\textwidth}
     \centering
    \includegraphics[width = \linewidth]{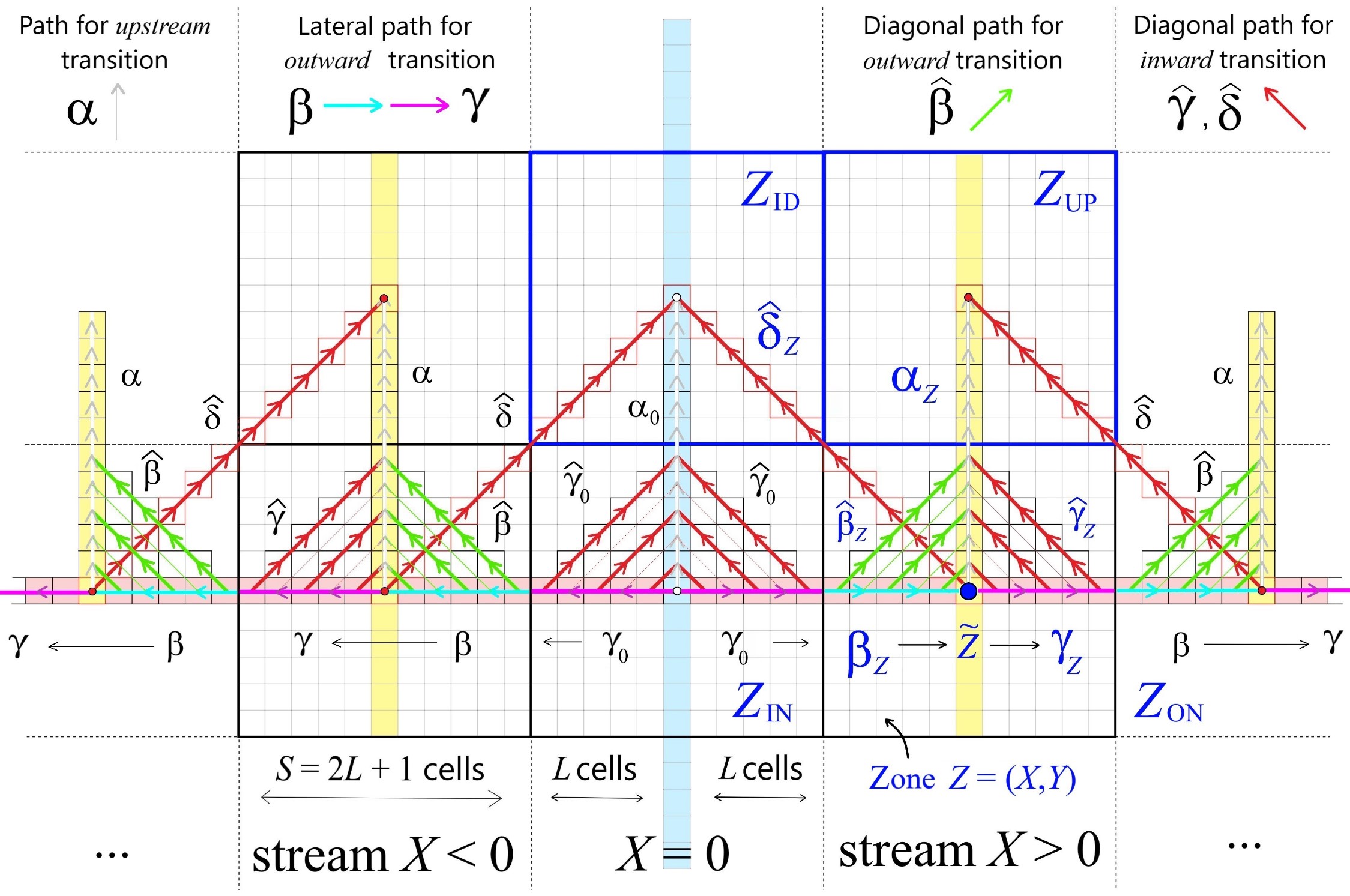}
    \caption{}
    \label{fig: 5a}
\end{subfigure}
     \hfill
\begin{subfigure}[b]{0.4\textwidth}
    \centering
    \raisebox{0.12\height}{\includegraphics[width = \linewidth]{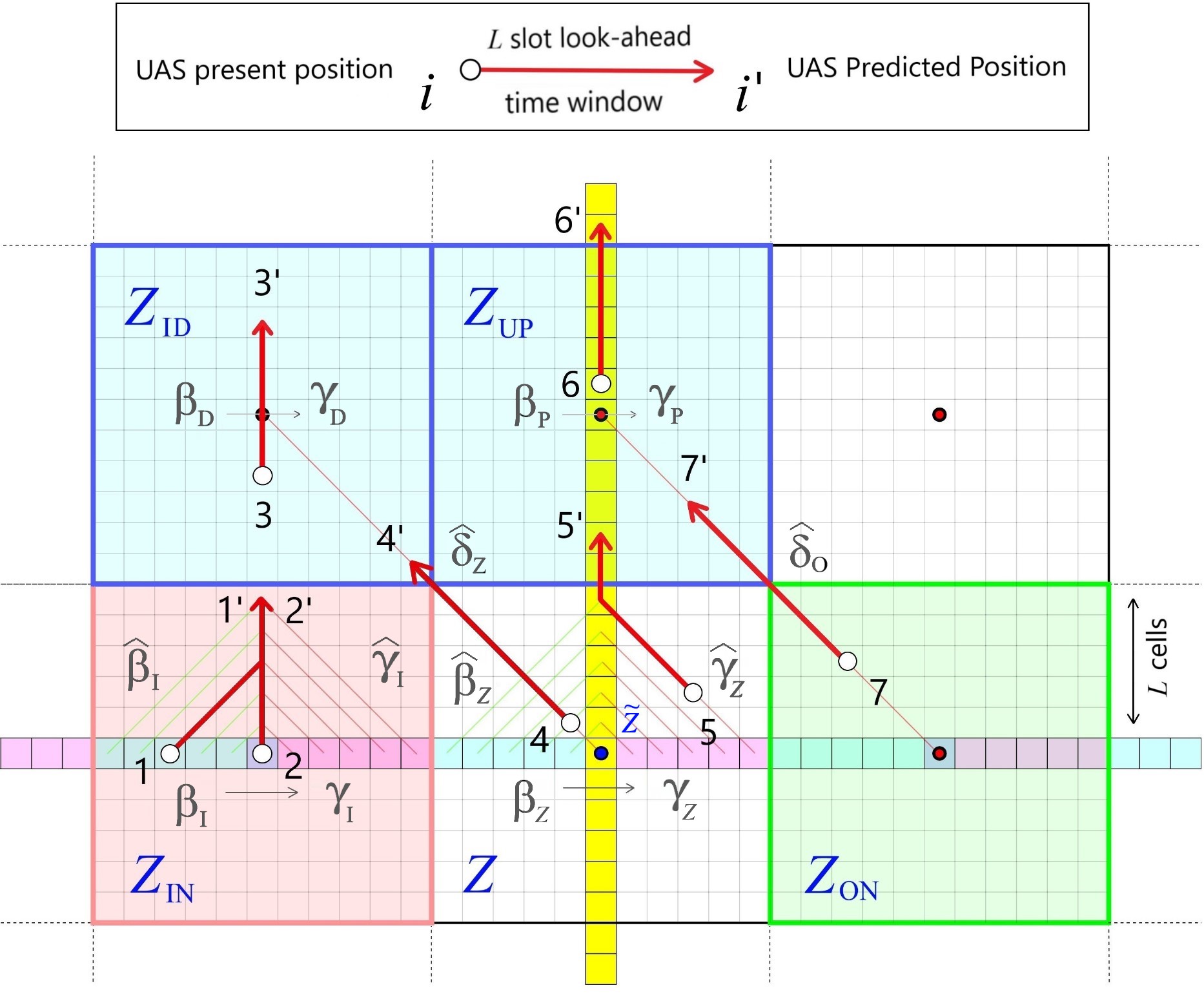}}
    \caption{}
    \label{fig: 5b}
\end{subfigure} 
\caption{a) Heading reference directed graph for UAS present in zone $Z$ in an arbitrary level $Y$. b) The $L = 5$ slot look-ahead predicted positions of UAS, which are restricted on the heading reference graph. If $M = 2$, then by Definition \ref{def: 2}, in the perspective of UAS present at node $\widetilde{Z}$, both zones $Z_{\mathrm{UP}}$ and $Z_{\mathrm{ID}}$ are congested (due to UAS 3, 4, 5, and 7). UAS present on $\beta\rightarrow\gamma$ path (UAS 1 and 2) do not contribute to congestion in $Z_{\mathrm{UP}}$ and $Z_{\mathrm{ID}}$.}
\end{figure}

As a special case, the graph for Stream$(0)$ zones comprises an $\alpha_0$ \textit{upstream} path, two $\gamma_0$ paths with opposite headings directed laterally \textit{outward} from $(0, Y)$ node, and $2L$ number of $\widehat{\gamma}_0$ \textit{inward} diagonal paths. For notation convenience, we use $\alpha_.,\beta_.,\gamma_.,\widehat{\beta}_.,\widehat{\gamma}_.,\widehat{\delta}_.$, with subscript $. =   \mathrm{I}, \mathrm{P}, \mathrm{O}, \mathrm{D}$ referring to the paths in the zones $Z_{\mathrm{IN}}, Z_{\mathrm{UP}}, Z_{\mathrm{ON}}, Z_{\mathrm{ID}} $, respectively. We ignore the subscript when referring to an arbitrary zone.  

The rule-based congestion mitigation strategy for the UAS foreseeing congestion in its $Z_{\mathrm{UP}}$ and $ Z_{\mathrm{ID}}$ zones is as follows. The following rules assume the UAS is transitioning in a Stream$(X)$ zone, $X = \{0,..., X_e\}$. The heading reference graph is symmetric about the nominal segment; hence similar rules when $X = \{-X_e,...,-1\}$ do hold.   

\begin{congestion}
    \textit{The UAS present in zone $Z = (X,Y)$ identify respective $Z_{\mathrm{UP}}$ and $Z_{\mathrm{ID}}$ zones as congested or uncongested following Definition \ref{def: 2}}.
    \label{rule: 1}
\end{congestion}
The UAS is restricted to transition on the heading-constrained graph. Under this structural constraint, among the UAS present in the zone neighborhood, the $L$ slot look-ahead predicted positions of only those UAS transitioning on $\alpha_Z,\widehat{\beta}_Z,\widehat{\gamma}_Z, \widehat{\delta}_\mathrm{O}$ paths lie in zone $Z_{\mathrm{UP}}$ and are responsible for congestion in $Z_{\mathrm{UP}}$. Similarly, predicted positions of UAS transitioning on $\widehat{\delta}_Z, \widehat{\beta}_\mathrm{I},\widehat{\gamma}_\mathrm{I}$ path lie in zone $Z_{\mathrm{ID}}$ (refer Fig. \ref{fig: 5b}).

\begin{congestion}
    In a given timeslot, if the UAS present in $\beta_Z \rightarrow \gamma_Z$ path finds zone $Z_{\mathrm{UP}}$ congested, then the UAS executes one lateral transition in the \textit{outward} direction.
    \label{rule: 2}
\end{congestion} 
In Fig. \ref{fig: 6a}, if $Z_{\mathrm{UP}}$ is congested in timeslot $k$, then in timeslot $k+1$ UAS in positions $1,6,9$ would be in positions $2,7,10,$ respectively. A zone congested at one timeslot may or may not be congested in the next. Only myopic position predictions ($1$ slot look-ahead) are possible for UAS transitioning on $\beta\rightarrow\gamma$, as it is uncertain whether these UAS would continue transitioning in $\beta\rightarrow\gamma$ direction or not. Hence, the predicted positions of UAS in ${\beta_.}\rightarrow{\gamma_.}$, where $. = Z,\mathrm{I},\mathrm{P},\mathrm{O},\mathrm{D}$, are excluded in Definition \ref{def: 2}. If a UAS that has entered the $\beta_Z$ path (UAS in position $1$ in Fig. \ref{fig: 6a}) finds the $Z_{\mathrm{UP}}$ zone congested for $L$ successive slots, it reaches node $\widetilde{Z}$. Similarly, if the UAS that has entered the $\gamma_Z$ path (UAS in position $7$ in Fig. \ref{fig: 6a}) finds the $Z_{\mathrm{UP}}$ zone congested for $L$ successive slots, the UAS ends in the $\beta_{\mathrm{O}}$ path. 
\begin{figure}[h!]
\centering
\begin{subfigure}[b]{0.49\textwidth}
    \centering
    \includegraphics[width = \linewidth]{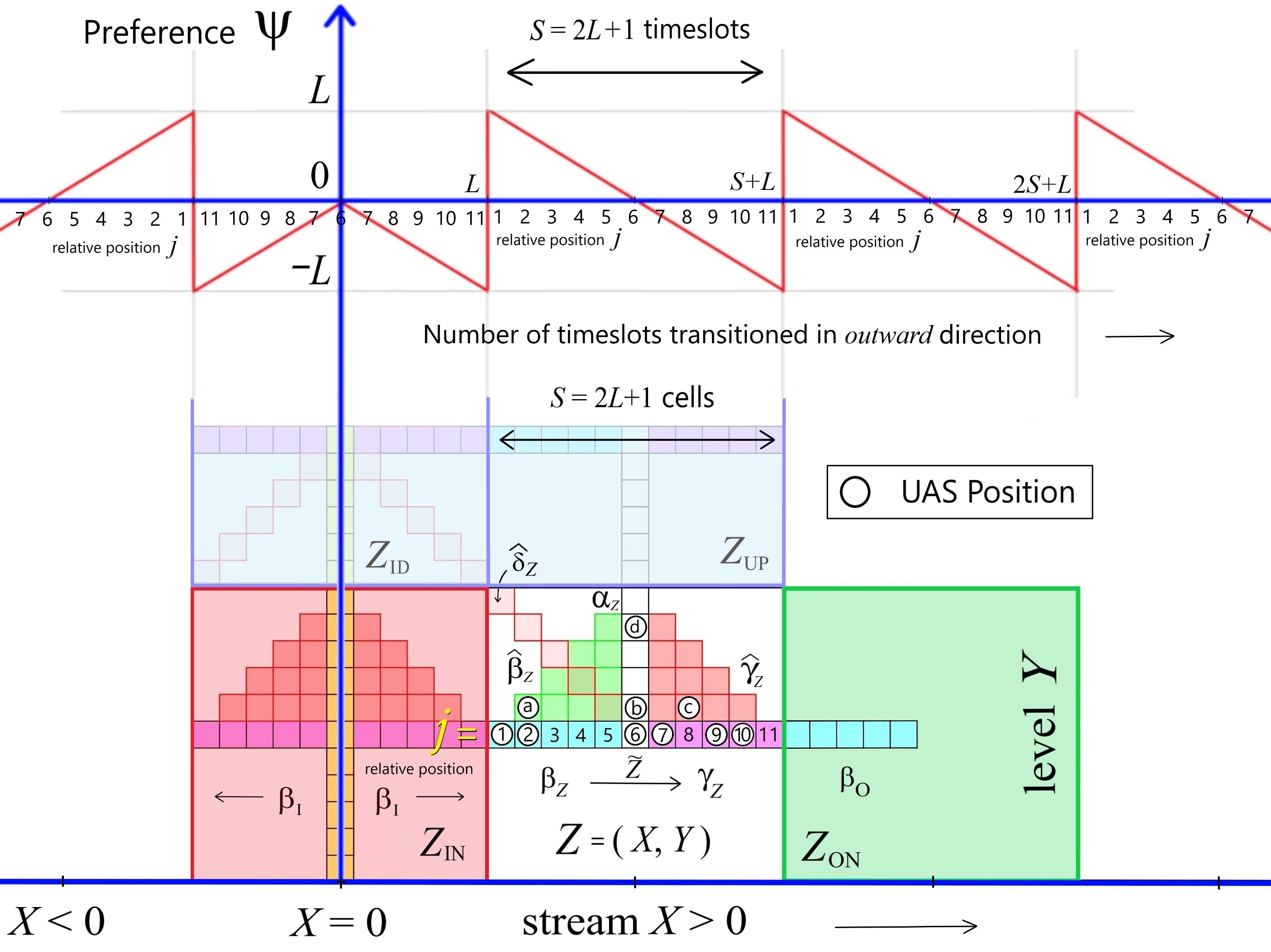}
    \caption{}
    \label{fig: 6a}
\end{subfigure}
     \hfill
\begin{subfigure}[b]{0.49\textwidth}
    \centering
    \raisebox{0.1\height}{\includegraphics[width = \linewidth]{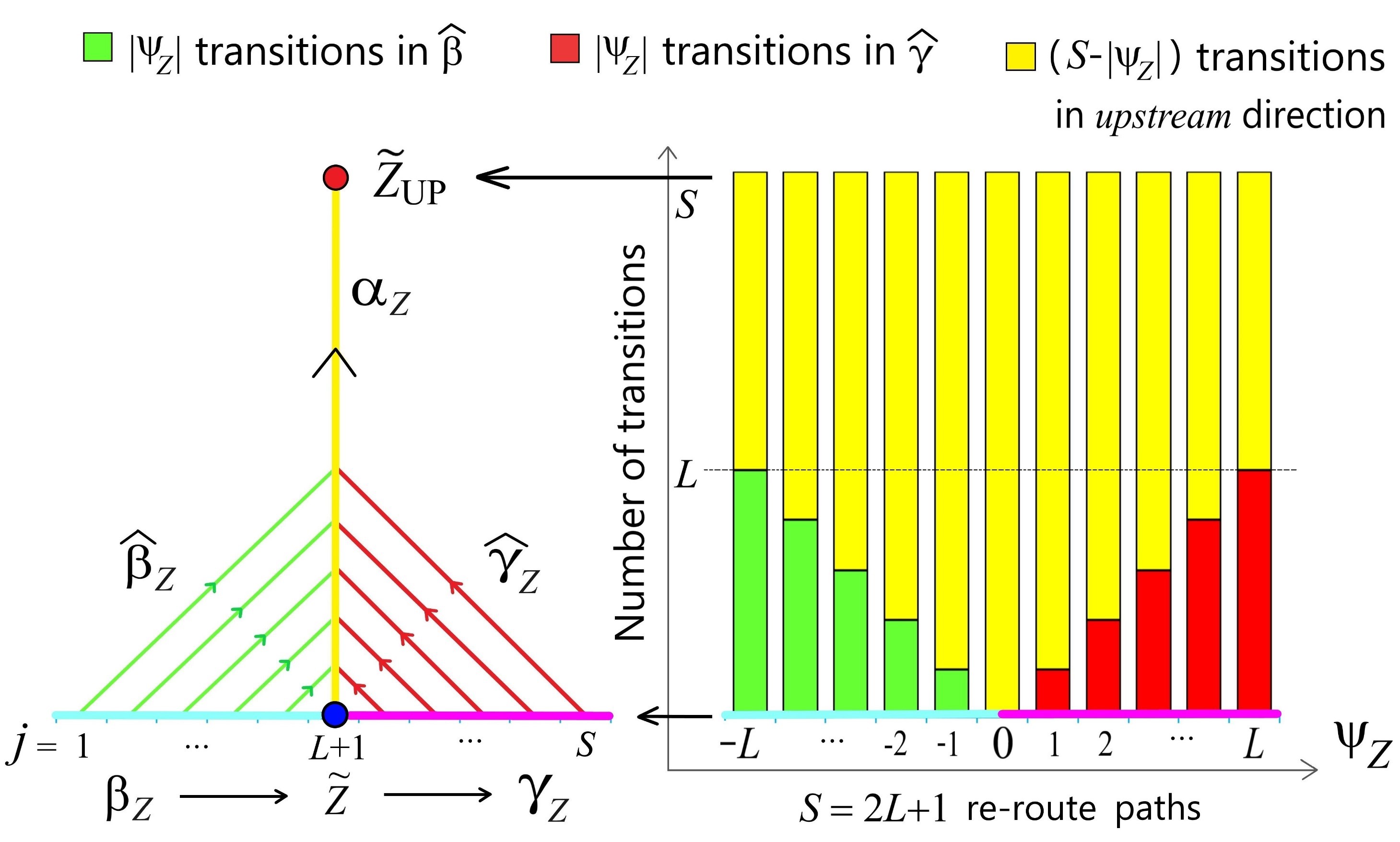}}
    \caption{}
    \label{fig: 6b}
     \end{subfigure}
    \caption{a) The relative position $j = \{1,..., S\}$ in the $\beta_Z\rightarrow\gamma_Z$ path is a cell index with respect to a fixed reference frame in the zone $Z$, where the first cell in $\beta_Z$ path, the node $\widetilde{Z}$, and the last cell in $\gamma_Z$ path are indexed $j = 1, L+1, S$, respectively. The UAS preference $\psi$ to move towards $Z_{\mathrm{UP}}$ varies with its relative position $j$. (Here, $L = 5$ slots, $S = 2L+1 = 11$ timeslots). b) $S$ paths connecting cells in the $\beta_Z\rightarrow\gamma_Z$ path with $\widetilde{Z}_{\mathrm{UP}}$ node.}
    
\end{figure}

If $Z_\mathrm{UP}$ becomes uncongested in timeslot $k$, then any two UAS in the $\beta_Z\rightarrow\gamma_Z$ path that simultaneously move towards $Z_\mathrm{UP}$ would experience conflict. For example, in Fig. \ref{fig: 6a},  UAS in position $2,6,10$ simultaneously reach position `d' by timeslot $k+4$ and conflict with each other. The above risk is overcome by assigning priorities to UAS present in the $\beta_Z\rightarrow\gamma_Z$ path. It is preferable to have self-enforcing priorities from the UAS perspective. The average number of UAS present in a congested zone and the duration for which the zone remains congested are both measures of congestion. A UAS in the \textit{downstream} of node $\widetilde{Z}$ has no prior information about congestion in $Z_{\mathrm{UP}}$ as it is unaffected by congestion until it reaches the node (position $6$ in Fig. \ref{fig: 6a}). However, on reaching $\widetilde{Z}$, each timeslot the UAS sees congestion in $Z_{\mathrm{UP}}$, the UAS executes an \textit{outward} transition on $\gamma_Z$ path, and accrues information about congestion in $Z_{\mathrm{UP}}$. Based on this information, the UAS expects the congestion duration in $Z_{\mathrm{UP}}$ to be at least greater than the time interval it has executed \textit{outward} transitions. Hence, the UAS that has transitioned longer on the $\gamma_Z$ path rather prefers to avoid  $Z_{\mathrm{UP}}$ and progress towards $Z_{\mathrm{ON}}$ as it expects longer congestion durations in $Z_{\mathrm{UP}}$. In contrast, the UAS that enter $\beta_Z$ path (UAS in position $1$ in Fig. \ref{fig: 6a}) have previously avoided $Z_{\mathrm{ID}}$ and have incurred delays, hoping to find $Z_{\mathrm{UP}}$ uncongested. Thus, if $Z_{\mathrm{UP}}$ is uncongested, the UAS in position $1$ actively prefers to move towards $Z_{\mathrm{UP}}$ in comparison to a UAS in the $\gamma_Z$ path. Within the zone $Z$, as UAS transitions on $\beta_Z\rightarrow\gamma_Z$ path, the UAS preference (denoted with $\psi_Z$) to move towards $Z_{\mathrm{UP}}$ linearly decreases with each successive \textit{outward} transition and is modeled as      
\begin{align}
    \psi_Z = L + 1 - j, \ \ \ j \in \mathbb{Z}_{[1, S]} \label{eqn: time-varying preference}
\end{align}
where $j$ is the relative position of UAS in the $\beta_Z\rightarrow\gamma_Z$ path. The variation of $\psi$ with relative position $j$ is shown in Fig. \ref{fig: 6a}. If a UAS has zero $\psi_Z$, it would mean the UAS is currently at node $\widetilde{Z}$. The UAS can execute \textit{upstream} transitions to reach $\widetilde{Z}_{\mathrm{UP}}$. The UAS with positive (negative) $\psi_Z$ can execute \textit{outward} (\textit{inward}) - \textit{upstream} transitions to reach the $Z_{\mathrm{UP}}$ zone. The zone geometry and positional information of the UAS present in the zone are known to every other UAS in the zone. The UAS can compute its preference as well as that of other UAS present in $\beta_Z\rightarrow\gamma_Z$ using Eqn. \ref{eqn: time-varying preference}, enabling the UAS to distributedly decide which UAS should proceed \textit{upstream} when the respective $Z_{\mathrm{UP}}$ zone becomes uncongested.

\begin{congestion}
    \textit{(Priority) If in a given timeslot the $Z_{\mathrm{UP}}$ zone becomes uncongested, then among the UAS transitioning on $\beta_Z\rightarrow\gamma_Z$  path, that UAS which has the highest preference $\psi_{max} \in \mathbb{Z}_{[-L, L]}$ would proceed towards the $Z_{\mathrm{UP}}$ zone. The remaining UAS execute one outward transition on respective paths.} 
    \label{rule: 3}
\end{congestion}
Applying Rule \ref{rule: 3} in Fig. \ref{fig: 6a}, if UAS are present in position $1,6,9$, then UAS in position $1$ proceeds to position `a', whereas UAS in position $6,9$ transition to position $7,10,$ respectively. The UAS in position $6,7$ proceed to position `b' if no UAS are present in positions $1$-$5$ and $1$-$6$, respectively.
For each relative position  $j \in \mathbb{Z}_{[1, S]}$, we have a piecewise re-route path (a sequence of the number of transitions in a specific heading direction) that one-one connects the cells in $\beta_Z\rightarrow\gamma_Z$ path with node $\widetilde{Z}_{\mathrm{UP}}$. Thus, a total of $2L+1 = S$ re-route paths exists. Let $J \subseteq \mathbb{Z}_{[1, S]}$ be the set of positions in the $\beta_Z\rightarrow\gamma_Z$ path where UAS are present.     
\begin{congestion}
    \textit{Following Rule \ref{rule: 3}, the UAS in the $\beta_Z\rightarrow\gamma_Z$ path that has the highest preference opts the below piecewise re-route path to reach $\widetilde{Z}_{\mathrm{UP}}$ node.}
    \begin{align}
        \psi_{\max} = \max\{\psi_Z(j)\ \vert\ j \in J \}
    \end{align}
    \begin{enumerate}
        \item \textit{If $\psi_{\max} = 0$, it means the UAS present at node $\widetilde{Z}$ has the highest preference. The UAS executes $S$ successive transitions in \textit{upstream} direction and reaches $\widetilde{Z}_{\mathrm{UP}}$.}
        \item \textit{When $\psi_{max}>0$, it means the UAS in the $\beta_Z$ path has the highest preference. The UAS executes $|\psi_{\max}|$ successive transitions in \textit{outward-upstream} direction followed by $S-|\psi_{\max}|$  \textit{upstream} transitions and reaches $\widetilde{Z}_{\mathrm{UP}}$. }
        \item \textit{When $\psi_{\max}<0$, it means the UAS in the $\gamma_Z$ path has the highest preference. The UAS executes $|\psi_{\max}|$ successive transitions in the \textit{inward-upstream} direction followed by $S-|\psi_{\max}|$ \textit{upstream} transitions and reaches $\widetilde{Z}_{\mathrm{UP}}$. }   
    \end{enumerate}
    \label{rule: 4}
\end{congestion}
 
Notice that irrespective of the UAS's position $j$ in $\beta_Z\rightarrow\gamma_Z$ path, the UAS takes $\vert \psi_{Z}(j) \vert + (S- \vert \psi_{Z}(j) \vert) = S$ transitions, that is $S$ timeslots to reach the $\widetilde{Z}_{\mathrm{UP}}$ node (refer Fig. \ref{fig: 6b}). 

The following rules are for handling contingencies and exceptional scenarios.

\begin{congestion}
    \textit{(Descend condition) Consider a UAS transitioning \textit{upstream} towards node $\widetilde{Z}$. On reaching node $\widetilde{Z}$, if the UAS finds the $Z_{\mathrm{ID}}$ zone uncongested, with no other UAS present in $\beta_{\mathrm{I}}\rightarrow\gamma_{\mathrm{I}}$ path that could possibly move towards $Z_{\mathrm{ID}}$, then the UAS at node $\widetilde{Z}$ executes $S$ successive transitions on $\widehat{\delta}_Z$ path and reaches the $\widetilde{Z}_{\mathrm{ID}}$ node.}
    \label{rule: 5}
\end{congestion}
Here, by descend, we mean the UAS shifts from the higher stream segment (the present parallel segment containing node $\widetilde{Z}$) to a lower stream segment (segment closer to the nominal segment containing node $\widetilde{Z}_{\mathrm{ID}}$).
While most conflict risks have been mitigated by imposing priority and descend condition, there still is a rare possibility that two UAS transitioning in the \textit{outward} and \textit{upstream} direction simultaneously reach the $\widetilde{Z}$ node in some timeslot $k$ and encounter conflict. Further, following rules \ref{rule: 2}-\ref{rule: 4}, these UAS will thereafter always conflict with each other for the same cell occupancy. The conflict is resolved as follows.

\begin{congestion}
\textit{(Conflict resolution) If a UAS in the $\beta_Z$ path is in conflict with a UAS heading \textit{upstream} at node $\widetilde{Z}$ (refer Fig. \ref{fig: 7}), then irrespective of congestion in $Z_{\mathrm{UP}}$, the UAS in the $\beta_Z$ path is directed onto the $\alpha_Z$ path. The UAS executes $S-1$ transitions in \textit{upstream} direction and reaches $\widetilde{Z}_{\mathrm{UP}}$. 
\begin{enumerate}
    \item if descend condition is satisfied:  the UAS heading \textit{upstream} is directed onto $\widehat{\delta}_Z$ path. It executes $S-1$ transitions in \textit{inward-upstream} direction and reaches $\widetilde{Z}_{\mathrm{ID}}$.
    \item if descend condition not satisfied: The UAS heading \textit{upstream} is directed onto the $\gamma_Z$ path and executes one lateral transition in \textit{outward} direction.  
\end{enumerate}}
\label{rule: 6}
\end{congestion}
The UAS is enclosed in a virtual cylindrical disk of a predefined safety radius. The cells are large enough that during the conflict resolution maneuver, the disks do not overlap. However, the emphasis is on the UAS encountering conflict in time instant $(k-1)\Delta T$ should reach the destined cell centers by time instant $(k+1)\Delta T$. The UAS is free to accelerate or decelerate about the nominal velocity to achieve the above time synchronization. In both Rule \ref{rule: 5} and \ref{rule: 6}, at timeslot $k$ when the descend condition is satisfied or there is a conflict being resolved at node $\widetilde{Z}$, the remaining UAS in the $\beta_Z\rightarrow\gamma_Z$ path execute one transition in the \textit{outward} direction, to avoid any conflict risk with the UAS proceeding towards $\widetilde{Z}_{\mathrm{UP}}$ and $\widetilde{Z}_{\mathrm{ID}}$. 

\begin{figure}[h!]
\centering
\begin{subfigure}{0.3\textwidth}
    \centering
    \includegraphics[width = \linewidth]{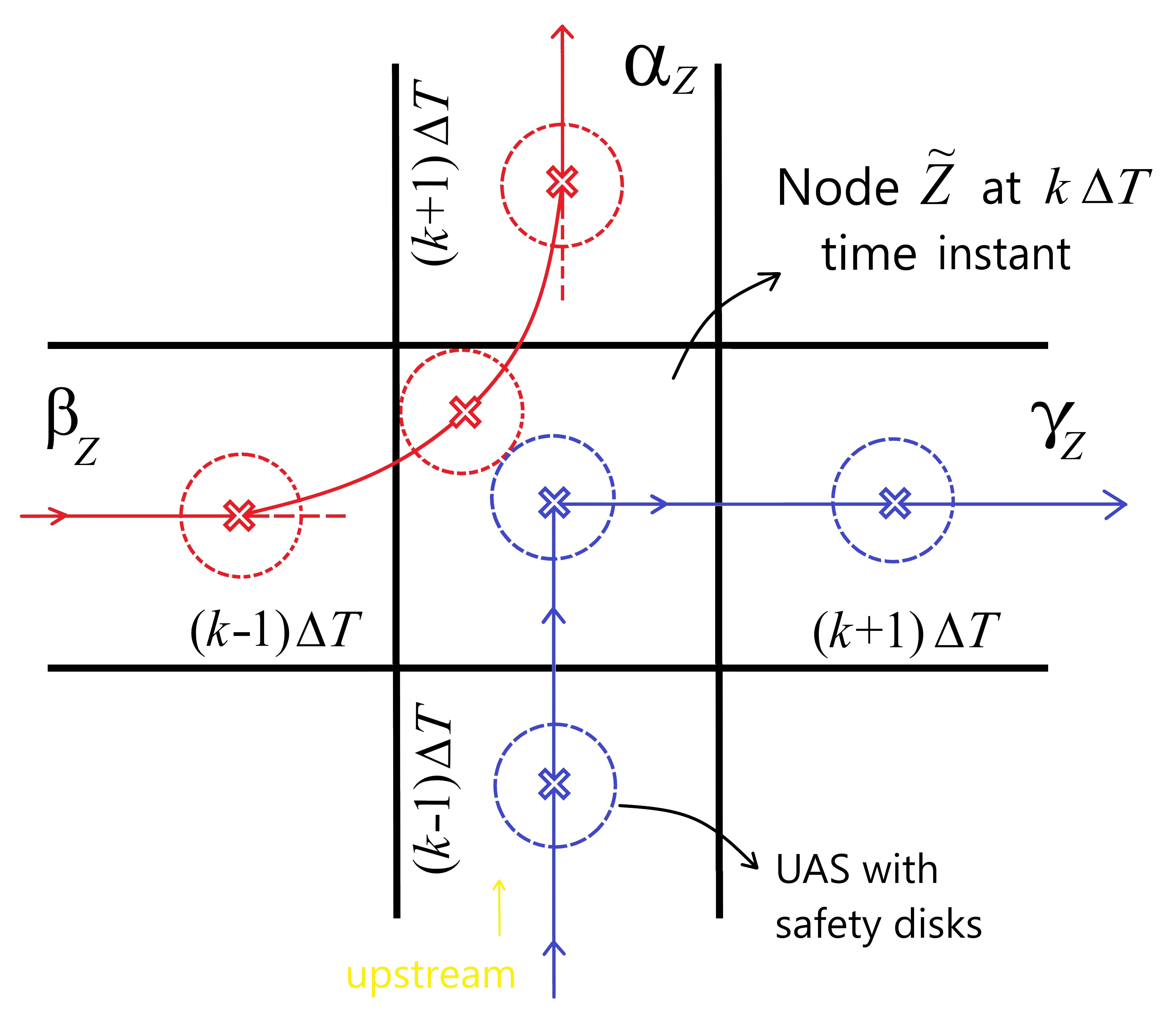}
    \caption{\textit{descend condition} not satisfied}
    \label{fig: 7a}
\end{subfigure}
\hspace{1cm}
\begin{subfigure}{0.3\textwidth}
    \centering
    \includegraphics[width = \linewidth]{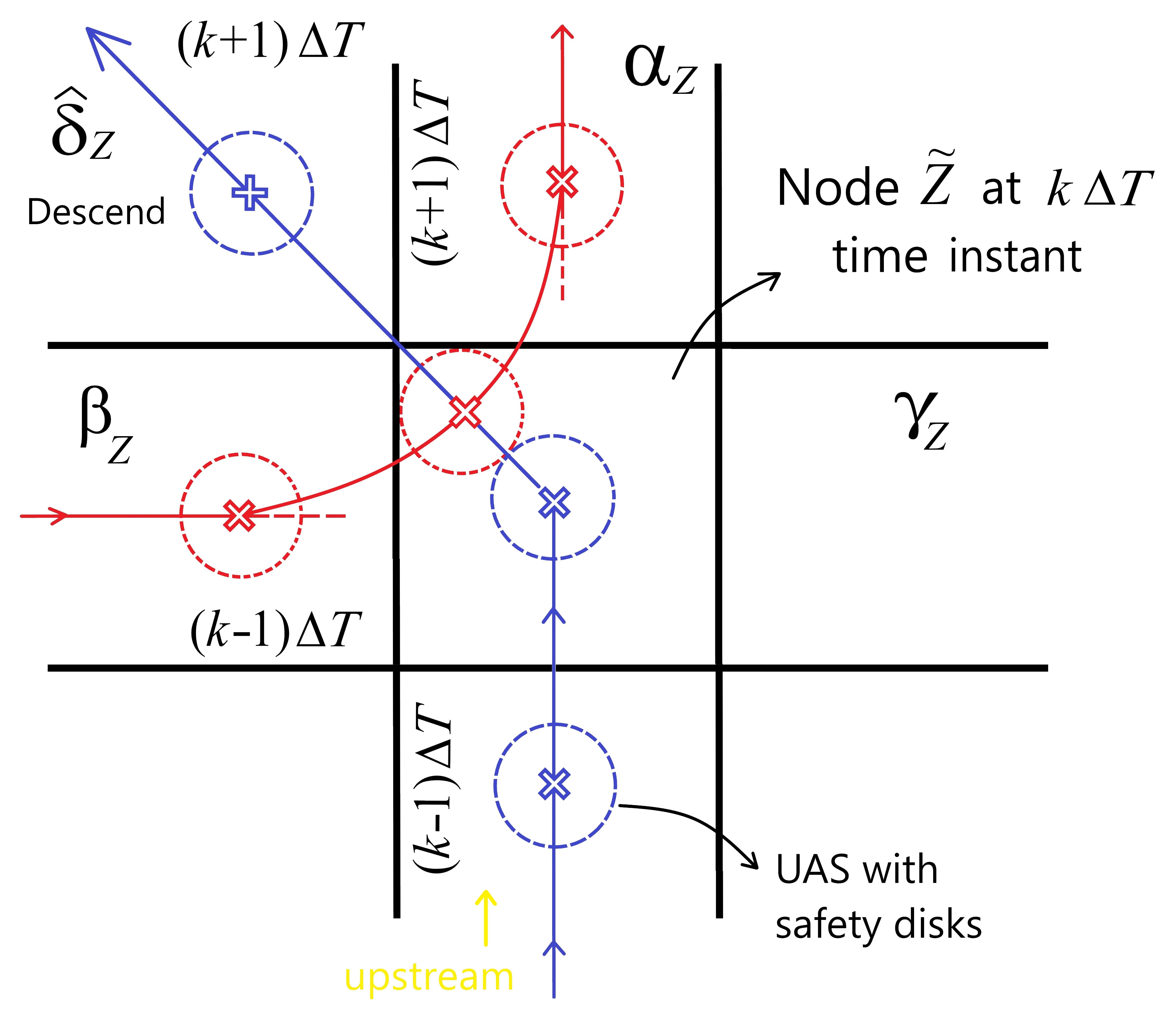}
    \caption{\textit{descend condition} satisfied}
    \label{fig: 7b}
\end{subfigure} 
    \caption{Conflict resolution maneuver executed by  the UAS at time instant $k\Delta T$ on reaching node $\widetilde{Z}$. }
    \label{fig: 7}
\end{figure}

 While UAS is transitioning on the heading-reference graph, the UAS may encounter zones that are occupied by static obstacles. The smallest rectangular zone cluster that encloses the static obstacle is demarcated as a no-fly zone.
\begin{congestion}
    \textit{(Presence of static obstacle) For a UAS in zone $Z$, if the $Z_{\mathrm{UP}}$ is a no-fly zone, then the $Z_{\mathrm{UP}}$ is treated as a congested zone. If the $Z_{\mathrm{ID}}$ or $Z_{\mathrm{IN}}$ zone is a no-fly zone, then the $Z_{\mathrm{ID}}$ is treated as congested zone. If the $Z_{\mathrm{ON}}$ zone is a no-fly zone, then irrespective of congestion in the $Z_{\mathrm{UP}}$ zone, $Z_{\mathrm{UP}}$ is always considered uncongested so that no UAS performs \textit{outwards} transitions and heads towards $Z_{\mathrm{ON}}$} (refer Fig. \ref{fig: 8}). 
    \label{rule: 7}
\end{congestion}
\begin{figure}[h!]
\centering
\includegraphics[width = 0.3\linewidth]{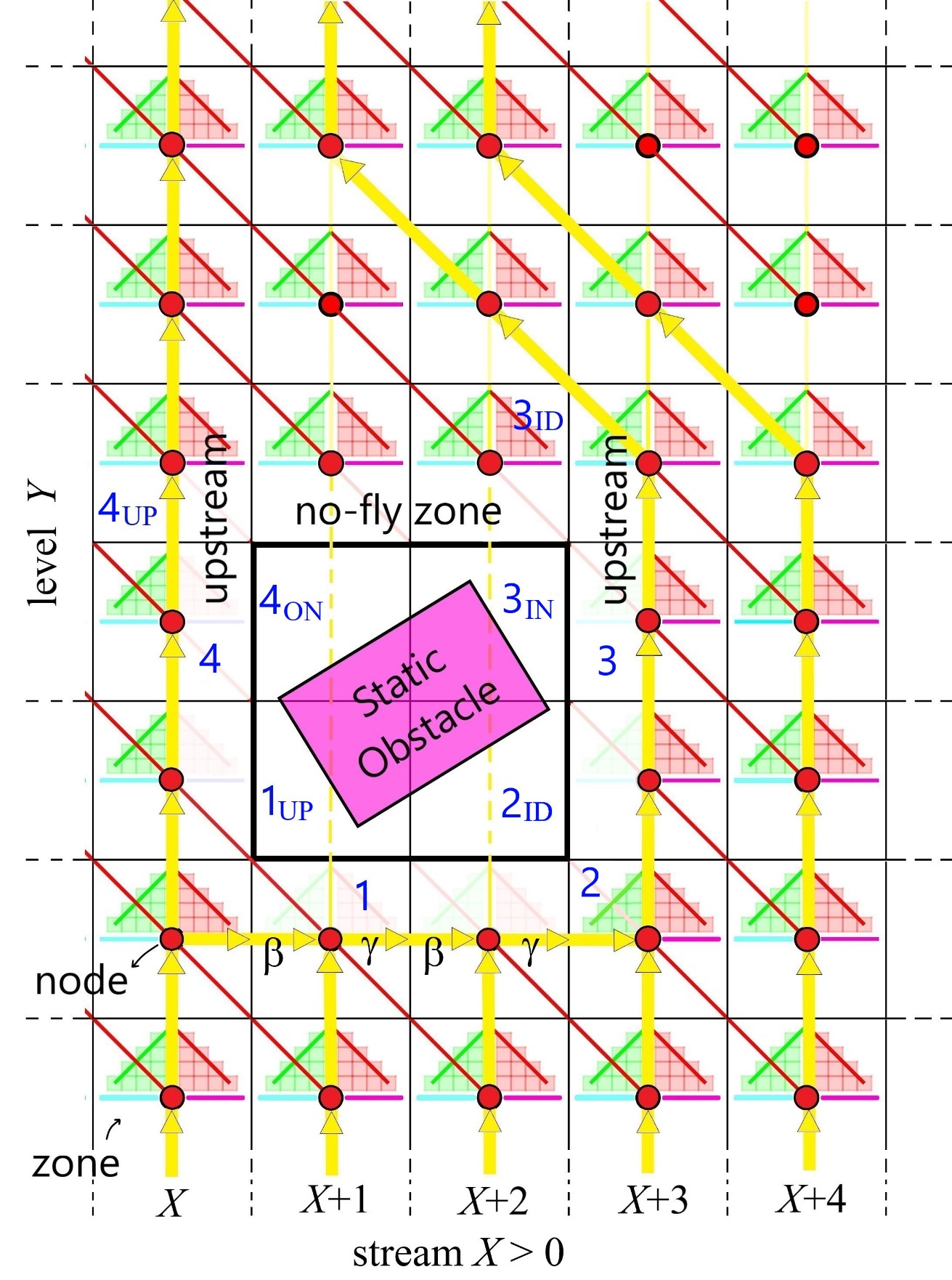}
\caption{For UAS in zones marked $Z = 1,2,3$ and $4$, corresponding $1_{\mathrm{UP}}$, $2_{\mathrm{ID}}$, $3_{\mathrm{IN}}$, and $4_{\mathrm{ON}}$ zones are no-fly zones, respectively. Following Rule \ref{rule: 2}, \ref{rule: 5} and \ref{rule: 7}, no UAS in zone $1$ proceeds towards $1_{\mathrm{UP}}$. No UAS in zone $2,3$ descends to $2_{\mathrm{ID}},3_{\mathrm{ID}}$, respectively. Zone $4_{\mathrm{UP}}$ is always perceived uncongested, hence UAS in zone $4$ would never progress towards $4_{\mathrm{ON}}$. }
\label{fig: 8}
\end{figure}

Consider the UAS in zone $Z$ whose $Z_{\mathrm{UP}}$ is a no-fly zone (zone marked $1$ in Fig. \ref{fig: 8}). As the $Z_{\mathrm{UP}}$ is always perceived congested, these UAS on reaching node $\widetilde{Z}$ would certainly transition along $\beta_Z \rightarrow \gamma_Z$ path every successive timeslot. Hence, the $L$-slot look ahead positions of these UAS can be determined and are included in the Definition \ref{def: 2}. 
\begin{congestion}
     \textit{When UAS in the nominal zone at node $\widetilde{Z} = (0, Y)$ finds the $Z_{\mathrm{UP}}$ zone congested, then the UAS randomly opts either of the two $\gamma_0$ paths and executes one lateral transition in the chosen \textit{outward} direction. The UAS opts for the right \textit{outward} direction with probability $\eta$ and the left \textit{outward} direction with probability $(1-\eta)$.}
     \label{rule: 8}
\end{congestion}
Here, $\eta \in [0,1]$ is a UTM design parameter to introduce bias in the traffic distribution about the nominal segment. It is useful when static obstacles or congestion regions due to exogenous UAS presence in the workspace are known \textit{a priori}.
\begin{congestion}
    \textit{Consider UAS transitioning in the Stream$(\pm1)$ zones (one in each zone). If the UAS simultaneously reach respective nodes $\widetilde{Z} = (\pm1, Y)$ and find the descend condition satisfied, then the probability that UAS in node $(1, Y)$ descends onto the nominal segment, and UAS in node $(-1, Y)$ does not descend is $\eta$. The probability that UAS in node $(-1, Y)$ descends, and UAS in node $(1, Y)$ does not descend is $(1-\eta)$.}
    \label{rule: 9}
\end{congestion}
 
When every UAS in the finite grid follows the proposed rule-based congestion mitigation strategy, the emerging UAS traffic behavior resembles a parallel air corridor network, with no more than $M$ UAS in any zone and the number of active corridor streams (that is, the UAS traffic spread) adapting to the UAS demand ($\lambda$). When a new source-destination route is being planned for a given $\lambda$, the nominal path for the route, the finite grid, the parameters $S$ (zone dimension), $M$ (minimum number of UAS in the zone for congestion), and $\eta$ are chosen such that the expected traffic spread on this route does not interfere with existing congested regions. The expected UAS traffic spread could be estimated using queuing theoretical models presented in section \ref{sec: congestion queuing model}. If $M$ is small, more UAS will transition \textit{outward}, leading to more traffic spread. However, $M$ can be increased to lower the spread at the cost of compromising intrinsic safety. If $\eta \in [0.5,1]$, then the probability that UAS opts right \textit{outward} direction is higher. More traffic will flow to the right of the nominal segment, and hence, there will be more traffic spread to the right. Similarly, if $\eta \in [0,0.5]$, the probability $(1-\eta)$ that UAS opts left \textit{outward} direction is higher. Hence there would be more traffic spread on the left of the nominal segment. 

For each source$(\mathsf{S})$-destination$(\mathsf{D})$ pair existing in the workspace, we have UAS traffic streams emanating at the source and terminating at the destination. The UAS traffic streams between existing $\mathsf{S}$-$\mathsf{D}$ pairs are exogenous traffic streams to the UAS on the new $\mathsf{S}$-$\mathsf{D}$ route. It is preferable that while planning the nominal path for the new route, the nominal path does not intersect exogenous traffic streams. However, in scenarios where a significant reduction in flight time is possible due to shorter path length if the nominal path were passing through exogenous traffic streams (refer Fig. \ref{fig: 9}a-\ref{fig: 9}c), we recommend that path planning impose the following assumptions. 
 
\begin{assumption}
    Each nominal segment in the nominal path can be intersected by at most one exogenous traffic stream. The nominal segment and the exogenous stream must intersect orthogonally. The start cell, the end cell, and the finite grid for the nominal segment must be chosen such that the exogenous stream does not pass through any $\beta\rightarrow\gamma$ path.
    \label{assumption: 1}
\end{assumption}
\begin{assumption}
    The inter-separation distances between UAS in the exogenous streams are geometrically distributed.
    \label{assumption: 2}
\end{assumption}
\begin{assumption}
    The traffic flow on the new source-destination route is affected by UAS in the exogenous traffic stream. However, the UAS in the exogenous traffic stream remain unaffected by the new incoming traffic flow.
    \label{assumption: 3}
\end{assumption}

The exogenous traffic stream could be in the workspace or out of the plane, as depicted by the air corridor in Fig. \ref{fig: 2} front view. The proposed mitigation strategy will be effective even if the above assumptions are violated, as long as the $L$ slot look-ahead positions of UAS in the exogenous streams can be predicted or communicated (refer Fig. \ref{fig: 9}d). Assumptions \ref{assumption: 1}-\ref{assumption: 3} have been stated to develop tractable queuing models for estimating expected traffic spread. Queuing analysis for the more general case is beyond the scope of this paper.
\begin{figure}[h!]
    \centering
    \includegraphics[width = \linewidth]{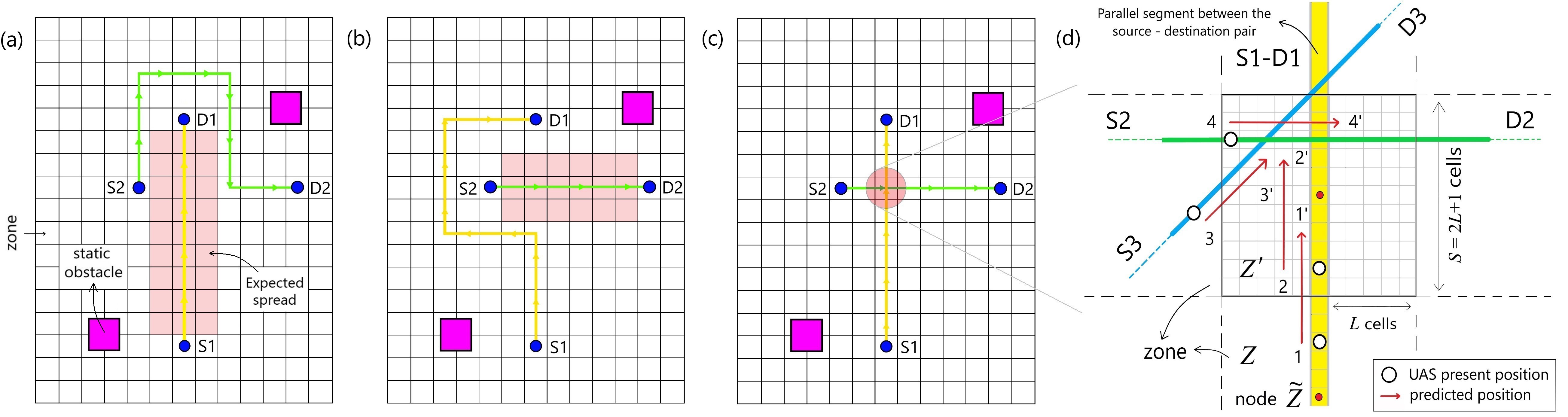}
    \caption{Path planning where the nominal path of a) the new route $\mathsf{S2}$-$\mathsf{D2}$ avoids the exogenous stream $\mathsf{S1}$-$\mathsf{D1}$, b) the new route $\mathsf{S1}$-$\mathsf{D1}$ avoids the exogenous stream $\mathsf{S2}$-$\mathsf{D2}$, c) the new route $\mathsf{S1}$-$\mathsf{D1}$ intersects the exogenous stream $\mathsf{S2}$-$\mathsf{D2}$. d) The UAS transitioning on $\mathsf{S2}$-$\mathsf{D2}$ and $\mathsf{S3}$-$\mathsf{D3}$ routes are exogenous traffic streams to UAS transitioning on the $\mathsf{S1}$-$\mathsf{D1}$ route. For $M \leq 4$, from the perspective of a UAS present at node $\widetilde{Z}$, the zone $Z_{\mathrm{UP}}$ is congested because of UAS 1,2,3 and 4. Note that the inter-separation distance between any two UAS present in the square zone is less than $S$ timeslots.}
    \label{fig: 9}
\end{figure}

The UAS transitioning on the new route does encounter conflicts with UAS in the exogenous traffic streams. However, for these UAS, the proposed congestion mitigation strategy guarantees no more than $M$ conflicts in any zone in the worst case. Here again, the conflicts are resolved by momentarily accelerating and decelerating the UAS, and once the conflict is resolved, the UAS must reach the destined cell center in the same timeslot. If the exogenous traffic stream is out of the plane, then in any timeslot, at most one exogenous UAS exists in the zone. However, if the exogenous stream is in the workspace, the stream intersects the zone orthogonally ($\mathsf{S2}$-$\mathsf{D2}$ shown in Fig. \ref{fig: 9}d), thus at most $S$ exogenous UAS can exist in any timeslot. If $\lambda_e$ is the probability that an exogenous UAS enters a zone in a given timeslot, then the number of exogenous UAS $(N_e )$ in the zone is binomially distributed, with the expected number  of exogenous UAS $(\mathrm{E}[{N}_e])$ in the zone given by,
\begin{align}
    &\mathrm{P}[N_e = n] = {S\choose n}{(\lambda_e)}^n (1-\lambda_e)^{S-n}\ \ , \ \ \mathrm{E}[{N}_e] = S\lambda_e
\end{align}
As a consequence of Assumption \ref{assumption: 3}, $M$ for the zones on the new route intersecting the exogenous stream must be greater than $\mathrm{E}[{N}_e]$. Only then will the UAS transitioning on the new route cross the exogenous stream. 
  
\section{Congestion queuing model} \label{sec: congestion queuing model}

We adopt the discrete-time queuing theory (\citet{atencia2013discrete,gao2004discrete,wittevrongel1999discrete, bruneel1994analysis}) for estimating the expected number of UAS present in any zone $Z = (X, Y)$, $X = \{-X_e,...,0,..., X_e\}$, $Y \in \{1,..., Y_e\}$. As discussed in Fig. \ref{fig: 5b}, for UAS transitioning in $\alpha_Z,\widehat{\beta}_Z, \widehat{\gamma}_Z$ and $\widehat{\delta}_{\mathrm{O}}$ paths, respective $L$ slot look-ahead positions are in zone $Z_{\mathrm{UP}}$ and are responsible for congestion. We associate the zone $Z_{\mathrm{UP}}$ with a queuing system $\overline{Z}$, that comprises cells belonging to $\alpha_Z,\widehat{\beta}_Z,\widehat{\gamma}_Z,\widehat{\delta}_{\mathrm{O}}$ and $\beta_Z\rightarrow\gamma_Z$ paths. The \textit{upstream} and the diagonal paths are servers, and the laterally \textit{outward} paths are queues. Following the rule-based congestion mitigation strategy, the UAS that enter this system of cells ($\overline{Z}$) are arrivals in queuing theory terminology. The \textit{downstream} UAS that enter the system $\overline{Z}$ via node $\widetilde{Z}$ are referred to as nodal arrivals. Along with nodal arrivals, UAS in zones $Z_{\mathrm{IN}}$ and $Z_{\mathrm{ON}}$ enter the system $\overline{Z}$ via a cell that is not $\widetilde{Z}$ node. These UAS are referred to as non-nodal arrivals. The nodal and non-arrivals are denoted with Bernoulli distributed random variables $\{a_{\alpha}, a_{\delta}\}$, and $\{a_{\mathrm{I}}, a_{\mathrm{O}}\}$, respectively, in Fig. \ref{fig: 10}. Note that $a_{.,k} = 1$ if there is an arrival in timeslot $k$, and $0$ otherwise. Applying Rule \ref{rule: 5}
(\textit{descend condition}) in any timeslot $k$, both $a_{\alpha, k},\ a_{\delta, k}$ would simultaneously never be 1. Thus, we
merge the nodal arrivals as $a_k = a_{\alpha,k} + a_{\delta_k},\ a_k \in \{0,1\}$.
\begin{figure}[h!]
    \centering
    \includegraphics[width = 0.65\linewidth]{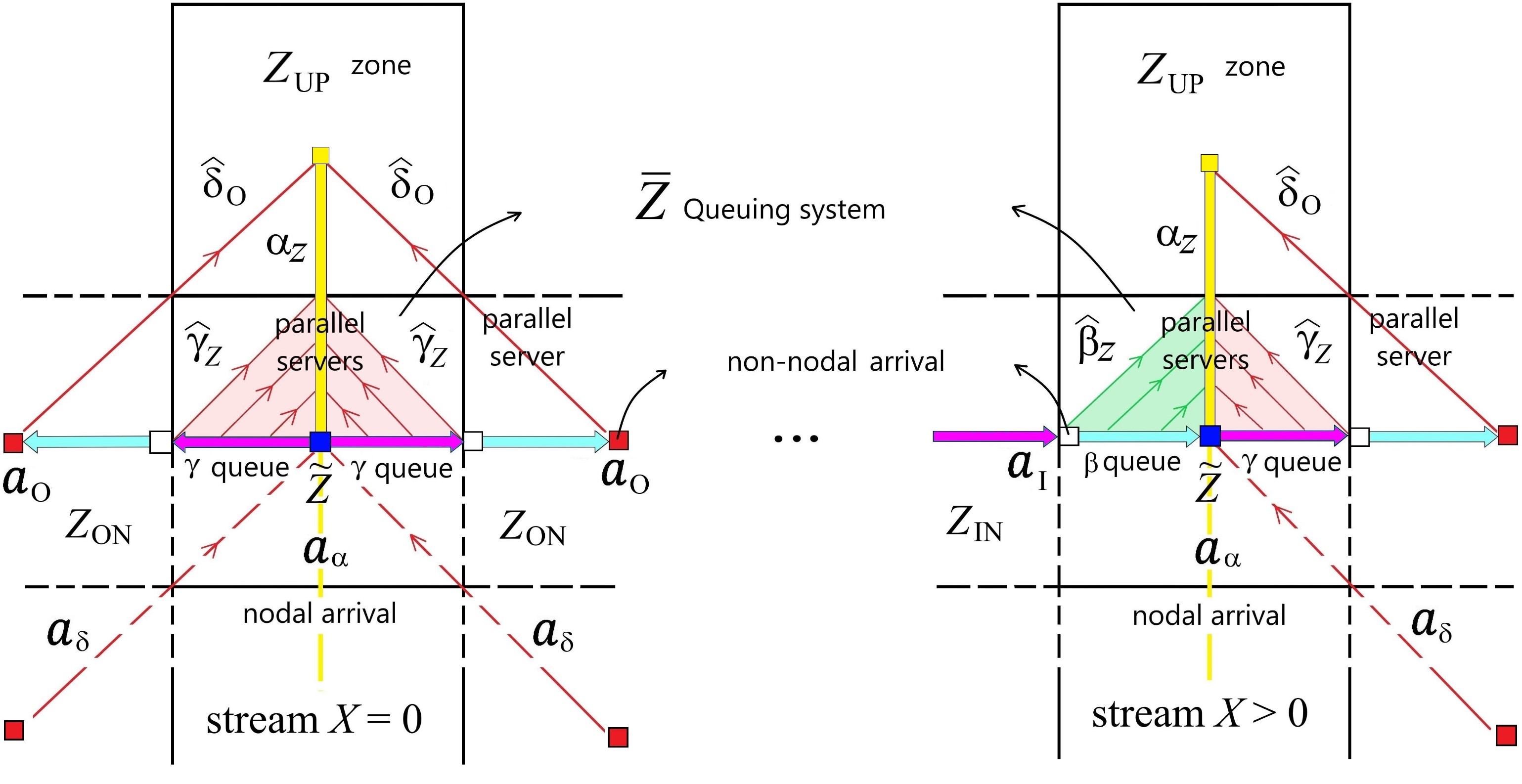}
    \caption{Queuing system $\overline{Z}$ associated with zone $Z_{\mathrm{UP}}$}
    \label{fig: 10}
\end{figure}

The $S$ paths summarized in Fig. \ref{fig: 6b} along with the $\widehat{\delta}_{\mathrm{O}}$ path are the $S+1$ parallel servers of the system. The UAS transitioning on these paths are said to be in service. The UAS that has completed $i$ transitions on any of the $S+1$ paths is said to have received $i$ slots of service, and on reaching the $\widetilde{Z}_{\mathrm{UP}}$ node is said to have departed the $\overline{Z}$ queuing system. A UAS entering service needs $S$ transitions on any of the $S+1$ paths to reach $\widetilde{Z}_{\mathrm{UP}}$ node. Thus, irrespective of the parallel server and the timeslot in which the UAS enters service, the service time is $S$ slots (that is $S\Delta T$ time units).  Let $w_{i,k} \in \{0,1\},\ i \in \{1,...,S-1\}$ are random variables such that, at timeslot $k$, $w_{i,k} = 1$ indicates that there exists a UAS that has received $i$ slots of service and $w_{i,k} = 0$ otherwise. Here, $w_{1,k} = 1$ implies that a UAS has entered service in the previous timeslot, and $w_{S-1,k} = 1$ implies that a UAS would depart the system $\overline{Z}$ in the next timeslot. A UAS that has received $i$ slots of service by timeslot $k$ would have received $i+1$ slots of service by $(k+1)$ timeslot. Hence,
\begin{align}
    w_{i+1,k+1} = w_{i,k}\ \ , \ \ W_{i+1}(z) = W_i(z)  
    \end{align}
where $W_i(z), \ i \in \{1,..., S-1\}$ is the probability generating function (PGF) of limiting probability distribution of the random variable $w_{i,k}$ defined as,   
\begin{align}
    &W_i(z) \triangleq P[w_{i,k} = 0] + z P[w_{i,k} = 1] \ , \ \ \ k \longrightarrow \infty 
    \label{eqn: received service}
\end{align}

The UAS in the exogenous traffic stream passing through zone $Z_{\mathrm{UP}}$ also contribute to congestion in $Z_{\mathrm{UP}}$. Corresponding cells through which the stream passes are appended in system $\overline{Z}$. By Assumption \ref{assumption: 2}, the inter-separation distances between exogenous UAS current positions, as well as between the $L$ slot look-ahead predicted positions, are geometrically distributed. Statistically, there is no difference between the counting processes related to when the present position and the predicted position enter the zone; hence, they can be interchangeably used for the exogenous stream. The UAS in the exogenous stream on entering zone $Z_{\mathrm{UP}}$ are called exogenous arrivals. According to Assumption \ref{assumption: 1}, the exogenous stream intersects the zone orthogonally. When the exogenous stream is in the workspace, the exogenous arrivals need $S$ transitions to exit the zone (that is, the service time is $S$ slots).  Let $e_{i,k} \in \{0,1\}, \ i \in\{0,1,...,S-1\}$ be a random variable such that, $e_{0,k} = 1$ indicates that exogenous arrival at timeslot $k$ has entered service. Further, $e_{i,k} = 1$ indicates that there exists an exogenous arrival that has received $i$ slots of service. When the exogenous stream is out of the plane, then $i = \{0\}$. If $\lambda_e$ is the exogenous arrival rate then,
\begin{align}
    &e_{i+1,k+1} = e_{i,k}\ \ , \ \ E_{i+1}(z) = E_i(z) = (1-\lambda_e) + z\ \lambda_e\ \ ,\ \ \ \ \ \ i \in \{0,...,S-1\}
\end{align}

At any given timeslot $k$, the random variables $w_{i,k},\ \ i \in \{1,...,S-1 \}$ are independent identically distributed (i.i.d). Note that $w_{1,k+1}$ is dependent on $e_{i,k}\ , \ i \in \{0,...,S-1 \}$ and $w_{i,k}$. But, $w_{i,k}$ is independent of $e_{i,k}$ at timeslot $k$. Knowing the PGF of the sum of independent random variables is the product of the PGF of individual random variables;   
the number of UAS in the system currently in service $u_k \in \{0,...,2S-1\}$ and PGF of its limiting probability distribution are given by,  
\begin{align}
    &u_k = \sum_{i = 1}^{S-1} w_{i,k} + \sum_{i = 0}^{S-1} e_{i,k}\ \ \ , \ \ \ U(z) = W_1(z)^{S-1} E_0(z)^{S} \ \ \ ,\ \ \  U(z) \triangleq \sum_{n = 0}^{2S-1} z^n P[u_k = n]\ \ \ , \ \ \ k \longrightarrow \infty
     \label{eqn: UAS in service}
\end{align}
From the congestion Definition \ref{def: 2}, $M$ is the minimum number of UAS predicted positions to be present in a zone for the zone to be defined as congested. If in the timeslot $k$, $M$ UAS are in service, then the queuing system is said to be busy; otherwise it is idle. The queuing system $\overline{Z}$ being busy implies $M$ UAS predicted positions exist in the zone $Z_{\mathrm{UP}}$. 
\begin{definition}
    If the queuing system $\overline{Z}$ is busy, then the associated zone $Z_{\mathrm{UP}}$ is congested.
    \label{def: 3}
\end{definition}
Let $\theta_k$ be a random variable such that $\theta_k = 0$ indicates the system is busy (that is associated $Z_{\mathrm{UP}}$ zone is congested) and $\theta_k = 1$ indicates the system is idle (that is, associated $Z_{\mathrm{UP}}$ zone is uncongested) at timeslot $k$. The PGF of the probability distribution of $\theta_k$, as $k\longrightarrow \infty$ is,   
\begin{align}
    &\theta_k = \begin{cases}
        0 & \text{if } u_k \geq M\\
        1 &  \text{if } u_k < M
    \end{cases} \ \ \ \ , \ \ \ \ 
    \Theta(z) \triangleq \sum_{n = M }^{2S-1} P[u_k = n] + z \sum_{n = 0}^{M-1} P[u_k = n]
    \label{eqn: congestion definition}
\end{align}
From Eqns. \ref{eqn: UAS in service}  and \ref{eqn: congestion definition}, the steady-state probability $\Theta_0$ that the zone $Z_{\mathrm{UP}}$ will be congested in timeslot $k$ is given by,
\begin{align}
   \Theta_0\ =\ P[\theta_k = 0]\ =\  1 - P[\theta_k = 1]\ =\ 1 - \Theta^{(1)}(0)\ =\ 1 - \sum_{n = 0 }^{M-1} P[u_k = n]\ =\ 1 - \sum_{n = 0}^{M-1} \cfrac{U^{(n)}(0)}{n!}
   \label{eqn: congestion probability}
\end{align}
where $(n)$ is the $n^{th}$ derivative with respect to $z$. Using Eqn. \ref{eqn: UAS in service} in Eqn. \ref{eqn: congestion probability} we get probability $\Theta_0$ in terms of $W_1(z)$.
\begin{align}
   \Theta_0 \ =\ & 1  - W_1(0)^{S-1} E_0(0)^S   - \sum_{n = 1}^{M-1} {S-1 \choose n}\mathlarger(1-W_1(0)\mathlarger)^n W_1(0)^{S-n-1} E_0(0)^S \notag \\  
   & - \sum_{n = 1}^{M-1} {S \choose n}\ W_1(0)^{S-1} E_0(0)^{S-n}\mathlarger(1-E_0(0)\mathlarger)^{n} \label{eqn: forward} 
\end{align}

When there are no exogenous arrivals, $E_0(0) = 1$ in Eqn. \ref{eqn: forward}. Here, $W_1(0) = P[w_{i,k} = 0]$ is the steady-state probability that a UAS arrival will not enter service, and $(1-W_1(0))$ is the steady-state probability that the arrival would enter service in a given timeslot. Eqn. \ref{eqn: forward} is a forward relation that captures UAS arrivals entering the service are responsible for congestion. 

If the $\overline{Z}$ system is busy in timeslot $k$, the system remains busy until at least one of the $M$ UAS departs service. The duration (in timeslots) for which the system remains busy is referred to as the congestion period. Following Rule \ref{rule: 2}, the UAS arrivals during the congestion period would successively transition in the $\beta_Z \rightarrow\gamma_Z$ path. These UAS in the $\beta_Z$ or $\gamma_Z$ paths are said to be queuing to enter service, with the number of transitions executed before entering service (in timeslots) being the waiting time. The motive is to find a feedback relation that captures congestion being responsible for an arrival entering the queue. It is evident from Fig. \ref{fig: 10} that queues associated with Stream$(0)$ zones behave differently from those of Stream$(X)$ zones. We present queuing models for Stream$(0)$ and Stream$(X\neq0)$ zones as follows, highlighting respective differences.

\subsection{Stream$\{0\}$ queuing system}

The schematic for the Stream$\{0\}$ queuing system associated with Stream$(0)$ zone is shown in Fig. \ref{fig: 11}. A Stream$\{0\}$ queuing system $\overline{Z}$ has nodal arrival $a_k$ and non-nodal arrival $a_{\mathrm{O},k}$. At the source node, $a_{k}$ arrivals signify UAS being deployed on receiving requests at rate $\lambda$. Thus, $a_{k}$ is Bernoulli distributed with arrival rate $\lambda$ and PGF $A(z) = (1-\lambda) + \lambda z$, whereas arrivals $a_{\mathrm{O}, k}= 0$ with probability 1 and PGF $A_{\mathrm{O}}(z) = 1$. At any other node, the arrival distributions need to be computed considering departures from respective neighboring queuing systems. The Stream$\{0\}$ queuing system has two $\gamma_0$ queues. According to Rule \ref{rule: 8}, if a UAS is queuing, then it is in either of the two queues with probability $\eta$ and $1-\eta$. We merge the two $\gamma_0$ queues and treat them as one. Similarly, following Rule \ref{rule: 9}, we merge the two $\widehat{\delta}_{\mathrm{O}}$ servers of Stream$\{0\}$ queuing system. 
\begin{figure}[h!]
    \centering
    \includegraphics[width = 0.7\linewidth]{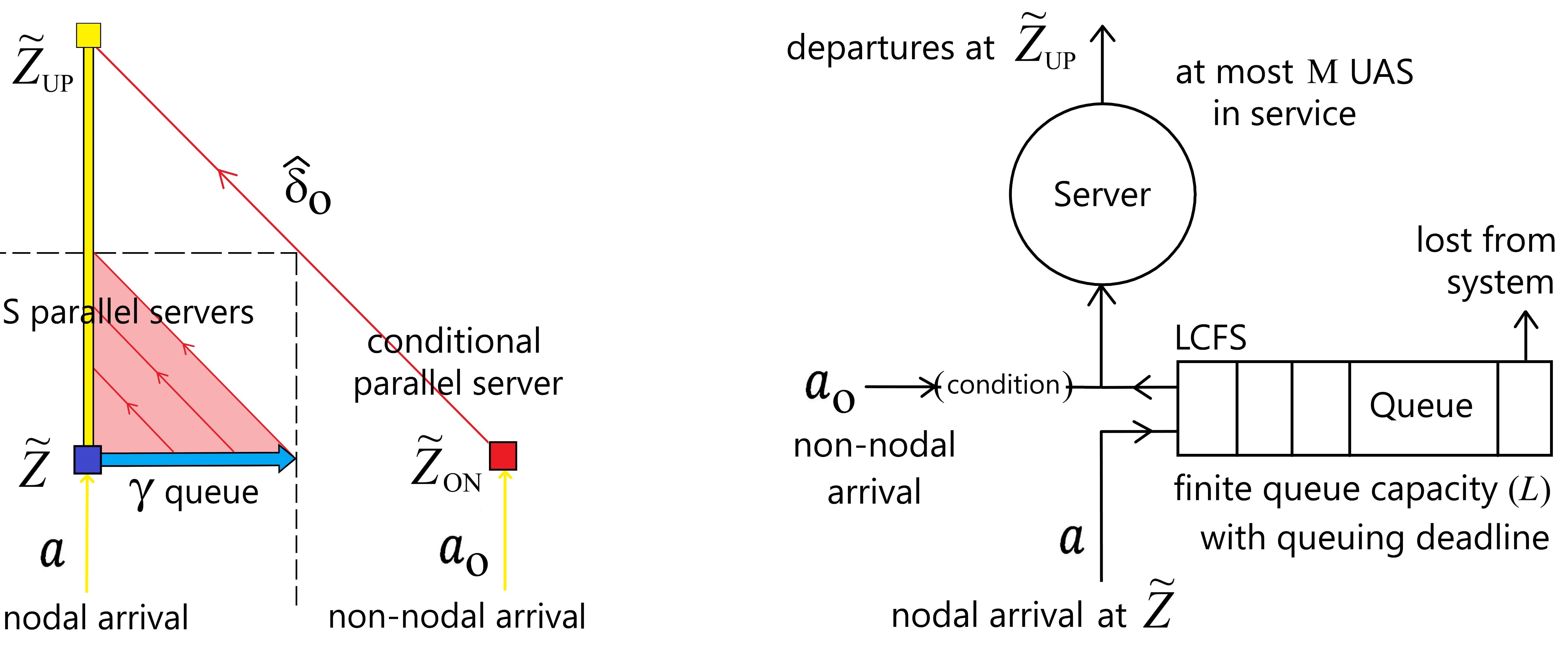}
    \caption{Stream$\{0\}$ queuing system}
    \label{fig: 11}
\end{figure}

Let $v_{j,k} \in \{0,1,...,j\}, \ j \in \{1,..., L\}$ be random variables such that at timeslot $k$, $v_{j,k}$ gives the total number of UAS currently in the queue that have queued for no more than $j$ timeslots since entering the queue. Here, $v_{1,k} = 1$ implies that an arrival found the system busy in timeslot $k-1$, so the arrival entered the queue and has queued for $1$ timeslot since then. $v_{L,k}$ gives the total number of UAS present in the queue. A UAS that has queued for $L$ timeslots (that is, executed $L$ successive transitions on $\gamma_0$ path) exits the $\overline{Z}$ system in the next timeslot. Hence, there is a queuing deadline of $L$ timeslots to enter service, beyond which the UAS is lost from the system. 
As the queue never goes unbounded, the limiting probability distributions as $k\longrightarrow\infty$ exist.

The UAS preference $\psi_Z \in \mathbb{Z}_{[-L,0]}$ decreases with each queued timeslot. A UAS that has queued for the least number of timeslots has the highest preference. Hence, by priority (Rule \ref{rule: 3}), if the $\overline{Z}$ system is idle ($Z_{\mathrm{UP}}$ is uncongested), then UAS with the highest preference enters service. This is fundamentally the Last Come, First Serve (LCFS) discipline. The following dynamical equations summarize the queuing behavior when the system is busy and idle.

\noindent If the system is busy ($\theta_k = 0$) then,
\begin{align}
    v_{1,k+1} = a_ k  \ \ \ \ , \ \ \ \ v_{j+1,k+1} = v_{j,k} + a_ k\ \ \ \ , \ \ \ \  j \in \{1,...,L-1\} \label{eqn: UAS queuing 1}
\end{align}
If the system is idle ($\theta_k = 1$) then following LCFS, we have
\begin{align}
    &v_{1,k+1} = (a_k - 1)^+ = 0\ \ \ \ , \ \ \ \ v_{j+1,k+1} = (v_{j,k} + a_k - 1)^+ \ \ \ \ , \ \ \ \  j \in \{1,...,L-1\} \label{eqn: UAS queuing 4} 
\end{align}
where the notation $(.)^+ = \max\{0,.\}$. Eqn. \ref{eqn: UAS queuing 4} states that when $a_k = 0$, then one UAS from the queue enters service (the one which arrived last). Hence, the number of UAS that have queued at most $j$ slots ($j \in \{1,...,L\}$) reduces by one. When $a_k = 1$, then the arrival enters service, and the number of UAS that have queued $j$ slots by timeslot $k$ would have queued $j+1$ slots by timeslot $k+1$. 
The PGF $V_j(z),\ j \in \{1,..., L\}$ can be derived using the conditional PGF of Eqns. \ref{eqn: UAS queuing 1}-\ref{eqn: UAS queuing 4} as,
\begin{align}
    V_j(z)\ =\ E[z^{v_{j,k+1}}\vert \theta_k = 0] P[\theta_k = 0]  + E[z^{v_{j,k+1}}\vert \theta_k = 1] P[\theta_k = 1] 
\end{align}
The arrivals $a_{k}$ and $v_{j,k}$ are independent at timeslot $k$. Substituting the respective PGFs for $j \in \{1,...,L-1\}$, we get
\begin{align}
    &V_1(z) = A(z)\Theta_0 + (1 - \Theta_0) \ \ \ \ , \ \ \ \ V_{j+1}(z) 
     = \ A(z)V_{j}(z) \Theta_0 + \left[ V_j(0) + \left(\cfrac{V_j(z) - V_j(0)}{z}\right)A(z) \right] (1- \Theta_0) \label{eqn: rouche}
\end{align}

By recursively substituting $V_j(z)$ in $V_{j+1}(z)$, we get $V_L(z)$.  In the timeslot $k$, a UAS will enter service if the system is uncongested and, more importantly, there are UAS available to enter service, that is,
\begin{enumerate}
\item there is a UAS nodal arrival (that is, $a_k = 1$ with probability $(1-A(0))$), or
\item there are UAS in queue (that is, $v_{L,k} > 0$ with probability $(1-V_L(0))$), or 
\item there is a UAS non-nodal arrival (that is, $a_{\mathrm{O},k} = 1$ with probability $(1-A_{\mathrm{O}}(0))$) available to descend towards $Z_{\mathrm{UP}}$ under the condition that there are no UAS nodal arrival and no UAS in the queue (\textit{descend condition}),
\end{enumerate}
As $k \longrightarrow \infty$, the probability $\pi$ that a UAS is available to enter service in a given timeslot is given by, 
\begin{align}
    \sigma_Z = A(0)V_L(0) \ \ \ , \ \ \ \pi\ =\ (1-A(0)) + A(0)(1 - V_L(0)) +  \sigma_Z(1-A_{\mathrm{O}}(0)) =\ (1-\sigma_Z) + \sigma_Z(1-A_{\mathrm{O}}(0)) \label{eqn: UAS available 0}
\end{align}
where $\sigma_Z$ is referred to as the steady-state descend probability that a UAS at node $\widetilde{Z}_\mathrm{ON}$ would descend into $Z_{\mathrm{UP}}$ zone. The PGF of the descend probability distribution is denoted by $B(z) = (1-\sigma_Z) + z \sigma_Z$. The steady-state probability that a UAS would enter service is,
\begin{align}
        \mathrm{P}[w_{1,k} = 1] = (1-\Theta_0)\pi \ \ \ \ ,\ \ \ \ W_1(0) = 1 - \mathrm{P}[w_{1,k} = 1]  \label{eqn: 0 feedback}
\end{align}
Eqn. \ref{eqn: 0 feedback} is the feedback relation that captures congestion being responsible for UAS entering the Stream$\{0\}$ queue. Substituting Eqn. \ref{eqn: 0 feedback} in Eqn. \ref{eqn: forward}, we get a polynomial equation in $\Theta_0$ of degree $(L+1)(S-1)$. There is an indeterminate term ($0/0$) in Eqn. \ref{eqn: rouche}, which we take advantage of using Rouché's theorem. Knowing the steady state exists and $\vert z \vert < 1$, according to Rouché's theorem (\citet{kobayashi1977queueing,bruneel1994analysis}), the roots of the denominator of Eqn. \ref{eqn: rouche} are also roots of the numerator, which leads to pole-zero cancellations and the polynomial in $\Theta_0$ having a zero in $[0,1]$ (existence proof provided in Theorem \ref{thm: Existence} in Appendix). The polynomial equation is solved numerically for $\Theta_0$. Substituting $\Theta_0$ in $\pi,\sigma_Z,V_{j}(z), j \in \{1,L\}$, we get the respective probabilities.

Note that the congestion definition in Eqn. \ref{eqn: forward} is defined solely on $M$ UAS being in the service or not. However, it does not account for when these $M$ UAS have entered service. It would be misleading to assume no correlation between the time instances the UAS enters service, as this affects the congestion period duration. For example, assume $M = 2$ and service time $S = 11$ timeslots. Consider in timeslot `$0$', $Z_{\mathrm{UP}}$ is uncongested, and a UAS enters service. If a second UAS enters service in timeslot `$1$', then as $2$ UAS are in service, the $Z_{\mathrm{UP}}$ becomes congested in timeslot `$1$'. The first UAS will depart the system after receiving $11$ timeslots of service, and with the UAS departing, the $Z_{\mathrm{UP}}$ becomes uncongested in timeslot `$11$'. Thus, in a time interval of $11$ timeslots, the $Z_{\mathrm{UP}}$ is uncongested for one timeslot, and then it remains congested for the next $10$ timeslots. Instead, if the second UAS had entered service in timeslot `$10$', then the $Z_{\mathrm{UP}}$ turns congested in timeslot `$10$'. By timeslot `$10$', the first UAS would have already received $10$ timeslots of service and, in the next timeslot, would depart, leaving the system behind uncongested. In this case, the $Z_{\mathrm{UP}}$ is uncongested for $10$ timeslots, and it remains congested for only one timeslot. We introduce the above correlation by modeling congestion as a Markov modulated on/off regular process (MMRP) (\citet{smiesko2023markov}).
\begin{figure}[h!]
    \centering
    \begin{subfigure}{0.25\textwidth}
      \centering 
      \includegraphics[width = \linewidth]{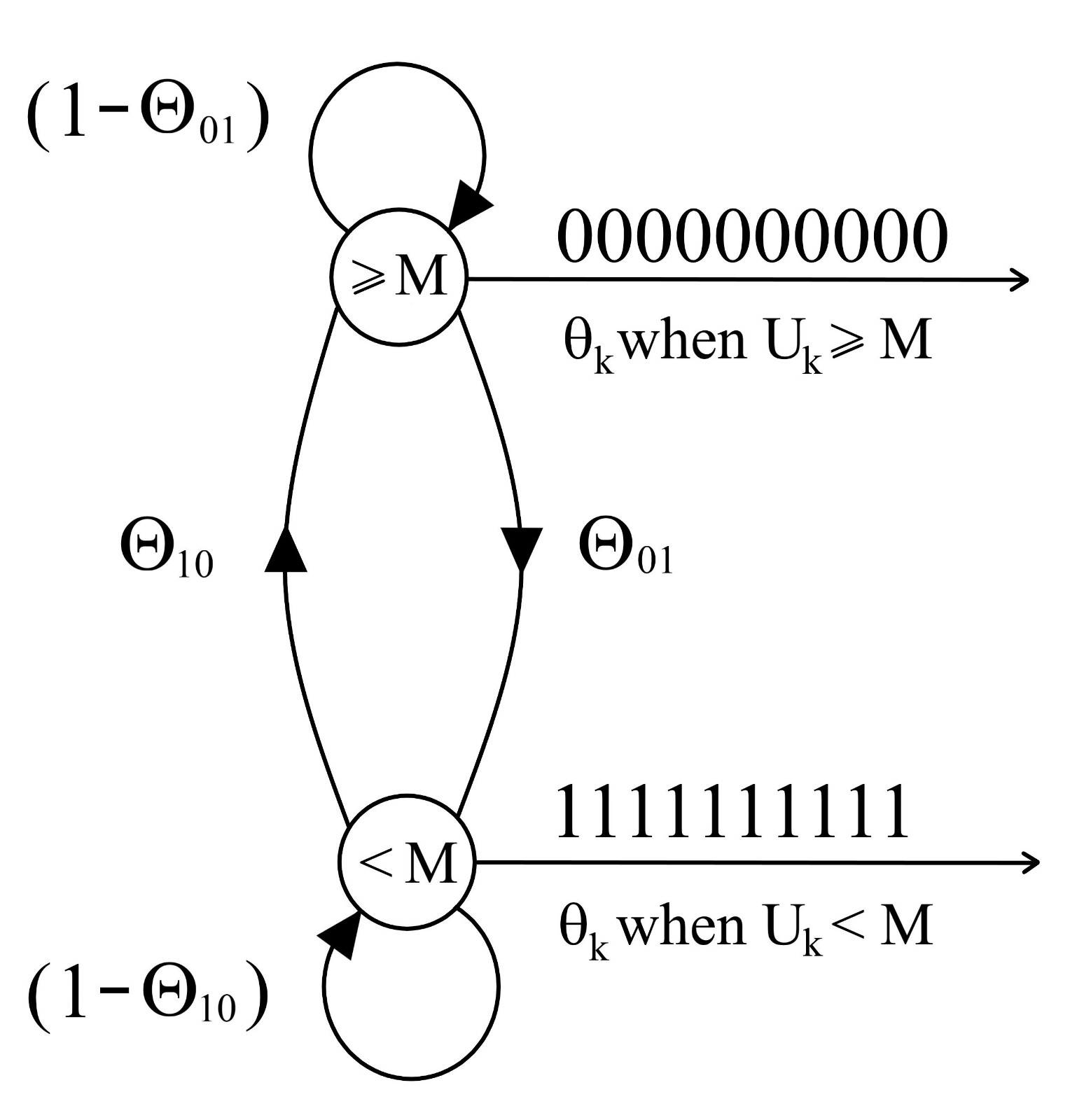}
    \caption{}
    \label{fig: 12a}
    \end{subfigure} \hfill
    \begin{subfigure}{0.65\textwidth}
        \centering
        \raisebox{0.05\height}{\includegraphics[width = \linewidth]{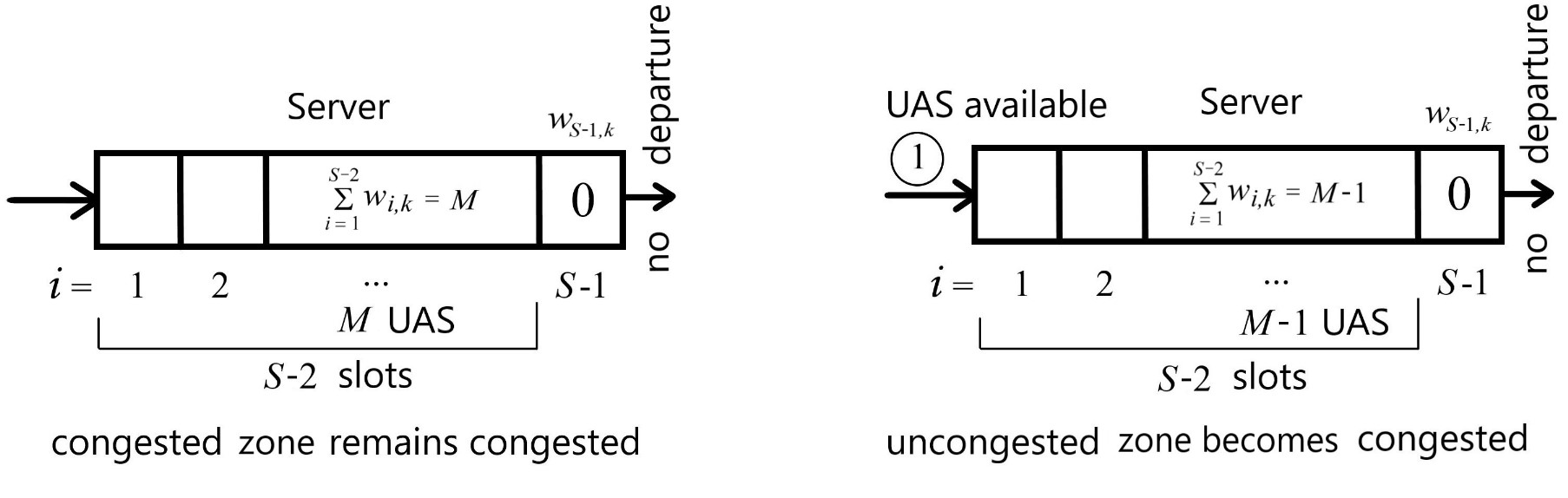}}
    \caption{}
    \label{fig: 12b}
    \end{subfigure}
    \caption{a) Transition diagram for congestion $\theta_k \in \{0,1\}$ modeled as a Markov modulated on/off regular process. b) Condition under which the $Z_{\mathrm{UP}}$ if congested (server busy) remains congested, and if uncongested (server idle) becomes congested the next timeslot. }
\end{figure}

\subsubsection{Markov-modulated regular process}
An MMRP (shown in Fig. \ref{fig: 12a}) is a two-state Markov chain, where the state of the Markov chain depends only on the state of a two-state background process. Here, $\theta_k \in \{0,1\}$ is the Markov state, with $\Theta_0$ and $\Theta_1 = 1-\Theta_0$ being the probability that $Z_{\mathrm{UP}}$ is congested and uncongested, respectively. The number of UAS ($n_k$) present in service is the background process with two states $n_k\geq M$ and $n_k<M$. If $n_k\geq M$ then $\theta_k = 0$ and if $n_k < M$ then $\theta_k = 1$. Let $\Theta_{10} = \mathrm{P}[n_{k+1} \geq M \vert n_k < M] = \mathrm{P}[\theta_{k+1} = 0\vert\theta_k = 1]$ is the transition probability that $Z_{\mathrm{UP}}$, if uncongested in timeslot $k$ becomes congested in timeslot $k+1$, and $\Theta_{01} = \mathrm{P}[n_{k+1} < M \vert n_k \geq M] = \mathrm{P}[\theta_{k+1} = 1\vert\theta_k = 0]$ is the transition probability that $Z_{\mathrm{UP}}$, if congested in timeslot $k$ becomes uncongested in timeslot $k+1$, respectively. Here, $\Theta_{00} = (1-\Theta_{01})$ is the transition probability that $Z_{\mathrm{UP}}$ remains congested and $\Theta_{11}=(1-\Theta_{10})$ is the transition probability that it remains uncongested. Let $\pi_e = (\pi + \lambda_e - \pi\lambda_e)$ denote the steady-state probability that a UAS is available to enter service (including exogenous arrivals) in any given timeslot. 

If $M$ UAS were available to enter service in a time interval of $S-1$ timeslots, then the $Z_{\mathrm{UP}}$ would be congested in the $(S-1)$th timeslot. The $Z_{\mathrm{UP}}$ would remain congested if no UAS are departing in the next timeslot; that is, there are no UAS that have received $S-1$ slots of service as shown in Fig. \ref{fig: 12b}. This event can be summarized in Eqn \ref{eqn: congested-congested}.
\begin{align}
    &\Theta_{00}\ =\ \cfrac{\mathrm{P}[\theta_{S} = 0, \theta_{S-1} = 0]}{\mathrm{P}[\theta_{S-1} = 0]} \ =\ \cfrac{\left[{S-2\choose M} \pi_e^{M} (1-\pi_e)^{S-M-2}\right](1-\pi_e)}{\left[ {S-1 \choose M}  \pi_e^M (1-\pi_e)^{S-M-1} \right]}\ =\ \cfrac{S-M-1}{S-1}
    \label{eqn: congested-congested}
\end{align}

 If $M-1$ UAS were available to enter service in the time interval of $S-1$ timeslots, then the $Z_{\mathrm{UP}}$ would be uncongested in the $(S-1)$th timeslot. Given $Z_{\mathrm{UP}}$ is uncongested, and if none of the $M-1$ UAS already in service departs, and at least one UAS is available to enter the service, then the $Z_{\mathrm{UP}}$ would become congested in the next timeslot (shown in Fig. \ref{fig: 12b}). This situation can be summarized in Eqn \ref{eqn: uncongested-congested}.
\begin{align}
    \Theta_{10}\ =\ \cfrac{\mathrm{P}[\theta_{S} = 0, \theta_{S-1} = 1]}{\mathrm{P}[\theta_{S-1} = 1]} \ =\ \cfrac{\pi_e\left[{S-2\choose M-1}\pi_e^{M-1} (1-\pi_e)^{S-M-1}\right] (1-\pi_e) }{\sum_{n = 0}^{M-1} \left[{S-1\choose n} \pi_e^{n} (1-\pi_e)^{S-n-1}  \right]}
    \label{eqn: uncongested-congested}
\end{align}

The Markov modulated congestion probability following MMRP derivations in \citet{smiesko2023markov} is given by,
\begin{align}
    \Theta_0^*\ =\ \Theta_{00} \Theta_0 + \Theta_{10} \Theta_1 \ \ \ \ , \ \ \ \
    \Theta_1^*\ =\ \Theta_{01} \Theta_0 + \Theta_{11} \Theta_1 \label{eqn: MMRP}
\end{align}

\subsubsection{Markov modulated Bernoulli process (MMBP)} 
In a given timeslot, whether a UAS would enter service or not is correlated to the $Z_{\mathrm{UP}}$ being congested and UAS being available. If $\theta_k = 0$, then $w_{1,k} = 0$. However, when $\theta_k = 1$, if UAS is available then $w_{1,k} =1$ with probability $\pi$ else $w_{1,k} = 0$ with probability $(1-\pi)$. Note that $w_{1,k}$ is an MMBP, which refers to the Markov chain that generates a sequence of zeros when the background process is in one state and generates a Bernoulli process when the background process is in the second state  (as shown in Fig. \ref{fig: 13}). Here, $w_{1,k}\in \{1,0\}$ is the two-state Markov chain with $\theta_k\in\{0,1\}$ being the background process. The Markov modulated probability that a UAS would enter service following MMBP derivations in \citet{smiesko2023markov} is given by
\begin{align}
   \mathrm{P}[w_{1,k} = 1]\ =\ 1 - W_1^*(0)\ =\ [\ \Theta_{11} \Theta_1^* +   \Theta_{01} \Theta_0^*\ ] \pi \label{eqn: MMBP}
\end{align}
\FloatBarrier
\begin{figure}[h!]
    \centering
    \includegraphics[width = 0.25\linewidth]{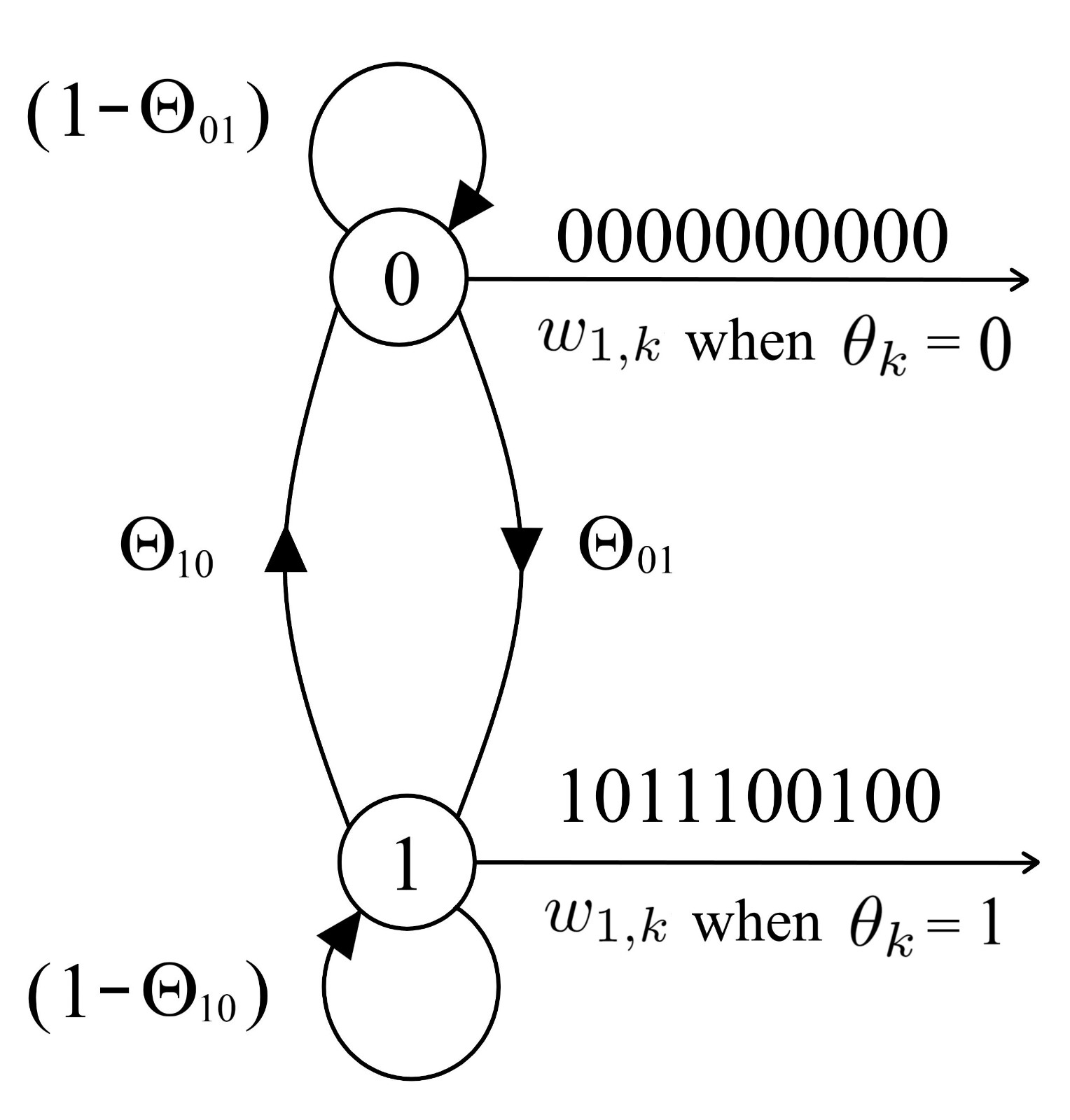}
    \caption{Transition diagram for $w_{1,k} \in \{0,1\}$ modeled as a Markov modulated Bernoulli process.}
    \label{fig: 13}
\end{figure}

The average number of UAS present in the service and the queue can be obtained by substituting $\Theta_0^*$ and $W_1^*(0)$ in $U(z)$ and $ V_L(z)$ and computing $U^{(1)}(1)$ and $V_L^{(1)}(1)$, respectively. The sum of the average number of UAS in the queue and the service is the expected number of UAS present in zone $Z_{\mathrm{UP}}$.

\subsubsection{Queue overflow}
The UAS that have queued for $L+1$ slots are lost from the queue and become non-nodal arrivals $a_{\mathrm{I},k}$ to the neighboring Stream$\{\pm1\}$ queueing systems. The steady-state probability that in timeslot $k$, a UAS queuing would be lost from the system is called Queue overflow probability $\Phi$. Let $q_k \in\{0,1\}$ be a random variable such that in the given timeslot $k$, $q_k = 1$ denotes that there exists a UAS that has queued $L$ timeslots and $0$ otherwise.
\begin{align}
    v_{L,k} &= v_{L-1,k} + q_k
\end{align}
As $v_{L,k} \in \{0,1,...,L\} \ ,\ v_{L-1,k} \in \{0,1,...,L-1\}$, we have, 
\begin{align}
    &\mathrm{P}[v_{L,k} = j]\ =\ \mathrm{P}[v_{L-1,k} = j - 1]\ \mathrm{P}[q_{k} = 1]\ + \mathrm{P}[v_{L-1,k} = j]\ \mathrm{P}[q_{k} = 0] \ \ \ , \ \ \ j = 1,...,L-1 \label{eqn: Lost UAS 1} \\
    &\mathrm{P}[v_{L,k} = L]\ =\ \mathrm{P}[v_{L-1,k} = L-1]\ \mathrm{P}[q_{k} = 1] \label{eqn: Lost UAS 2}
\end{align}
Summing Eqns. \ref{eqn: Lost UAS 1}-\ref{eqn: Lost UAS 2} for $j = 1,...,L$ and simplifying we get,
\begin{align}
    &\sum_{j = 1}^{L} \mathrm{P}[v_{L,k} = j]\ =\ \mathrm{P}[q_{k} = 1]\ \sum_{j = 1}^{L} \mathrm{P}[v_{L-1,k} = j-1]  +  \mathrm{P}[q_{k} = 0]\ \sum_{j = 1}^{L-1} \mathrm{P}[v_{L-1,k} = j-1] \\
    &(1-V_L(0))\ =\ \mathrm{P}[q_{k} = 1] + \mathrm{P}[q_{k} = 0] (1-V_{L-1}(0)) \\
    &\implies \mathrm{P}[q_{k} = 1] = \left(\cfrac{V_{L-1}(0) - V_{L}(0)}{V_{L-1}(0)}\right) \label{eqn: Lost UAS 3}
\end{align}
The UAS corresponding to $q_k = 1$ has the least preference ($\psi_Z = -L$) amongst other UAS present in the queue $(\psi_Z \in \{0,-1,...,-L+1\})$, including the nodal arrival. If the $Z_{\mathrm{UP}}$ is uncongested in timeslot $k$ and there is at least one UAS in the queue that has preference $\psi_Z > -L$. Then, by priority, the UAS that has the higher preference enters service, forcing the UAS corresponding to $q_k = 1$ to queue in the given timeslot. This UAS would have queued for $L+1$ slots and hence is lost from the system. In steady-state, when $Z_{\mathrm{UP}}$ is congested, the fraction of UAS arrivals that entered the queue but failed to enter service when $Z_{\mathrm{UP}}$ was uncongested are lost during the congestion period. Notice that when the $Z_{\mathrm{UP}}$ is congested, as well as when uncongested, the UAS may exit the system. Thus, the Queue overflow probability is given by,
\begin{align}
    \Phi \ =\ (1-\Theta_0^*)\left[\ 1-A(0)\ V_{L-1}(0)\ \right]\ \mathrm{P}[q_{k} = 1]\  +\ \Theta_0^*\ (1-A(0))\ (1-V_L(0))\ W_1^*(0)\ +\ \xi(1-A(0)) \label{eqn: 0 overflow}
\end{align}
The queue overflows and departures are correlated processes that are better modeled as a higher-order Markov model (order $L$ or higher) with covariate dependencies in transition probabilities (\citet{ATAHARULISLAM2006477}). However, for ease of analysis, we use a first-order Markov model and compensate for modeling errors by introducing an empirical correction factor $\xi(x)$ as in Eqn. \ref{eqn: 0 overflow}. The factor $\xi(x)$ is significant only when the UAS arrival rate $x$ is low.  
\begin{align}
    \xi(x) = 0.15\left(\cfrac{x}{M-1 + \zeta}\right)\exp \left( {-S\left(\cfrac{x}{M-1 + \zeta}\right)^3} \right) \approx (2-10\%) x\ \ ,\  \ \zeta = \begin{cases}
        2\eta & \text{if\ } \eta \in [0,0.5] \\
        2(1-\eta) & \text{if\ } \eta \in [0.5,1]
    \end{cases} \label{eqn: correction factor}
\end{align}

Following Rule \ref{rule: 8}, the steady-state probability distribution of the non-nodal arrivals $a_{\mathrm{I},k}$ for the neighboring Stream$\{-1\}$ queuing system is given by 
\begin{align}
    A_{\mathrm{I}}(z) = (1- \eta \Phi) + z \eta \Phi \label{eqn: left non-nodal arrival}
\end{align}
The non-nodal arrival probability distribution $A_{\mathrm{I}}(z)$ for neighboring Stream$\{+1\}$ system is given by
\begin{align}
    A_{\mathrm{I}}(z) = (1- (1-\eta)\Phi) + z (1-\eta)\Phi \label{eqn: right non-nodal arrival}
\end{align}

\subsection{Stream$\{X\}$ queuing system, $X\neq 0$}
The schematic for the Stream$\{X\}$ queuing system associated with Stream$(X)$ zone is shown in Fig. \ref{fig: 14}. A Stream$\{X\}$ queuing system $\overline{Z}$ has nodal arrivals $\{a_k\}$ and non-nodal arrivals $\{a_{\mathrm{I},k}, a_{\mathrm{O},k}\}$. These arrivals are departures and overflows from the preceding systems. Assume that the PGF of these arrivals have already been computed. Let $b_k$ be a random variable such that $b_k = 1$ if a nodal arrival in timeslot $k$ would descend to $Z_{\mathrm{ID}}$ zone and is $0$ otherwise. Corresponding descend probability $\sigma_{\mathrm{I}}$ and PGF $B(z) = (1-\sigma_{\mathrm{I}}) + z\sigma_{\mathrm{I}}$ are available, as $\sigma_{\mathrm{I}}$ is dependent only on the arrivals and UAS queuing in the $\overline{Z}_{\mathrm{IN}}$ system (refer Eqn. \ref{eqn: UAS available 0}). The non-nodal arrival $a_{\mathrm{I},k}$, on finding the $\overline{Z}$ system busy, would enter the $\beta_Z$ queue. The nodal arrival $a_k$, on finding the $\overline{Z}$ system busy and \textit{descend condition} unsatisfied, would enter the $\gamma_Z$ queue. 
\begin{figure}[h!]
    \centering
    \includegraphics[width = \linewidth]{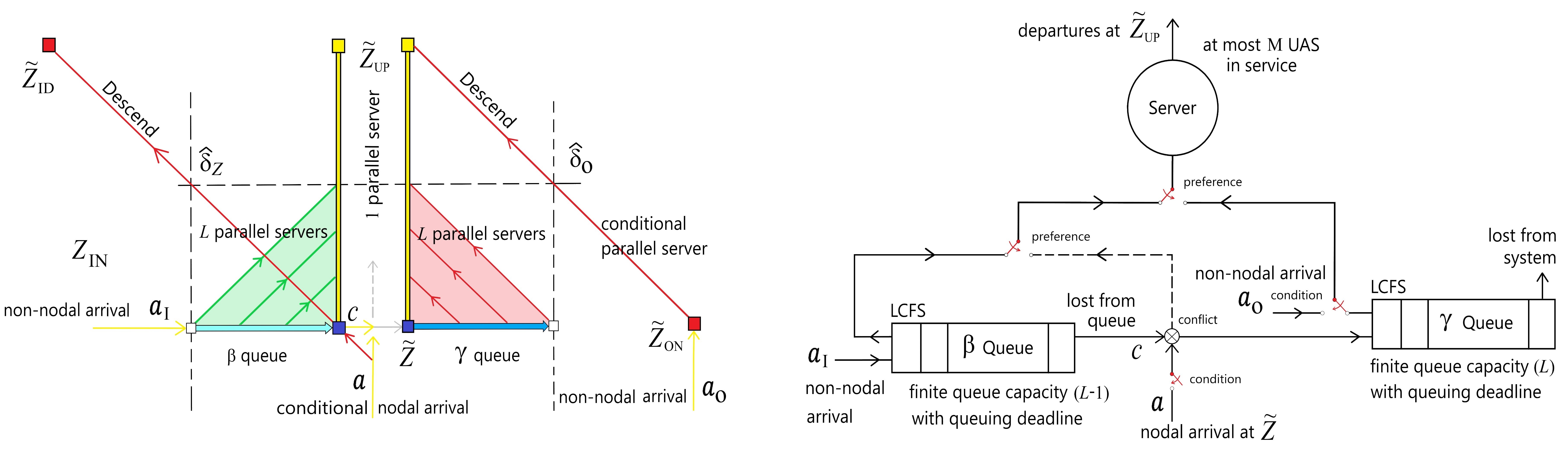}
    \caption{Stream$\{X\}$ queuing system, $X\neq 0$ }
    \label{fig: 14}
\end{figure}

Let ${}^\beta v_{j,k} \in \{0,1,...,j\},\ j \in \{1,..., L-1\}$ be random variables such that at timeslot $k$, ${}^\beta v_{j,k}$ gives the total number of UAS currently in the $\beta_Z$ queue that have queued for no more than $j$ slots since entering the queue. Here, ${}^\beta V_j(z)$ are respective PGFs and, ${}^\beta v_{L-1,k}$ is the total number of UAS queuing in the $\beta_Z$ queue. The non-nodal arrival $a_{\mathrm{I},k}$ on finding $Z_{\mathrm{UP}}$ zone congested for $L$ successive slots would enter the node $\widetilde{Z}$. Thus, for UAS queuing in the $\beta_Z$ queue, there is a $L-1$ slot queuing deadline to enter service, beyond which these UAS are lost from the $\beta_Z$ queue. The UAS that exits the $\beta_Z$ queue re-enters the system via node $\widetilde{Z}$. Hence, these UAS are also treated as nodal arrivals $c_k$ to the system with PGF $C(z)$. If there are simultaneous $c_k$ and $a_k$ nodal arrivals in timeslot $k$, then there is conflict at node $\widetilde{Z}$. In conflict, the nodal arrival $c_k$ enters service (Rule \ref{rule: 6}). Whereas the non-nodal arrival $a_{\mathrm{I},k}$ and the UAS queuing in $\beta_Z$ queue in the given timeslot. In timeslots with no conflict, the UAS in the $\beta_Z$ queue that has queued for the least number of slots has the highest preference and enters service. This is the LCFS discipline. 

Let ${}^\beta q_k \in\{0,1\}$ be a random variable that denotes if there exists a UAS that has queued for $L-1$ slots in $\beta_Z$ queue in timeslot $k$. When $Z_{\mathrm{UP}}$ is uncongested, the UAS corresponding to ${}^\beta q_k = 1$ is lost from the $\beta_Z$ queue if,
\begin{enumerate}
    \item there are non-nodal arrivals ($a_{\mathrm{I},k} = 1$) or
    \item there are UAS in $\beta_Z$ queue (${}^\beta v_{L-2,k} > 0 $) that have higher preference than the UAS corresponding to ${}^\beta q_k = 1$.
\end{enumerate}

In steady-state, when $Z_{\mathrm{UP}}$ is congested, the fraction of UAS arrivals that entered the $\beta_Z$ queue but failed to enter service when $Z_{\mathrm{UP}}$ was uncongested are lost during the congestion period. The above behavior is similar to that of Stream$\{0\}$ queue, where $c_k$ arrivals are queue overflows from $\beta_Z$ queue. Thus, similar to Eqns. \ref{eqn: Lost UAS 3}-\ref{eqn: 0 overflow}, the PGF of nodal arrivals $c_k,\ k \longrightarrow \infty$ is given as follows. 
\begin{align}
    {}^\beta v_{L-1,k} &=\ {}^\beta v_{L-2,k} + {}^\beta q_k \ \ \ \ , \ \ \ \
    \mathrm{P}\left[{}^\beta q_k = 1\right] = \left( \cfrac{{}^\beta V_{L-2}(0) - {}^\beta V_{L-1}(0)}{{}^\beta V_{L-2}(0)} \right) \\
    \mathrm{P}[c_k = 1] &= (1-\Theta_0)\left[  1 - A_{\mathrm{I}}(0) {}^{\beta} V_{L-2}(0) \right] \mathrm{P}\left[{}^\beta q_k = 1\right]  + \Theta_0 (1-A_{\mathrm{I}}(0)) \left(1-{}^\beta V_{L-1}(0)\right) W_1(0) + \xi(1-A_{\mathrm{I}}(0)) \label{eqn: conflict arrival 1} \\
    C(z) &=  \left(1- \mathrm{P}[c_k = 1]\right) + z\ \mathrm{P}[c_k = 1] \label{eqn: conflict arrival 2}
\end{align}

The LCFS dynamic equations for UAS queuing in the $\beta_Z$ queue are as follows.
If the system is busy ($\theta_k = 0$) or there is a conflict ($a_k = c_k = 1$) then
\begin{align}
    {}^{\beta} v_{1,k+1} = a_{\mathrm{I},k}  \ \ \ \ , \ \ \ \ {}^{\beta} v_{j+1,k+1} = {}^{\beta} v_{j,k} + a_{\mathrm{I},k}\ \ \ \ ,\ \ \ \ j \in \{1,...,L-2\}  \label{eqn: gamma queuing 1}
\end{align}
If the system is idle ($\theta_k = 1$) and there is no conflict ($a_k = c_k = 0$ or $a_k \neq c_k$) then
\begin{align}
    {}^{\beta} v_{1,k+1} = (a_{\mathrm{I},k} - 1)^+ = 0  \ \ \ \ , \ \ \ \ {}^{\beta} v_{j+1,k+1} = ({}^{\beta} v_{j,k+1} +a_{\mathrm{I},k} - 1)^+\ \ \ \ ,\ \ \ \ \ j \in \{1,...,L-2\}  \label{eqn: gamma queuing 3} 
\end{align}
Eqns. \ref{eqn: gamma queuing 1}-\ref{eqn: gamma queuing 3} are similar to Eqns. \ref{eqn: UAS queuing 1}-\ref{eqn: UAS queuing 4}. Thus, $ {}^\beta V_j(z),\ j \in \{1,..., L-1\}$ for the $\beta$ queue can be derived using the conditional PGF as,
\begin{align}
    &\Omega = (1-C(0))(1-A(0)) \label{eqn: conflict probability}\\
    &{}^\beta V_1(z) = A_{\mathrm{I}}(z)(\Theta_0 + \Omega - \Theta_0\Omega)  + (1 - \Theta_0)(1-\Omega) \label{eqn: similar equation 1}\\
    &{}^\beta V_{j+1}(z) 
     = A_{\mathrm{I}}(z) {}^{\beta}V_{j}(z) (\Theta_0 + \Omega - \Theta_0\Omega) + \left[{}^\beta V_j(0) + \left(\cfrac{{}^\beta V_j(z) - {}^\beta V_j(0)}{z}\right)A_{\mathrm{I}}(z) \right] (1- \Theta_0)(1-\Omega) \label{eqn: similar equation 2} 
\end{align}

where $\Omega$ is the conflict probability. The conflict probability $\Omega$ is dependent on the probability $(1-C(0))$, which is not known. If $\Omega = 0$, then Eqns. \ref{eqn: similar equation 1}-\ref{eqn: similar equation 2} are same as the Eqn. \ref{eqn: rouche} of Stream$\{0\}$ queuing system. Assuming $\Omega = 0$, by substituting $A(z) = A_{\mathrm{I}}(z)$ in Eqn. \ref{eqn: rouche} and solving for queue overlow probability $\Phi$ using Eqns. \ref{eqn: rouche}-\ref{eqn: 0 overflow}, we get a good estimate for the probability $(1 -C(0)) = \Phi$. Then, conflict probability $\Omega$ can be computed using Eqn. \ref{eqn: conflict probability}. By recursively substituting ${}^\beta V_j(z)$ in ${}^\beta V_{j+1}(z)$, we get the PGF ${}^\beta V_{L-1}(z)$. The nodal arrivals $c_k$ and $a_k$ that fail to enter service in timeslot $k$ would enter the $\gamma_Z$ queue in the next timeslot. Let ${}^\gamma v_{j,k}  \in \{0,1,...,j\},\ j \in \{1,..., L\}$ be random variables such that at timeslot $k$, ${}^\gamma v_{j,k}$ give the total number of UAS currently in the $\gamma_Z$ queue, that have queued for no more than $j$ timeslots since entering the queue. Here, ${}^\gamma V_j(z)$ are respective PGFs and, ${}^\gamma v_{L,k}$ is the total number of UAS queuing in the $\gamma_Z$ queue. The queuing behavior of the $\gamma_Z$ queue when the system is busy or idle, when the \textit{descend condition} is satisfied or not, and when there is a conflict at node $\widetilde{Z}$ or not can be summarized in the following six scenarios. Note that $a_k,\ c_k$ and ${}^\gamma v_{j,k}$ are independent at timeslot $k$. For $j \in \{1,...,L-1\}$,

\begin{enumerate}[nosep]
\item If there is a conflict ($a_k = c_k = 1$) due to simultaneous nodal arrivals $a_k$ and $c_k$ and,
\begin{enumerate}
\item the \textit{descend condition} is satisfied ($b_k = 1$) with probability $\Omega(1-B(0))$, then by Rule \ref{rule: 6}, the nodal arrival $c_k$ enters service and the nodal arrival $a_k$ descends to $Z_{\mathrm{ID}}$ zone. No arrival enters the $\gamma_Z$ queue. The UAS present in the $\gamma_Z$ queue would queue in the given timeslot.
\begin{alignat}{2}
    {}^\gamma v_{1,k+1} = 0\ \ ,\ \ {}^\gamma v_{j+1,k+1} = {}^\gamma v_{j,k} \ \ \ \ \implies \ \ \ \ {}^\gamma V_1(z) = 1 \ \ , \ \ {}^\gamma V_{j+1}(z) ={}^\gamma V_j(z) \label{eqn: beta queuing 1}
\end{alignat}
\item the \textit{descend condition} is not satisfied ($b_k = 0$) with probability $\Omega B(0)$, then the arrival $c_k$ would enter service and arrival $a_k$ would enter $\gamma_Z$ queue. 
\begin{alignat}{2}
{}^\gamma v_{1,k} = 1\ \ , \ \ {}^\gamma v_{j+1,k+1} = {}^\gamma v_{j,k} + a_k\ \ \ \ \implies \ \ \ \ {}^\gamma V_1(z) = A(z)  \ \  , \ \ {}^\gamma V_{j+1}(z) = A(z)\ {}^\gamma V_{j}(z)    
\end{alignat}
\end{enumerate}

\item If there is no conflict ($a_k = c_k = 0$ or $a_k \neq c_k$), that is, both $a_k=1$ and $c_k=1$ do not occur simultaneously and,
\begin{enumerate}
    \item if the system is busy ($\theta_k = 0$) or if the system is idle ($\theta_k = 1$) but there are UAS with higher preference present ($a_{\mathrm{I},k} = 1$ or ${}^\beta v_{L-1,k} > 0$) with probability $\rho$ then the UAS in the $\gamma_Z$ queue do not enter service and, 
    \begin{align}
    \ \ \rho\ =\  \mathlarger{[}\ \Theta_0 + (1-\Theta_0)\mathlarger{(}(1 - A_{\mathrm{I}}(0)) + A_{\mathrm{I}}(0)(1 - {}^\beta V_{L-1}(0)) \mathlarger{)}\ \mathlarger{]}\ =\  1 - (1-\Theta_0)\ A(0)\ {}^\beta V_{L-1}(0)
    \end{align} 
    \begin{enumerate}
        \item if the descend condition is satisfied ($b_k = 1$), then nodal arrival $a_k$ descends to $Z_{\mathrm{ID}}$ zone and the nodal arrival $c_k$ enters the $\gamma_Z$ queue with probability $\rho(1-B(0))(1-\Omega)$.
        \begin{alignat}{2}
        {}^\gamma v_{1,k} = c_k \ , \ {}^\gamma v_{j+1,k+1} = {}^\gamma v_{j,k} + c_k \ \ \implies \ \ {}^\gamma V_1(z) = C(z) \ ,\ {}^\gamma V_{j+1}(z) =\ {}^\gamma V_{j}(z) C(z)
        \end{alignat}
        \item if the descend condition is not satisfied ($b_k = 0$), then nodal arrivals $a_k$ and $c_k$ enter the $\gamma_Z$ queue with probability $\rho B(0) (1-\Omega)$.
        \begin{alignat}{2}
        &{}^\gamma v_{1,k} = a_k + c_k \ &&, \ \ \ \ {}^\gamma v_{j+1,k+1} =\ {}^\gamma v_{j,k} + a_k + c_k \notag \\
        \implies \ \ &{}^\gamma V_1(z) = A(z)C(z) \ &&,\ \ \ \ {}^\gamma V_{j+1}(z) = {}^\gamma V_{j}(z) A(z) C(z)
        \end{alignat}
    \end{enumerate}

    \item if the system is idle ($\theta_k = 1$) and no UAS with higher preference is present ($a_{\mathrm{I},k} = 0$ and ${}^\beta v_{L-1,k} < 0$) with probability $(1-\rho)$ and,
    \begin{enumerate}
        \item if the \textit{descend condition}  is satisfied ($b_k = 1$) with probability $(1-\rho) (1-B(0))(1-\Omega)$, then nodal arrival $a_k$ descends to $Z_{\mathrm{ID}}$. If nodal arrival $c_k = 1$, then following LCFS discipline, the nodal arrival enters service. Otherwise, the UAS in the $\gamma$ queue that has arrived last enters service, thus reducing the number of UAS queuing in the $\gamma$ queue by one.
    \begin{align}
    & {}^\gamma v_{1,k} = (c_k-1)^+ = 0 \ \  ,  &&{}^\gamma v_{j+1,k+1} =\ {}^\gamma (v_{j,k} + c_k -1)^+ \notag \\
    \implies \ & {}^\gamma V_1(z) = 1 \ \ ,   && {}^\gamma V_{j+1}(z) =\ {}^\gamma V_j(0) + \left(\cfrac{{}^\gamma V_j(z) - {}^\gamma V_j(0)}{z}\right)C(z)
    \end{align}
    \item if the \textit{descend condition} is not satisfied ($b_k = 0$) with probability $(1-\rho) B(0)(1-\Omega)$, then the UAS that has arrived last enters service.   
    \begin{align}
    & {}^\gamma v_{1,k} = (a_k + c_k-1)^+ = 0 \ , &&  {}^\gamma v_{j+1,k+1} = {}^\gamma (v_{j,k} + a_k + c_k -1)^+ \notag \\
    \implies \ & {}^\gamma V_1(z) = 1 \ , &&  {}^\gamma V_{j+1}(z) =\ {}^\gamma V_j(0) + \left(\cfrac{{}^\gamma V_j(z) - {}^\gamma V_j(0)}{z}\right)A(z)C(z) \label{eqn: beta queuing 10}
    \end{align}
    \end{enumerate}
\end{enumerate}
\end{enumerate}
The PGF ${}^\gamma V_j(z)$ can be obtained by summing the above conditional PGFs in Eqns. \ref{eqn: beta queuing 1}-\ref{eqn: beta queuing 10} using the total probability theorem. By recursively substituting ${}^\gamma V_j(z)$ in ${}^\gamma V_{j+1}(z)$, we get ${}^\gamma V_{L}(z)$. In the timeslot $k$, a UAS will enter service if there is a conflict or if the system is idle and there are UAS available to enter service. In timeslot $k$, a UAS will be available to enter service if,
\begin{enumerate}
    \item there is a non-nodal arrival ($a_{\mathrm{I},k} = 1$) or,
    \item there are UAS in $\beta_Z$ queue (${}^\beta v_{L-1,k} > 0$) or,
    \item there is a nodal arrival ($c_k = 1$ or $a_k = 1$ with $b_k = 0$) or,
    \item there are UAS in $\gamma_Z$ queue (${}^\gamma v_{L,k} > 0$) or,
    \item there is a non-nodal arrival ($a_{\mathrm{O},k} = 1$) that descends towards $Z_{\mathrm{UP}}$ when $\textit{descend condition}$ is satisfied.
\end{enumerate}
The probability $\pi$ that a UAS is available to enter service in the given timeslot is given by,
\begin{align}
    &\sigma_Z = A_{\mathrm{I}}(0)\ {}^\beta V_{L-1}(0) \ C(0)\ [ 1-B(0)(1- A(0)) ]\ {}^\gamma V_L(0)\ \ \ \ \ , \ \ \ \ \pi = (1-\sigma_Z) + \sigma_Z (1-A_{\mathrm{O}}(0)) \label{eqn: UAS available X}
\end{align}
where $\sigma_Z$ is the descend probability that a non-nodal arrival at node $\widetilde{Z}_\mathrm{ON}$ would descend into $Z_{\mathrm{UP}}$ zone and $1 -B(0) = \sigma_{\mathrm{I}}$ is the descend probability that the nodal arrival at node $\widetilde{Z}$ will descend into $Z_{\mathrm{ID}}$ zone. The probability that a UAS will enter the service in a given timeslot is given by,
\begin{align}
    &P[w_{1,k} = 1] = \Omega + (1-\Omega) (1-\Theta_0) \pi\ , \ \ \ \ \ \ W_1(0) = 1- P[w_{1,k} = 1] \label{eqn: Stream X feedback}
\end{align}

The Eqn. \ref{eqn: Stream X feedback} is the feedback relation that captures congestion responsible for UAS queuing in a Stream$\{X\}$ queue. Substituting Eqn. \ref{eqn: conflict arrival 2} in Eqn. \ref{eqn: Stream X feedback} and simplifying for $W_1(0)$, we get $W_1(0)$ as a function of $\Theta_0$. Substituting Eqn. \ref{eqn: Stream X feedback} in Eqn. \ref{eqn: forward}, we get a polynomial equation in $\Theta_0$. Numerically solving the polynomial equation for zeros, we get the congestion probability $\Theta_0$. The existence proof for the $\Theta_0$ in $[0,1]$ is similar to the proof of Theorem \ref{thm: Existence}. Substituting
$\Theta_0$ in $\pi$, $\sigma_Z$, ${}^\beta V_j(z),\ j \in \{1,...,L-1\}$ and ${}^\gamma V_j(z),\ j \in \{1,...,L\}$, we get the respective probabilities. As $\theta_k$ is an MMRP and $w_{1,k}$ is an MMBP, the respective Markov modulated probabilities $\Theta_0^*$ and $W_1^*(0)$ are given by Eqns. \ref{eqn: MMRP}- \ref{eqn: MMBP}. The average number of UAS present in the service and queue of Stream$\{X\}$ queuing system can be obtained by substituting $\Theta_0^*$ and $W_1^*(0)$ in $U(z)$, ${}^\beta V_{L-1,k}$ and ${}^\gamma V_{L,k}$ and computing $U^{(1)}(1)$, ${}^\beta V^{(1)}_{L-1}(1)$ and ${}^\gamma V^{(1)}_L(1)$, respectively.

The UAS that have queued for $L + 1$
slots in the $\gamma_Z$ queue are lost from the $\overline{Z}$ queuing system and become non-nodal arrivals $a_{\mathrm{I},k}$ for the $\overline{Z}_{\mathrm{ON}}$ queuing system. If the $Z_{\mathrm{UP}}$ is uncongested in timeslot $k$, but there are UAS with higher preference present that would be entering service, then the UAS corresponding to ${}^\gamma q_k = 1$ is lost from the system. When $Z_{\mathrm{UP}}$ becomes congested, under steady-state, the fraction of non-nodal arrivals $a_{\mathrm{I},k}$ and nodal arrivals $a_k$, that end up in the $\gamma_Z$ queue, but fail to enter service when $Z_{\mathrm{UP}}$ was uncongested are lost during the congestion period. The steady-state queue overflow probability $\Phi$ for the $\overline{Z}$ queuing system is given by,
\begin{align}
& {}^\gamma v_{L,k} = {}^\gamma v_{L-1,k} + {}^\gamma q_k \ \ \ \ ,\ \ \ \ \mathrm{P}[{}^\gamma q_k = 1] = \left(\cfrac{{}^\gamma V_{L-1}(0) - {}^\gamma V_L(0)}{{}^\gamma V_{L-1}(0)}\right) \\
& {}^\gamma \Phi = (1-\Theta_0^*)\left(1-W_1^*(0)\right) \mathrm{P}[{}^\gamma q_k = 1] + \Theta_0^*\ ( 1-A_{\mathrm{I}}(0) + (1-A(0))B(0) ) \left(1-{}^{\gamma}V_{L}(0)\right) W_1^*(0) + \xi(1-C(0))
\end{align}

The PGF for the steady-state probability distribution of the non-nodal arrivals $a_{\mathrm{I},k}$ for the neighboring $\overline{Z}_{\mathrm{ON}}$
queuing system is given by 
\begin{align}
    A_{\mathrm{I}}(z) = (1-{}^\gamma \Phi) + z\ {}^\gamma \Phi \label{eqn: non-nodal arrival}
\end{align}

\subsection{Service departures}
Knowing congestion probability $\Theta_0^*$, the PGF of the probability distribution that UAS would depart the queuing system $\overline{Z}$ after receiving service is given by Eqn. \ref{eqn: 0 departures}. The UAS departures from the system $\overline{Z}$ are $a_{k}$ nodal arrivals for its neighboring upstream system $\overline{Z}_{\mathrm{UP}}$ given by
\begin{align}
    A(z) &= W_{S-1}(z) = W_1^*(0) + z(1-W_1^*(0)) \label{eqn: 0 departures}
\end{align}

The algorithmic procedure for estimating the expected UAS traffic spread in the finite grid $Z = (X, Y)$, where $X = \{-X_e,...,0,... X_e\}, \ Y = \{1,..., Y_e\}$, for a given UAS arrival rate $\lambda$ at the source is given in Algorithm \ref{alg: expected spread}.

\begin{algorithm}
    \SetAlgoLined   
    \raggedright For zone $Z = (0,1)$ that encloses the source node, initialize $A(z) = (1-\lambda)+\lambda z, \ A_{\mathrm{O}}(z) = 1$ ; \\
    \raggedright For remaining Level$(1)$ zones, that is $Z = (X,1), \ X \neq 0$, initialize $A(z) = A_{\mathrm{O}}(z) = 1$ ; \\ 
    \raggedright For zones intersected by exogenous traffic stream, initialize $E_{0}(z) = (1-\lambda_e)+\lambda_e z$ ; \\
    \raggedright For all zones, initialize the design parameters $L,\ M$ and $\eta$. Start with zone $Z = (X,Y) = (0,1)$ ;\\
    \While{$Y\in \{1,...,Y_e\}$ }{
            \While{$X\in \{0,...X_e\}$}{

            \uIf{for zone $Z = (X,Y)$, the $Z_{\mathrm{UP}}$ is a no-fly zone}{Assign $\Theta_0^* = 1$ for queuing system $\overline{Z} = (X,Y)$ ; }
            \uElseIf{for zone $Z = (X,Y)$, the $Z_{\mathrm{IN}}$ or $Z_{\mathrm{ID}}$ is a no-fly zone}{Assign $\Theta_0^* = 0$ for queuing system $\overline{Z} = (X,Y)$ ; }
            \Else{Evaluate $\Theta_0^*$ for queuing system $\overline{Z} = (X,Y)$ using Eqn. \ref{eqn: forward} and Eqn. \ref{eqn: MMRP} ; }
            
                 Compute 
                1) the average number of UAS in service $U^{(1)}(1)$ for zone $Z =(X,Y)$ using Eqn. \ref{eqn: UAS in service}; \\
                \ \ \ \ \ \ \ \ \ \ \ \ \ \ \ \ 2) the descend probability $\sigma_Z$ for the zone $Z = (X,Y)$ using Eqns. \ref{eqn: UAS available 0}, \ref{eqn: UAS available X}; \\
                \ \ \ \ \ \ \ \ \ \ \ \ \ \ \ \ 3) the nodal arrivals $A(z)$ for zone $Z = (X,Y+1)$ and $(-X,Y+1)$ using Eqn. \ref{eqn: 0 departures}; \\
                \ \ \ \ \ \ \ \ \ \ \ \ \ \ \ \ 4) the non-nodal arrivals $A_{\mathrm{I}}(z)$ for zone $Z = (X+1,Y)$ and $(-X-1,Y)$ using Eqns. \ref{eqn: left non-nodal arrival}, \ref{eqn: right non-nodal arrival}, \ref{eqn: non-nodal arrival}; \\
                Set $X = X+1$  
            }  
            Expected spread in level $Y$ = Stream$\left[ X_{min} \leq X \leq X_{max} \right]$ segments, where for all $X \in \{-X_e,...,0,...X_e\}$, \\
            \ \ \ \ \ \ \ \ $X_{min} = \min \left\{ X\ \vert \ U^{(1)}(1) \geq 1 \text{\ for\ } Z = (X,Y)  \right\}$ , and $X_{max} = \max \left\{ X\ \vert \ U^{(1)}(1) \geq 1 \text{\ for\ } Z = (X,Y) \right\}$; \\
            
            Set $Y = Y+1$;\\
            \While{$X\in \{0,...X_e\}$}
            {Arrival $A_{\mathrm{O}}(z)$ for zone ($X,Y$) = $A(z)$ of zone ($X+1,Y$) \\
            Arrival $A_{\mathrm{O}}(z)$ for zone ($-X,Y$) = $A(z)$ of zone ($-X-1,Y$)\\
            Set $X = X+1$} 
    }
    \caption{Estimating expected UAS traffic stream spread (the set of active parallel stream segments)}
    \label{alg: expected spread}
\end{algorithm}
 
\section{Simulation Results and Analysis} \label{sec: simulation results}

Consider the workspace enclosing a source-destination pair at altitude $80\ \mathrm{m}$ in the urban environment. The source is at $(0,0,80)\ \mathrm{m}$ where, for every request received, a UAS is deployed to deliver the payload at the paired destination. The time axis is discretized into $\Delta T = 2.5\ \mathrm{s}$ timeslots. We assume one or no request is received in any given timeslot. The average rate $\lambda$ at which the requests are received, analogously the arrival rate $\lambda$ at which the UAS would be deployed at source, varies in the range $[0.05,1]$. For simulation, a horizontal cross-section of the workspace measuring $[-400,\ 400]\ \mathrm{m}$ along $x$ axis, and $[-50,\ 900]\ \mathrm{m}$ along $y$ axis and intersecting a static obstacle at the given altitude is considered. The cross-section is tessellated into cells of dimension $5\ \mathrm{m} \times 5\ \mathrm{m}$. The UAS transitions with $1$ cell per timeslot nominal velocity. The UAS is enclosed in a virtual periphery of safety radius $r = 1\ \mathrm{m}$. The cell dimensions are large enough that there is no breach in the UAS periphery amongst UAS transitioning in respective cell neighborhoods (shown in Fig. \ref{fig: 15}). The UAS has a $L = 5$ slot look-ahead time window, allowing it to predict where other UAS would be in the next $L\Delta T = 12.5\ \mathrm{s}$. Assume that the UTM has prescribed a nominal path between the source-destination pair such that in the cross-section, the set of cells through which the $y$ axis passes is the nominal segment. The cell enclosing the source is the start cell for the nominal segment. The zones are defined with edge length $S = 2L+1 = 11$ cells, and the finite grid (comprising $10$ streams and $10$ levels) is constructed centered on the nominal segment. 
\begin{figure}[h!]
 \centering
    \includegraphics[width = 0.4\linewidth]{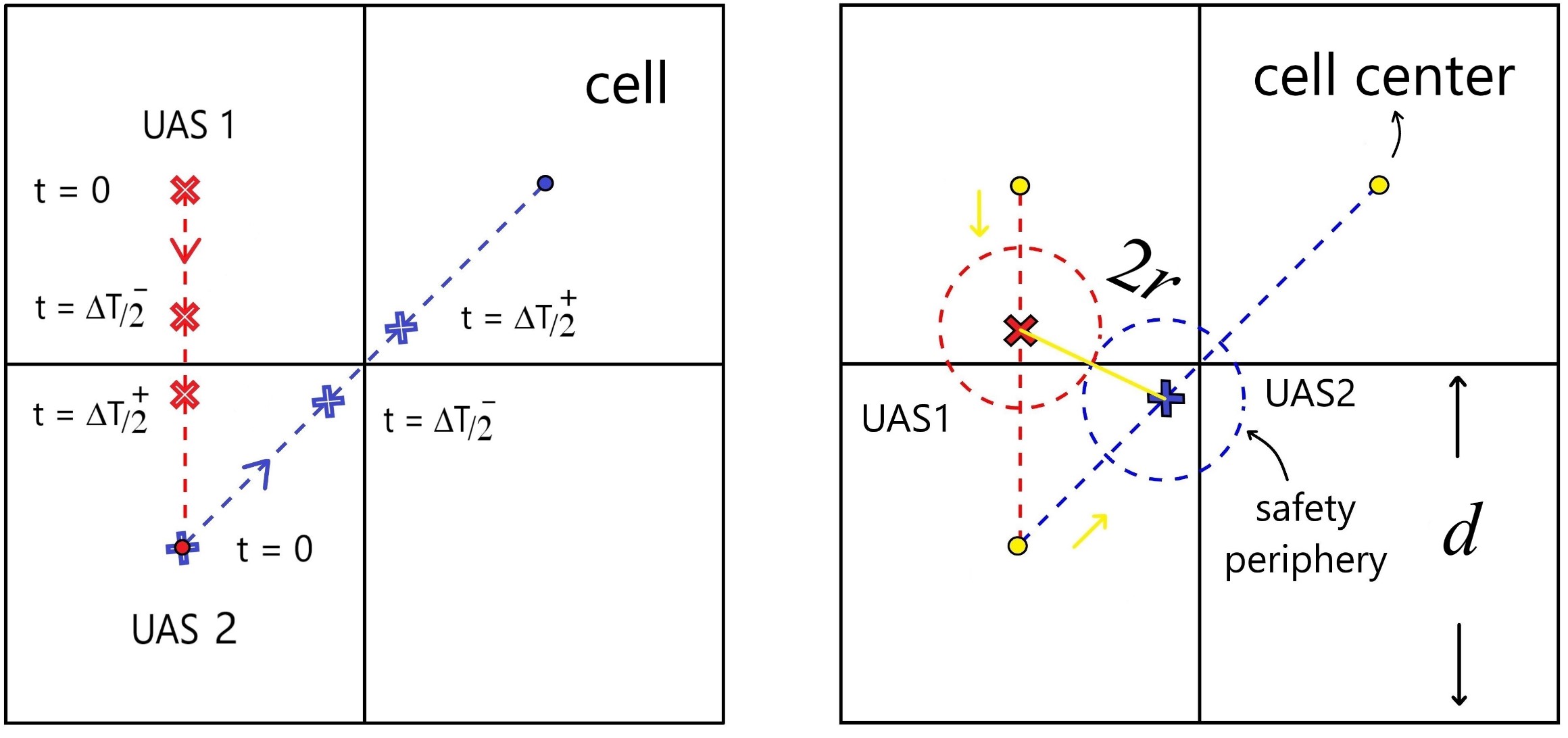}
    \caption{The edge length $d$ of the cell is so chosen that the UAS closest approach does not breach the safety periphery $r$. To ensure this $\mathrm{d} \geq 2\sqrt{5}\ r$ should be satisfied.}
    \label{fig: 15}
\end{figure}

Every UAS deployed at the source follows the proposed congestion mitigation strategy. The simulation is run until around $1200$ UAS are deployed at the source. The congestion probability and expected number of UAS present in the zone are numerically computed by solving Eqn. \ref{eqn: forward} using MATLAB symbolic math toolbox. The theoretical results are validated with the average values of the $1200$ UAS simulation data. 

Figure \ref{fig: 16} shows snapshots of UAS employing the proposed rule-based congestion mitigation strategy in a distributed manner. When every UAS in the source-destination traffic stream follows the proposed strategy, Fig. \ref{fig: 17}-\ref{fig: 19} show that the emerging traffic pattern closely resembles a parallel air corridor network with the number of air corridors adapting to the arrival rate \footnotemark[1]. The emergence of parallel air corridors is a consequence of the proposed strategy, and UTM has no role in it, nor is there any scheduling policy in place. It may be observed that in any zone no more than $M$ UAS are present, thus assuring increased UAS intrinsic safety. The above figures also demonstrate the variation in the number of active corridors (also referred to as traffic stream spread) as a function of the arrival rate ($\lambda$) and the UTM design parameters ($M,\ \eta$). 
\begin{figure}[h!]
    \centering
    \includegraphics[width = \linewidth]{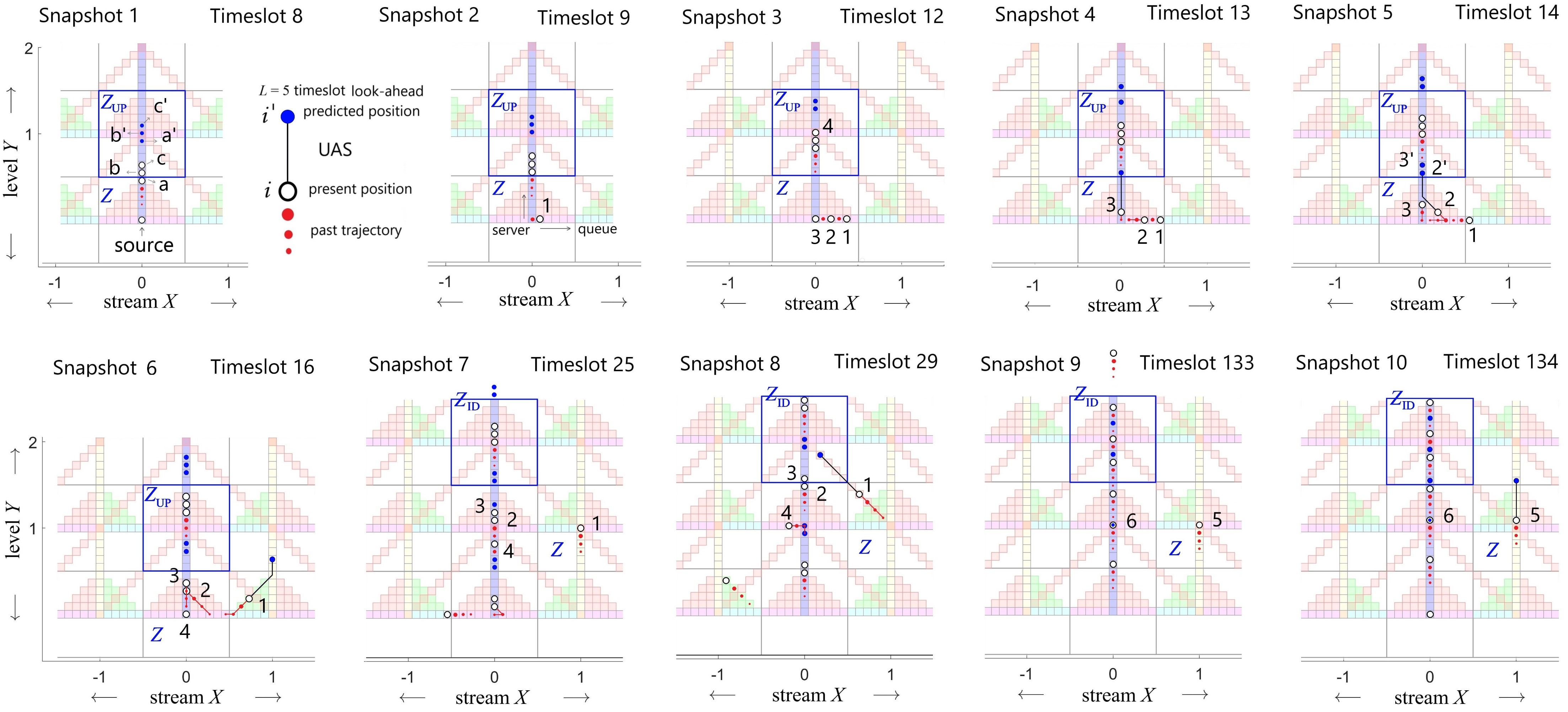}
    \captionof{figure}{The arrival rate $\lambda$ at source is $0.2$. The UTM parameters are $M = 3$ and $\eta = 0.5$.  In snapshot 1, three $L= 5$ slot look-ahead positions (that of UAS a,b,c) are present in $Z_{\mathrm{UP}}$. As $M = 3$, from the perspective of the UAS present at the source node, the zone $Z_{\mathrm{UP}}$ is congested (Rule \ref{rule: 1}). In snapshot 2, when UAS 1 arrived, the $Z_{\mathrm{UP}}$ was congested; hence, UAS 1 entered the queue (Rule \ref{rule: 2}). In snapshot 3, the $L$-slot look-ahead position of UAS in the $\beta \rightarrow \gamma$ path (that of UAS 4) cannot be determined; only two look-ahead positions are in $Z_{\mathrm{UP}}$. In snapshots 3-5, as $Z_\mathrm{UP}$ is uncongested, the UAS 2, 3 enter service following the LCFS discipline (Rule \ref{rule: 3}). As there is a queuing deadline of $L = 5$ slots, UAS 1 is lost from the $\overline{Z}$ queuing system. In snapshot 6, the past trajectory of UAS 1,2,3 shows the path opted corresponding to $\psi > 0$, $\psi < 0$ and $\psi = 0$, respectively (Rule \ref{rule: 4}). In snapshot 7-8, from the perspective of UAS 1, as $Z_{\mathrm{ID}}$ is uncongested and no UAS are present in the $\beta_Z \rightarrow \gamma_Z$ path, the descend condition is satisfied. Hence, UAS 1 descends to $Z_{\mathrm{ID}}$ (Rule \ref{rule: 5}). Whereas, in snapshots 9-10, though $Z_{\mathrm{ID}}$ is uncongested, the descend condition is not satisfied because of UAS 6. Hence, UAS 5 does not descend. } 
    \label{fig: 16}
\end{figure}
\begin{figure}[h!]
    \centering
    \includegraphics[width = \linewidth]{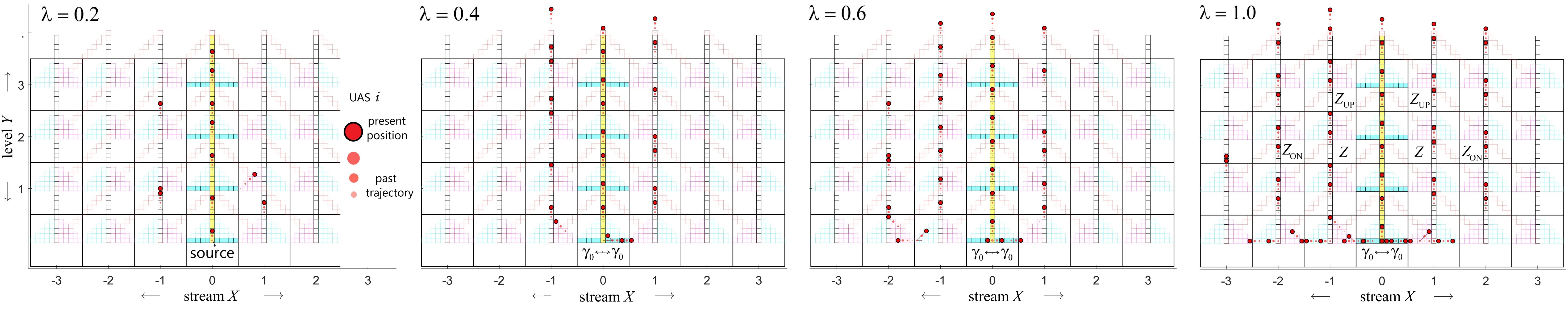}
    \caption{Variation of UAS traffic stream spread as a function of the arrival rate $\lambda \in [\ 0.2,0.4,0.6,1.0\ ]$, when $M = 2$ and $\eta = 0.5$. As the congestion mitigation strategy is being employed, at most $2$ UAS are present in any zone. A UAS queuing in the Stream$\{0\}$ system, would be queuing in either of the two $\gamma_0$ queues with probability $\eta = 0.5$ (Rule \ref{rule: 8}). }
    \label{fig: 17}
\end{figure}
\footnotetext[1]{The simulation videos have been uploaded in the following link \url{https://youtu.be/Yk1vY6nynHg}}

\begin{figure}[h!]
    \centering
    \includegraphics[width = \linewidth]{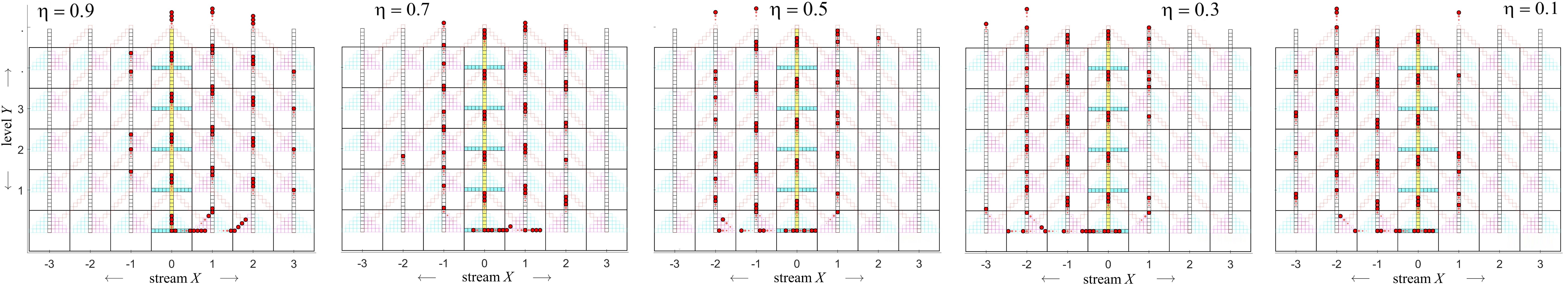}
    \caption{Variation of UAS traffic stream spread as a function of bias $\eta \in [0.1,0.3,0.5,0.7,0.9]$ in traffic distribution about the nominal segment when $M = 3$ and $\lambda = 1.0$. At most $3$ UAS are present in any zone.  }
    \label{fig: 18}
\end{figure}
\begin{figure}[h!]
    \centering
    \includegraphics[width = \linewidth]{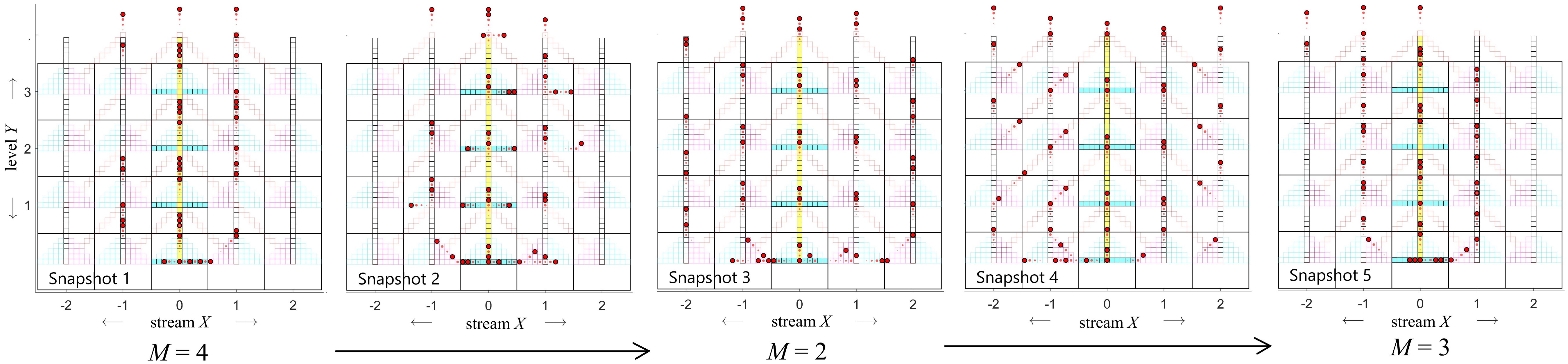}
    \caption{Variation of UAS traffic stream spread as a function of the minimum number of UAS to be present in a zone for congestion $M \in [2,3,4]$ when $\eta = 0.5$ and $\lambda = 0.8$. At most $M$ number of UAS are present in any zone. Snapshots 2 and 4 show the transient behavior of UAS moving outward and descending inward, respectively, adapting to the changing congestion parameter $M$.}
    \label{fig: 19}
\end{figure}

As the arrival rate $\lambda$ increases or $M$ decreases, the congestion frequency and congestion duration in the \textit{inward} stream zones increases. Hence, more UAS shift onto \textit{outward} streams, resulting in increased traffic spread. When $\eta < 0.5$, we have denser traffic flow on the left of the nominal segment and vice versa.
 In the lower levels, the UAS inter-separation distance depends only on the arrival rate. 
When the arrival rate is small, the UAS inter-separation distances are large. Hence, the congestion duration is small (refer Fig. \ref{fig: 20}a). However, in the higher levels, when a zone in the \textit{inward} stream becomes uncongested, the UAS in \textit{outward} streams descend onto the \textit{inward} stream (UAS 1 in snapshot 8 of Fig. \ref{fig: 16}). Thus reducing UAS inter-separation distances and resulting in longer congestion durations (refer Fig. \ref{fig: 20}b). In the $Z = (0,1)$ zone, when $\lambda = 1$, every timeslot the zone is uncongested, a UAS is available to enter the zone (refer Fig. \ref{fig: 20}c). Thus, the zone remains congested for $S-M$ timeslots. For example, when $S = 11,\ M = 2$ and $\lambda = 1$, the zone is congested for $S-M = 9$ timeslots and hence the congestion probability $\Theta_0 = (S-M)/S = 0.82$ (as validated in the $\Theta_0$ plot of Fig. \ref{fig: 21}, when  $\lambda = 1$). The rearrangement of UAS in the traffic stream happens such that beyond a certain level $Y$, we observe no UAS moving \textit{outward} and no UAS descending \textit{inward}. 
\begin{figure}[h!]
    \centering
    \includegraphics[width = 0.8\linewidth]{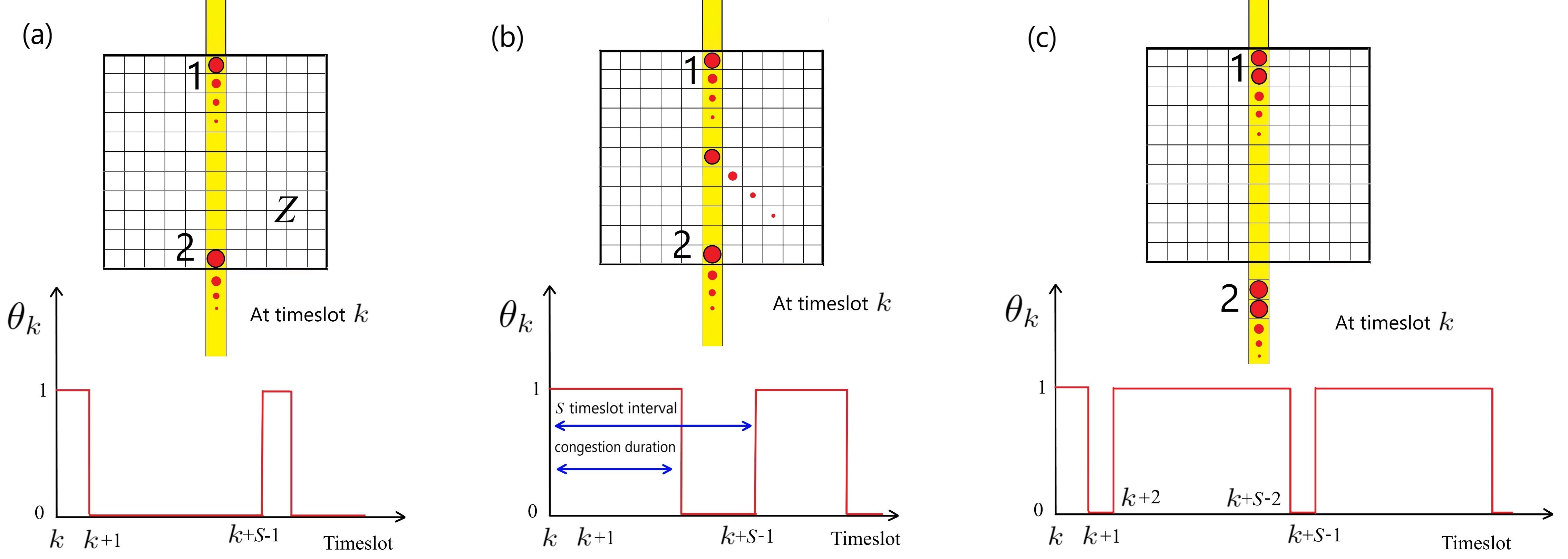}
    \caption{The congestion duration when $M = 2$ and a) $\lambda$ is small, b) $\lambda$ is small, but there are UAS in the \textit{outward} stream zone that would descend, c) $\lambda = 1$, a UAS is available to enter every timeslot the zone is uncongested. }
    \label{fig: 20}
\end{figure}

 The same is validated in Fig. \ref{fig: 21}. The average number of UAS in the queue (that is, UAS moving \textit{outward}) and the queue overflow is nearly zero for level $2$ and above zones. In the congestion probability plot, when $\lambda < 0.2$, there is no congestion, so no UAS shifts onto \textit{outward} streams, and hence, no UAS is available to descend. The inter-separation distances depend only on arrival. Thus, the congestion duration remains unaffected as levels increase. When $\lambda \in (0.2, 0.5)$, there is congestion, but the duration is small. There are UAS that descend and increase the congestion duration in the higher levels. However, on average, the increase is not significant as with the increase in the congestion duration, the probability that a UAS would descend decreases. When $\lambda > 0.6$, the congestion durations are long enough that most of the UAS arriving in the lower levels shift onto \textit{outward} streams. However, these UAS rarely descend (refer Fig. \ref{fig: 21}c). Hence, again, the congestion duration remains unaffected as the level increases. Thus, as $\lambda\rightarrow 0$ and as $\lambda\rightarrow 1$, the congestion probabilities of all $(0,Y)$ zones converge.
 \begin{figure}[h!]
    \centering
     \includegraphics[width = 0.95\linewidth]{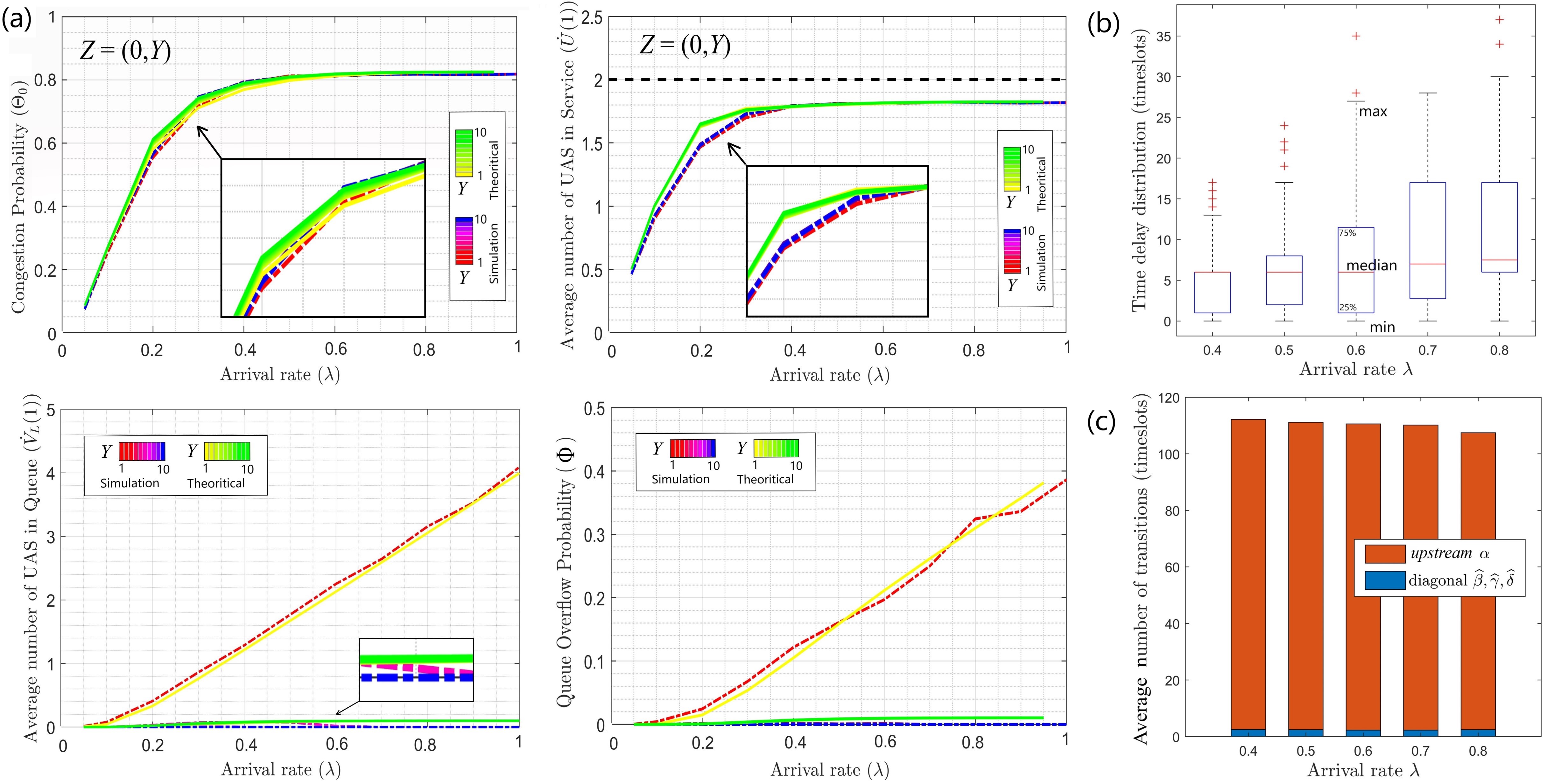}
    \caption{a) Theoretical and simulation comparison of congestion in $Z = (0,Y)$, $Y \in \{1,...,10\}$ for varying arrival rate $\lambda \in [0.05, 1]$ and minimum number of UAS in zone for congestion $M = 2$. Here, $S = 2L+1 = 11$ cells, and $\eta = 0.5$, b) The time delay introduced in the UAS path as a consequence of the UAS executing lateral transitions in the event of congestion and the number of \textit{upstream} and, c) diagonal transitions executed by the UAS to reach a Level $10$ zone (that is a total of $10\times S = 110$ transitions).}
    \label{fig: 21}
\end{figure}

When executing a diagonal transition, to travel 1 unit distance in the \textit{upstream} direction, the UAS needs $\sqrt{2}$ times higher velocity than when executing an \textit{upstream} transition. Thus, diagonal transitions are energy-intensive maneuvers. However, the number of diagonal transitions executed is low (refer Fig. \ref{fig: 21}c) and mostly confined to the lower levels.

Unlike arrival rate $\lambda$, the UTM can take strategic decisions for parameters $M$ and $\eta$. Figure \ref{fig: 22} shows Rule \ref{rule: 7} being employed by UAS in the presence of a static obstacle, due to which the UAS circumnavigates the obstacle. Instead, the UTM could choose an $\eta$ such that the traffic spread would not pass through the static obstacle (as in Fig. \ref{fig: 18}). In Fig. \ref{fig: 22}, the UAS in zone $1$ always perceives zone $1_{\mathrm{UP}}$ to be congested. These UAS would certainly enter zone marked $2_{\mathrm{UP}}$. The respective $L$-slot look-ahead positions can be determined (indicated with blue dots in zone $2_{\mathrm{UP}}$) and contribute to congestion in Definition $\ref{def: 2}$. Thus, the UAS in zone $2$ are affected by UAS laterally transitioning in the $\beta\rightarrow\gamma$ path of zone $1$. The traffic flow in the urban airspace is weather-dependent. As per UAS safety standards, the allowable congestion in a region is constrained by wind conditions. As in \citet{zhou2020resilient}, the weather is modeled as a first-order Markov chain with zero, moderate, and heavy wind as states. The UTM may associate each state with a $M$ value (shown in Fig. \ref{fig: 23}a). Corresponding to prevailing wind conditions, the UTM communicates respective $M$ values with the UAS. The UAS accordingly adapts, ensuring no more than $M$ UAS are present in any zone, as shown in Fig. \ref{fig: 19}. Another use case for changing $M$ would be at the destination. $M$ must be gradually incremented over the levels (as shown in Fig. \ref{fig: 23}b) so that when $M = S$ in the final level containing the destination node, all UAS would enter the destination. 
\begin{figure}[h!]
    \centering
    \includegraphics[width = 0.7\linewidth]{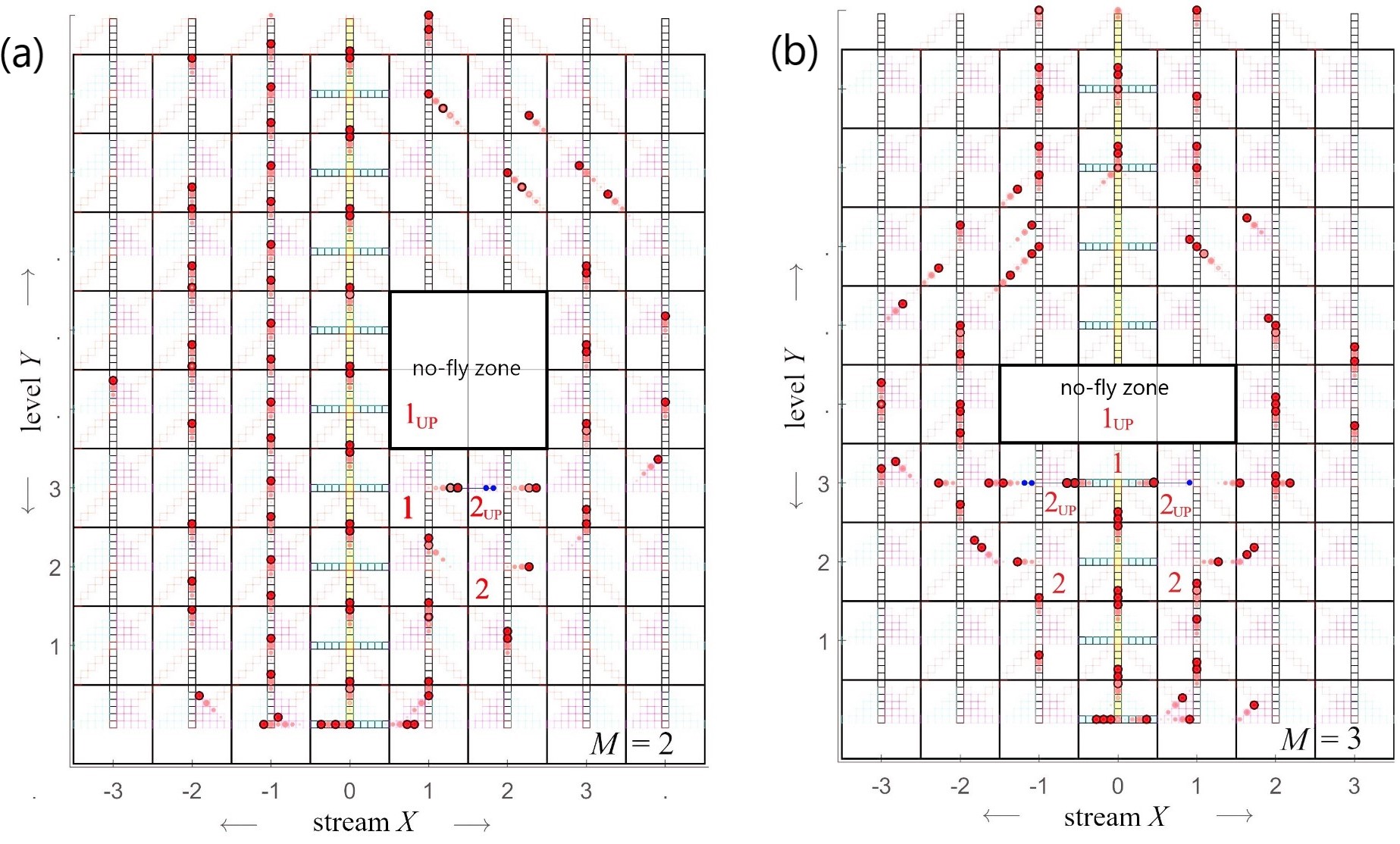}
    \caption{Traffic behavior in the presence of a static obstacle when $\lambda = 0.8$, $\eta = 0.5$ and a) $M = 2$ and b) $M = 3$. }
    \label{fig: 22}
\end{figure} 
\begin{figure}[h!]
    \centering
    \includegraphics[width = 0.6\linewidth]{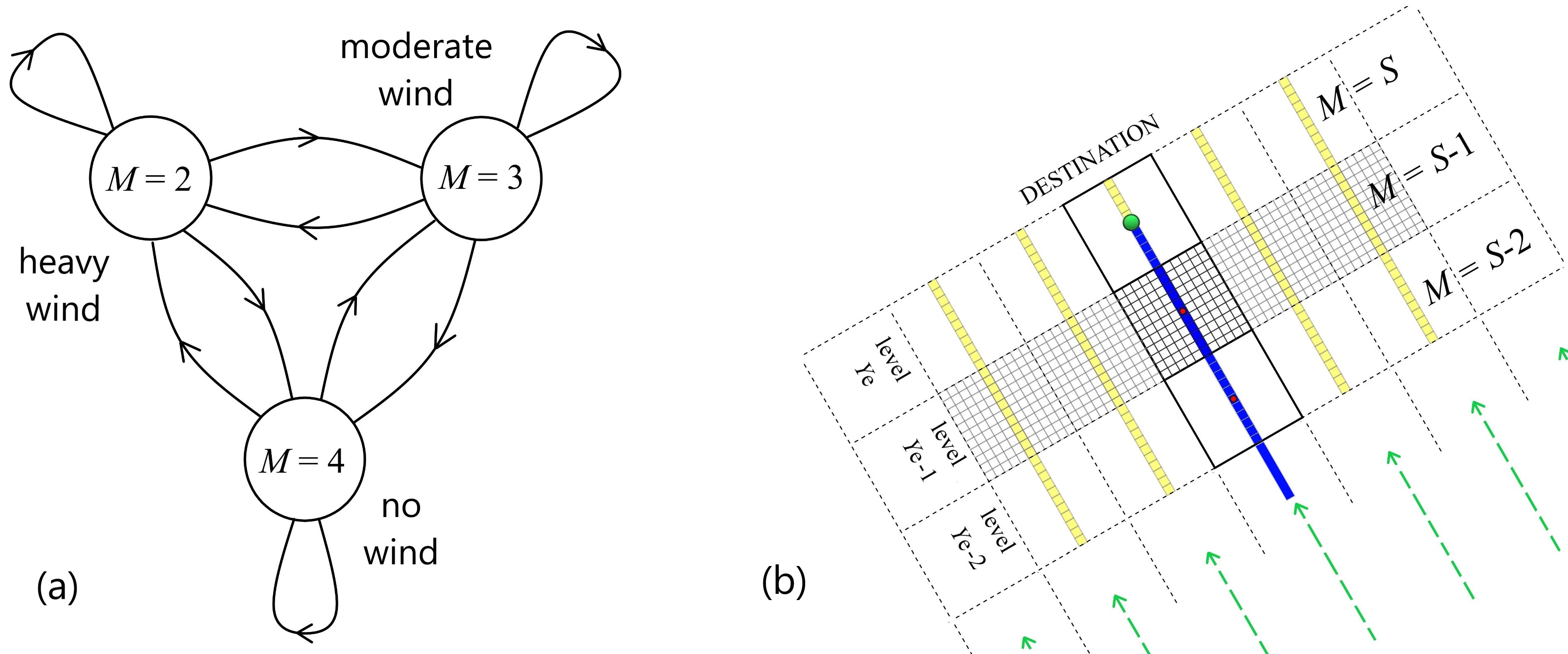}
    \caption{Use cases for controlling UTM parameter $M$, a) weather-dependent congestion regulation where wind conditions are modeled as a Markov chain and b) UAS scheduling at the destination. }
    \label{fig: 23}
\end{figure}

For any given $L,\ M$ and $\eta$, the UTM could estimate the expected traffic spread and the expected number of UAS in the zone beforehand using the queueing models discussed in section \ref{sec: congestion queuing model}. Based on these estimates, the UTM could decide on the look-ahead window $L$ to define the zone and then decide on the parameters $M$ and $\eta$ for the appropriate use case. Figures \ref{fig: 24}-\ref{fig: 25} provide validation for theoretical estimates obtained from the queuing models. 

When $\lambda = 1$, there is an arrival entering the zone $Z = (0,1)$ in every timeslot. As mentioned before, the zone would be congested for $S-M$ timeslots, and hence congestion probability $\Theta_0$ is $(S-M)/S$. The same can be verified in Fig. \ref{fig: 24}a for $\lambda = 1$ at different values of $M$. The average number of UAS in service never exceeds $M$. The UAS queuing would be in either of the two $\gamma_0$ queues, each of capacity $L$, with probability $\eta$ and ($1-\eta$). Hence, the average number of UAS in the queue never exceeds $L$. For zone $Z = (\pm X,1)$, where $X>0$, the UAS would be queuing in both $\gamma$ and $\beta$ queues. The average number of UAS queueing never exceeds $2L-1$. Only a fraction of the UAS arrivals to zone $Z = (0,1)$ are lost from the $\gamma_0$ queue to become non-nodal arrivals for $Z = (\pm1,1)$ zones. For a given arrival rate $\lambda$, the zone $Z = (\pm1,1)$  has less number of UAS available to enter service than zone $Z = (0,1)$. Hence, zone $Z = (\pm1,1)$ has lower congestion probability $\Theta_0$ than  zone $Z = (0,1)$. For the same reason, zones $(X+1,1)$ and $(-X-1,1)$ zones have lower $\Theta_0$ than \textit{inward} neighboring zones $(X,1)$ and $(-X,1)$, respectively. When $\eta = 0.5$, we have equal $\Theta_0$ in $(\pm X,1)$ zones. When $\eta = 0$ all UAS lost from $Z = (0,1)$ become non-nodal arrivals to $Z = (-1,1)$ else only $(1-\eta)$ fraction of UAS lost become non-nodal arrivals. For any given $\lambda$, as more number of UAS are available to enter service when $\eta = 0$, hence in $Z = (-1,1)$ we have more number of UAS in service, resulting in increased congestion, and more number of UAS in the queue when compared to $\eta = 0.5$. \clearpage

\begin{figure}[h!]
    \centering
     \includegraphics[width = \linewidth]{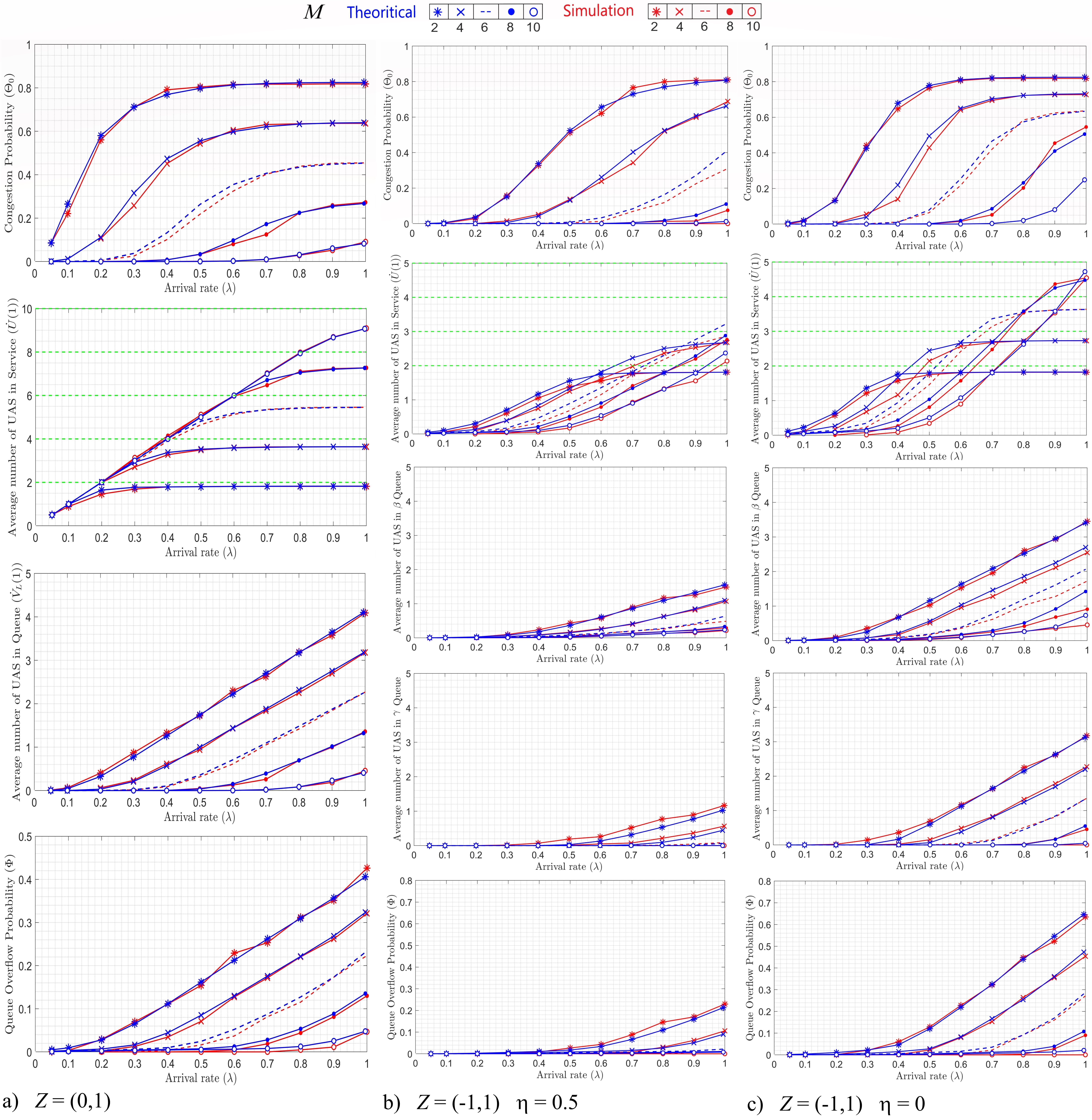}
    \caption{Theoretical and simulation comparison of congestion for $L = 5,\ S = 2L+1 = 11$ and varying arrival rate $\lambda \in [ 0.05, 1 ]$, in a) zone $Z = (0,1)$ when the minimum number of UAS in the zone for congestion $M \in [2,4,6,8,10]$; and in b, c) zone $Z = (-1,1)$ when $M \in [2,3,4,5,6]$ and $\eta = 0.5$ and $\eta = 0$.}
    \label{fig: 24}
\end{figure}

In Fig. \ref{fig: 25}, the expected traffic stream spread at level $1$, for given $M = 2$, and varying arrival $\lambda$ can be determined from the average number of UAS in the service. For example, when $\lambda = 0.8$ in Fig. \ref{fig: 25}a, the $U^{(1)}(1) \geq 1$ for $X \in \{-4,-3,-2,-1,0\}$, hence the expected traffic spread is in Stream$[X_{min}\leq X \leq X_{max}]$ segments, where $X_{min} = -4$ and $X_{max} = 0$. As $\eta = 0.1$, we have denser traffic flow on the left of the nominal segment, hence more spread towards $X<0$ streams. In Fig. \ref{fig: 25}b, $\eta = 0.3$ and $U^{(1)}(1) \geq 1$ for $X \in \{-3,-2,-1,0,1\}$. Hence, the expected spread is in Stream$[-3 \leq X \leq 1]$ segments. In Fig. \ref{fig: 25}c, as $\eta = (1-\eta) = 0.5$, the $U^{(1)}(1) \geq 1$ for $X \in \{-2,-1,0,1,2\}$. The expected spread is in Stream$[-2\leq X\leq 2]$ segments; that is, we have equal spread distribution on the left and the right of the nominal segment.\clearpage

\begin{figure}[h!]
    \centering
     \includegraphics[width = \linewidth]{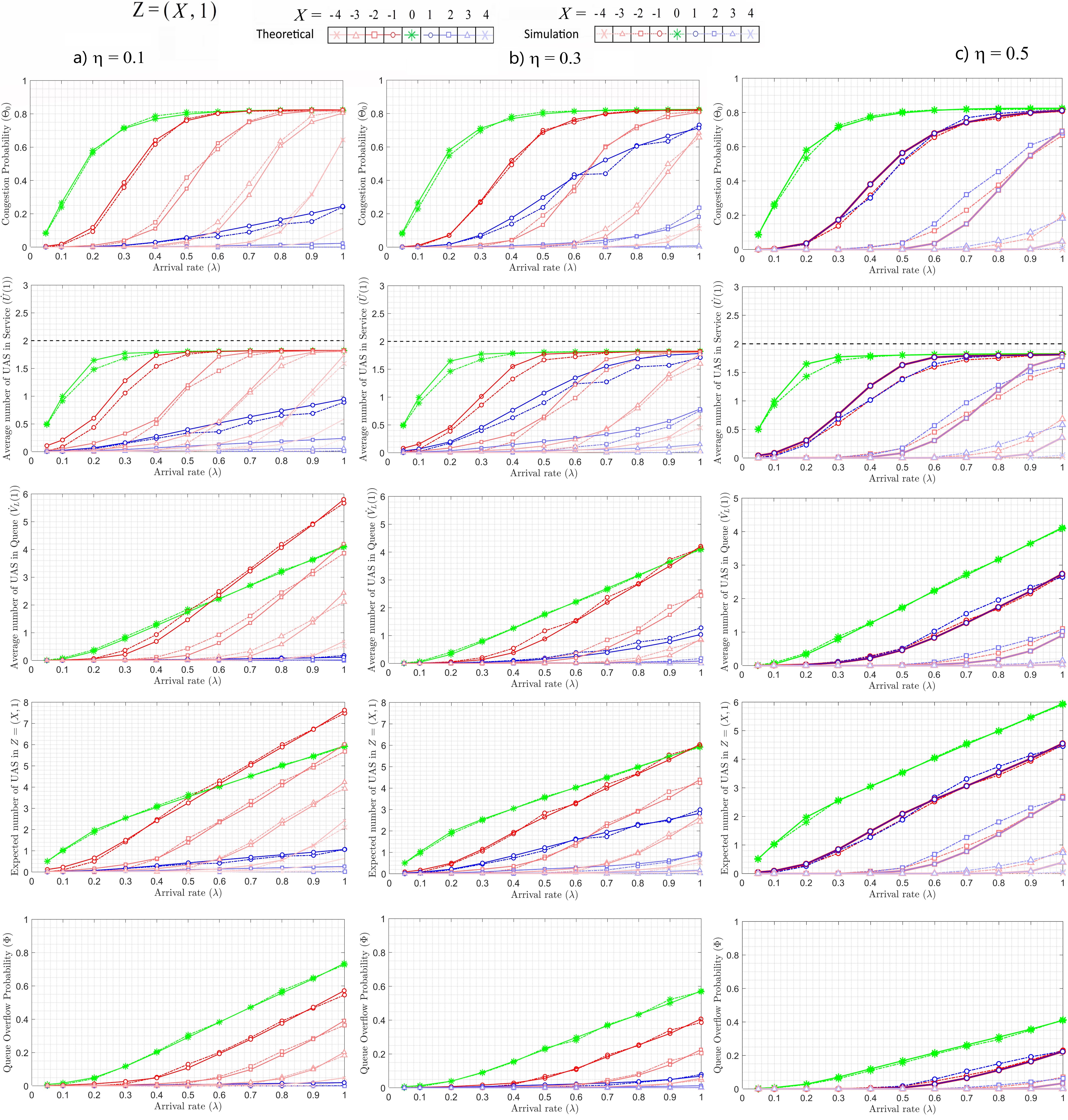}
    \caption{Theoretical and simulation comparison of expected spread in Level$(1)$ zones, that is, $Z = (X,1),\ X \in \{-5,...,0,...,5\}$ when the minimum number of UAS in the zone for congestion is $M = 2$ and the bias in traffic distribution $\eta \in [\ 0.1, 0.3, 0.5\ ]$ (denser traffic flow towards $X < 0$ streams) for varying arrival rate $\lambda \in [\ 0.05, 1 ]$. }
    \label{fig: 25}
\end{figure}

\subsection{Presence of exogenous traffic stream}
Following Assumptions \ref{assumption: 1}-\ref{assumption: 3}, an exogenous traffic stream orthogonally intersecting the nominal segment at level $2$ is considered (refer Fig. \ref{fig: 26}a). The exogenous UAS arrival rate is $\lambda_e = 0.2$. The expected number of exogenous UAS in the level $2$ zones is $S\lambda_e = 2.2$. As $M$ must be greater than $2.2$ for UAS to cross the exogenous traffic stream, $M$ is chosen to be $4$. The exogenous UAS transitions with a nominal velocity of $1$ cell per timeslot. The exogenous UAS do not follow the proposed mitigation strategy and are hence unaffected by congestion in the zone. However, the exogenous UAS contributes to congestion. When $M$ or more than $M$ look-ahead positions (exogenous UAS included) are present in a zone, then the UAS laterally transitions to avoid such zone (as shown in snapshot 1 of Fig. \ref{fig: 26}a). 

Unlike in the presence of a static obstacle, where it is certain that a UAS in $\beta\rightarrow \gamma$ path would always laterally transition every timeslot, here in the presence of an exogenous stream, the lateral transition of UAS is conditional (congestion-dependent). The $L$-slot look-ahead predicted positions of these UAS cannot be determined and are excluded in Definition $2$. This results in the UAS transitioning laterally and those transitioning \textit{upstream} experiencing conflicts (as shown in snapshot 1-2 of Fig. \ref{fig: 26}a). A conflict among UAS belonging to the same stream is referred to as internal conflict. A conflict between a UAS and an exogenous UAS is referred to as exogenous conflict. Though conflicts occur, the congestion mitigation strategy guarantees no more than $M$ conflicts in any zone, that is UAS intrinsic safety is improved. Further, the mitigation strategy helps reduce the total number of conflicts in the stream (as shown in Fig. \ref{fig: 26}b). These advantages come at the expense of additional time delay in the UAS path and energy cost proportional to the number of diagonal transitions (refer Fig. \ref{fig: 26}b). Here, the total number of \textit{upstream} and diagonal transitions executed to reach Level $5$ zone is $5\times S = 55$). Note that the additional cost is insignificant when computed over the entire flight path. Depending on the inter-separation distance distribution between the exogenous UAS, the traffic spread may increase or decrease as shown in snapshots 3-4 of Fig. \ref{fig: 26}a. However, the queueing models can be used to estimate the expected traffic spread as validated in Fig. \ref{fig: 26}c. 
\begin{figure}[h!]
    \centering
    \includegraphics[width = \linewidth]{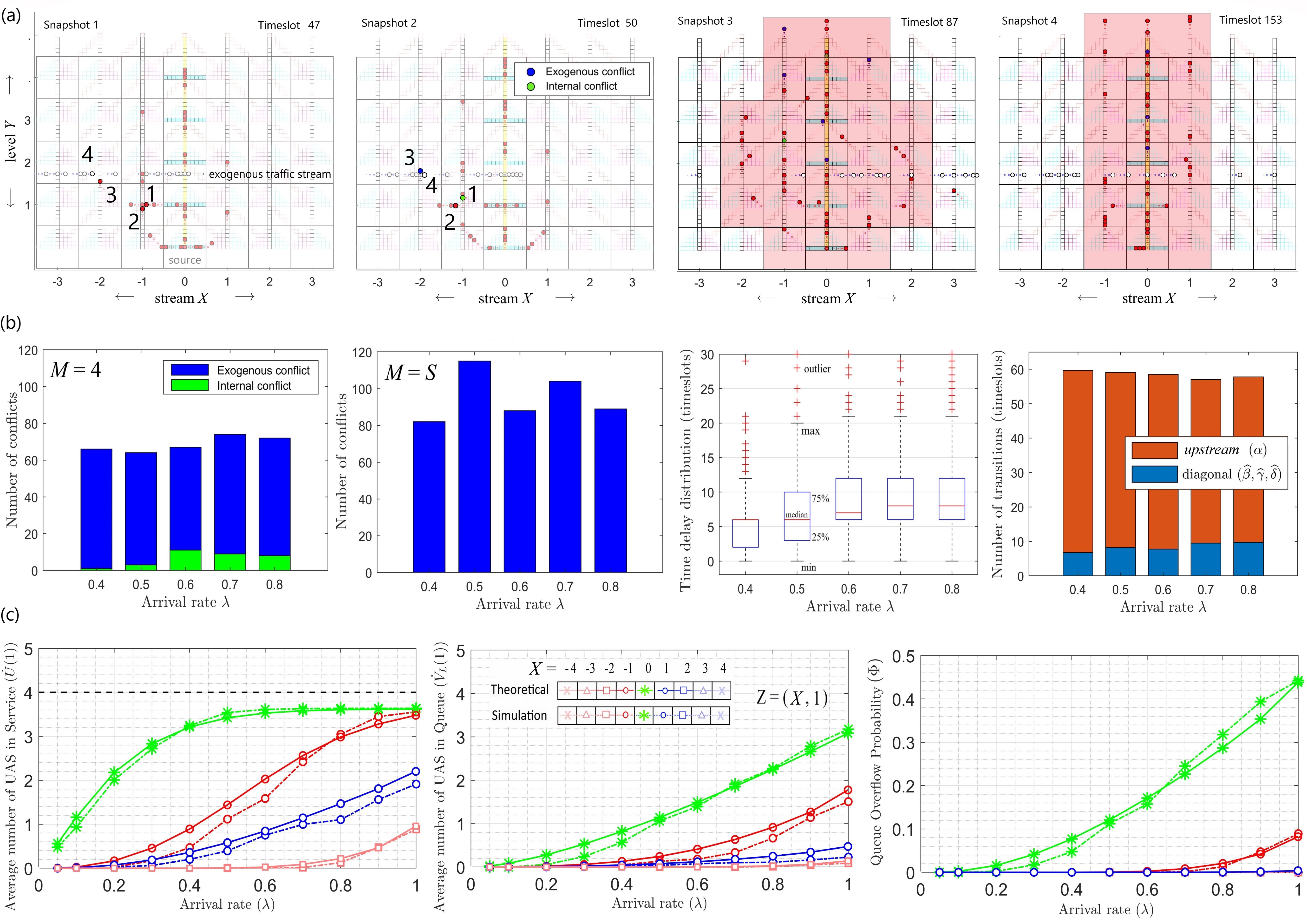}
    \caption{Congestion mitigation being employed by UAS in the presence of an exogenous traffic stream when $M = 4$ and $\eta = 0.3$. a) In snapshot 1, UAS 1 conflicts with UAS 2 belonging to the same stream at timeslot 47, and UAS 3 would conflict with an exogenous UAS 4 at timeslot 49. Rule \ref{rule: 6} is employed by UAS to resolve the conflict. Snapshot 2 shows the UAS positions after the conflict is resolved. b) Comparison of the number of conflicts with ($M = 4$) and without ($M = S$) the proposed congestion mitigation strategy and trade-off concerning time-delay and energy cost associated with the maneuver. c) Validation of expected traffic stream spread estimation in the presence of an exogenous stream ($\lambda_e = 0.2$).  }
    \label{fig: 26}
\end{figure}

\section{Conclusion} \label{sec: conclusion}

In this paper, we present an airspace design for the altitude-layered urban airspace and propose a novel preference-based distributed congestion mitigation strategy. The proposed strategy offers a unified framework for handling contingencies such as congestion in the urban airspace, conflicts among exogenous UAS traffic streams, and static obstacle avoidance. The advantages of the proposed approach are many-fold. The airspace design is the responsibility of UTM, which could be done offline, and the UTM parameters could be communicated to the UAS as needed. The strategy is decentralized, with individual UAS being responsible for mitigating congestion with minimal role played by UTM. Hence, the strategy is scalable to the rapidly growing number of UAS. The strategy improves UAS intrinsic safety by guaranteeing no more than $M$ UAS are present in any zone, also reducing the number of UAS conflicts. However, this comes at the expense of a minimal time delay and energy costs introduced in the UAS flight path associated with lateral and diagonal maneuvers. The emerging traffic behavior when every UAS employs the proposed strategy closely resembles a parallel air-corridor network, with the number of active air-corridors (referred to as traffic stream spread) adapting to the UAS demand. The UAS traffic is directional; thus, there is no compromise on the UTM throughput. The paper also presents queueing theory models for estimating the expected traffic spread for any given UAS demand. The estimates help the UTM reserve the airspace beforehand for any foreseen congestion. Extending the above strategy to the general scenario where more than two exogenous UAS traffic streams intersect (may not be perpendicular) and affect each other would be the future direction for this work. Further, proposing a probabilistic confidence definition for congestion instead of the present deterministic Definition \ref{def: 1} would help further reduce the internal conflicts. There may be other factors that affect the UAS preference to avoid or enter a congested region that can be incorporated into the above strategy.  

\section*{Declaration of Competing Interest}
The authors declare that they have no known competing financial interests or personal relationships that could have appeared to influence the work reported in this paper.

\section*{Acknowledgment}
The research reported here was funded by the Commonwealth Scholarship Commission and the Foreign, Commonwealth and Development Office in the UK. Sajid Ahamed is grateful for their support. All views expressed here are those of the author(s), not the funding body.

\section*{Appendix}
\begin{theorem}
    For Stream$\{0\}$ queuing system, the polynomial equation in $\Theta_0$ has at least one real root in $[0,1]$ \label{thm: Existence}
\end{theorem}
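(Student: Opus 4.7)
The plan is to apply the Intermediate Value Theorem to the fixed-point map obtained by substituting the feedback relation into the forward relation. Specifically, I would substitute Eqn.~\ref{eqn: 0 feedback} (together with the dependence of $\pi$ and $\sigma_Z$ on the $V_j(0)$ through Eqns.~\ref{eqn: rouche} and~\ref{eqn: UAS available 0}) into Eqn.~\ref{eqn: forward} to rewrite the consistency condition as $\Theta_0 = g(\Theta_0)$, and then show that the continuous function $F(\Theta_0) := g(\Theta_0) - \Theta_0$ satisfies $F(0) \geq 0$ and $F(1) \leq 0$. Existence of a root in $[0,1]$ then follows immediately from IVT, and the degree of $g$ as a polynomial in $\Theta_0$ will match the stated $(L+1)(S-1)$.

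The first step is to establish that $g$ is genuinely a polynomial in $\Theta_0$ on the whole of $[0,1]$. Proceeding by induction on $j$, each $V_j(z)$ defined by Eqn.~\ref{eqn: rouche} is a polynomial in $z$ whose coefficients are polynomials in $\Theta_0$: the apparent singular factor $(V_j(z) - V_j(0))/z$ is actually a true polynomial in $z$, since the numerator vanishes at $z=0$, so no genuine $0/0$ arises in the recursion. Evaluating at $z=0$ yields $V_L(0)$ as a polynomial in $\Theta_0$, hence through Eqn.~\ref{eqn: UAS available 0} the quantity $W_1(0)$ is polynomial in $\Theta_0$ as well, and the Rouché-based cancellations flagged just before the theorem statement ensure no residual denominator roots within the unit disk remain. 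Substitution into Eqn.~\ref{eqn: forward} then produces $g$ as a polynomial in $\Theta_0$ of the claimed degree, so $F$ is continuous on $[0,1]$.

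For the boundary values, the cleanest observation is that for every $\Theta_0 \in [0,1]$ the right-hand side of Eqn.~\ref{eqn: forward} has the probabilistic interpretation $g(\Theta_0) = 1 - \sum_{n=0}^{M-1} \mathrm{P}[u_k = n] = \mathrm{P}[u_k \geq M]$, with $u_k$ the sum distribution from Eqn.~\ref{eqn: UAS in service}; hence $g(\Theta_0) \in [0,1]$ identically, giving $F(0) = g(0) \geq 0$ and $F(1) = g(1) - 1 \leq 0$. To make the endpoints explicit I would substitute directly: at $\Theta_0 = 0$, Eqn.~\ref{eqn: rouche} collapses to $V_j(z) \equiv 1$ for all $j$, yielding $V_L(0) = 1$, $\sigma_Z = A(0) = 1-\lambda$, $\pi = \lambda$ (using $A_{\mathrm{O}}(0) = 1$ at the source), so $W_1(0) = 1-\lambda$, and Eqn.~\ref{eqn: forward} reduces to a binomial-tail $\mathrm{P}[X \geq M]$ for $X \sim \mathrm{Binomial}(S-1,\lambda)$ augmented by an exogenous-stream factor, plainly in $[0,1]$; at $\Theta_0 = 1$, the recursion gives $V_j(z) = A(z)^j$, and Eqn.~\ref{eqn: 0 feedback} forces $W_1(0) = 1$ via the factor $(1-\Theta_0)$, so Eqn.~\ref{eqn: forward} collapses (the middle sum vanishes through $(1-W_1(0))^n$) to a binomial tail in the exogenous arrivals, again in $[0,1]$.

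The main obstacle will be the careful bookkeeping needed to confirm that every coefficient of $g(\Theta_0)$ is a bona fide polynomial (with no leftover $1/z$ or pole-type artifacts) uniformly in $\Theta_0 \in [0,1]$, especially since $\pi$, $\sigma_Z$, and the recursion for $V_j(z)$ are defined through steady-state PGFs where existence must be argued in tandem with the root of $F$. Once the polynomial character of $g$ is rigorously in hand, continuity and the endpoint estimates are immediate and IVT closes the argument; an entirely analogous calculation at the endpoints, mutatis mutandis, yields the Stream$\{X\}$ case cited later in the text.
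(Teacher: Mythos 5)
Your proposal is correct and follows the same overall architecture as the paper: substitute the feedback relation into the forward relation to obtain a fixed-point equation $\Theta_0 = g(\Theta_0)$, establish that $g$ is a polynomial (hence continuous) mapping $[0,1]$ into $[0,1]$, and conclude existence of a root. The paper closes the argument with Brouwer's fixed-point theorem, which in one dimension is exactly your IVT argument applied to $g(\Theta_0)-\Theta_0$, so that step is equivalent. The one place where you genuinely diverge is in justifying that the recursion in Eqn.~\ref{eqn: rouche} produces honest polynomials: you observe by induction that $V_j(z)-V_j(0)$ is a polynomial in $z$ vanishing at $z=0$, hence divisible by $z$, so the apparent $0/0$ never materializes. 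The paper instead verifies the cancellation explicitly for $L=3$ and then invokes Rouché's theorem on $|z|=1$ to argue that the numerator and denominator share the same number of zeros in the unit disk, forcing pole-zero cancellation. Your divisibility argument is more elementary and, for these finite-support PGFs, entirely sufficient; the Rouché machinery the paper deploys is the standard tool for the general discrete-time queueing setting where the generating functions are not polynomials, but it is heavier than needed here. You also add value the paper leaves implicit: the identification $g(\Theta_0)=\mathrm{P}[u_k\geq M]\in[0,1]$ and the explicit endpoint evaluations at $\Theta_0=0$ and $\Theta_0=1$. The only caveat is that the range claim $g(\Theta_0)\in[0,1]$ tacitly requires $V_L(0)\in[0,1]$ (hence $W_1(0)\in[0,1]$) for \emph{every} $\Theta_0\in[0,1]$, which follows because the recursion preserves the PGF property (nonnegative coefficients summing to one); you flag this bookkeeping as the main obstacle, and it is indeed the one step that should be spelled out.
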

\begin{proof}
Consider a simplistic case where the queuing system has $L = 3$ timeslots queuing deadline. The PGF of the arrival process is given by $A(z) = (1-\lambda) + \lambda z$. Recursively, substituting $V_j(z)$ in $V_{j+1}(z)$, $j = 1,2,$ we get,
\begin{align}
    &V_1(z) = A(z)\Theta_0 + (1-\Theta_0) \ \ , \ \ V_2(z) = A(z) V_1(z)\Theta_0\ +  \left[ V_1(0) +\left(\cfrac{V_1(z) - V_1(0)}{z}\right) A(z) \right](1-\Theta_0) \label{eqn: V2 queue}
    \\
    &V_3(z) = A(z) V_2(z)\Theta_0\ +  \left[ V_2(0) +\left(\cfrac{V_2(z) - V_2(0)}{z}\right) A(z) \right](1-\Theta_0) \label{eqn: V3 queue} 
\end{align}
The computation of $V_3(z)$ requires $V_2(0)$, which on substituting $z = 0$ in Eqn. \ref{eqn: V2 queue} is found to be indeterminate ($0/0$) because of the $( V_1(z) - V_1(0) )/z$ term. However, when we substitute $A(z)$,  $V_1(z)$, $V_1(0)$ and expand the term,
\begin{align}
    V_1(z) - V_1(0) = \cancel{(1-\lambda)\Theta_0} + z\lambda\Theta_0 + \cancel{(1-\Theta_0)} - \cancel{(1-\lambda)\Theta_0} - \cancel{(1-\Theta_0)} = z\lambda\Theta_0
\end{align}
we observe that the numerator has a zero at $z = 0$. Similarly, in Eqn.  \ref{eqn: V3 queue}, the term 
$V_2(z) - V_2(0) = z(\ z\lambda^2\Theta_0^2 + 2\lambda(1-\lambda)\Theta_0^2 + 2\lambda\Theta_0(1-\Theta_0) \ )
$ has a zero at $z=0$. Irrespective of the number of times $V_j(z)$ is recursively substituted in $V_{j+1}(z)$, $j = 1,2,...,L-1$, there would be pole-zero cancellations in $( V_j(z) - V_j(0) )/z$. 

The pole-zero cancellation is a consequence of Rouché's theorem. Rouché's theorem states that if $f(z)$ and $g(z)$ are two analytic functions inside and on a simple closed curve $C$ such that $|f(z)| > |g(z)|$ for all $z \in C$, then both $f(z)$ and $f(z) + g(z)$ have the same number of zeros inside $C$. Consider the simple curve $|z| = 1$.
\begin{align}
    &V_j(z) \ =\  A(z)V_{j-1}(z)\Theta_0\ + \left[ V_{j-1}(0) + \left( \cfrac{V_{j-1}(z) - V_{j-1}(0)}{z}\right)A(z)\right](1-\Theta_0) \notag \\
    &=\ V_{j-1}(0)(1-\Theta_0)  \left[ 1 + \cfrac{ A(z) \mathlarger{\mathlarger{[}} (V_{j-1}(z) - V_{j-1}(0))(1-\Theta_0)  + z\Theta_0 V_{j-1}(z)\mathlarger{\mathlarger{]}} }{zV_{j-1}(0)(1-\Theta_0)}  \right] = \ V_{j-1}(0)(1-\Theta_0) \left[ 1 + \cfrac{g(z)}{f(z)}\right] 
\end{align}
\noindent When $|z| = 1$, we have $|f(z)| = V_{j-1}(0)(1-\Theta_0)$. If $z = 1$, then $\mathlarger{\vert}A(z)\mathlarger{\vert} = | 1-\lambda + \lambda z | = 1 ,\ \mathlarger{\vert}V_{j-1}(z)\mathlarger{\vert} = 1$.
\begin{align}
     \mathlarger{\vert}g(z)\mathlarger{\vert} = \mathlarger{\vert}A(z)\mathlarger{\vert} \mathlarger{\vert} (V_{j-1}(z) - V_{j-1}(0))(1-\Theta_0)  + z\Theta_0 V_{j-1}(z) \mathlarger{\vert} = 1 - V_{j-1}(0)(1-\Theta_0) < V_{j-1}(0)(1-\Theta_0)
\end{align}
If $|z| = 1$ but $z \neq 1$, then $|A(z)| < 1-\lambda + \lambda|z| < 1$, $|V_{j-1}(z)| < 1$.
\begin{align}
    \mathlarger{\vert}g(z)\mathlarger{\vert} &= \mathlarger{\vert}A(z)\mathlarger{\vert} \mathlarger{\vert} (V_{j-1}(z) - V_{j-1}(0))(1-\Theta_0)  + z\Theta_0 V_{j-1}(z) \mathlarger{\vert} \notag \\
    & < \mathlarger{\vert}A(z)\mathlarger{\vert} \left(\ \mathlarger{\vert} V_{j-1}(z) \mathlarger{\vert} (1-\Theta_0) + V_{j-1}(0) (1-\Theta_0) + \Theta_0 \mathlarger{\vert} V_{j-1}(z) \mathlarger{\vert}\ \right) \notag \\
    & < \mathlarger{\vert}A(z)\mathlarger{\vert}\mathlarger{\vert} V_{j-1}(z) \mathlarger{\vert} + \mathlarger{\vert}A(z)\mathlarger{\vert} V_{j-1}(0) (1-\Theta_0) \ <\ V_{j-1}(0) (1-\Theta_0)
\end{align}
Since $|f(z)|>|g(z)|$ for all $|z| = 1$, the numerator $f(z)+g(z)$ and denominator $f(z)$ have same number of zeros inside the region $|z| < 1$. Let $z_1, z_2$ be the zeros of $f(z)$ and $g(z)$, respectively. By the $j^{th}$ recursion, the term $(z_2/z_1)^{j}$ would be present in the PGF $V_{j+1}(z)$. For the PGF $V_{j+1}(z)$ to remain bounded, $|z_2|/|z_1|$ should be less than or equal one. Since $z_1 = 0$ and $|z_2| \leq |z_1|$,  the zero $z_2$ must be equal to $z_1$, leading to pole-zero cancellation. Thus, $V_{L}(0)$ can always be determined and is a polynomial in $\Theta_0$ of degree $L$.

By substituting $V_{L}(0)$ in Eqn. \ref{eqn: 0 feedback}, we get $W_1(0)$ as function of $\Theta_0$ of degree $(L+1)$. Further, substituting $W_1(0)$ in Eqn. \ref{eqn: forward} and assuming no exogenous UAS are present (that is, $E_0(0) = 1$), we get, 
\begin{align}
    \Theta_0 &= 1 - W_1(0)^{S-1} - \sum_{n = 1}^{M-1} \binom{S-1}{n} (1-W_1(0))^{n} W_1(0)^{S-n-1} = h(\Theta_0)
\end{align}
The solution $\Theta_0$ for the equation $h(\Theta_0) = \Theta_0$ is the fixed point of the polynomial function $h$. By Brouwer's fixed-point theorem, the continuous function $h: [0,1] \rightarrow [0,1]$ has at least one fixed-point $\Theta_0 \in [0,1]$.
\end{proof}

\printcredits

\bibliographystyle{cas-model2-names}
\bibliography{References.bib}

\end{document}